%% Last modified: 2 December 2016
%% Last spell checked: 2 december 2016 at 09:57:24 GMT 

\documentclass[3p]{elsarticle}
\usepackage[utf8]{inputenc}
\usepackage[british]{babel}
\usepackage{numcompress}
\usepackage{amssymb}
\usepackage{amsthm}
\usepackage{amsmath}
\usepackage{amsfonts}
\usepackage{mathpartir}
\usepackage{enumerate}
\usepackage{hyperref}
\usepackage{hyphenat}
\usepackage[shortcuts]{extdash}
\usepackage{color}
\usepackage{tikz}
\usetikzlibrary{matrix,arrows,positioning}

\hypersetup{%
  pagebackref,%
  pdftex,%
  final=true,%
  colorlinks=true,%
  linkcolor=black,%
  citecolor=black,%
  urlcolor=black}

% Bubbles and arrows in tikz
\tikzset{bubble/.style={rectangle,minimum size=6mm,rounded corners=3mm,
                        draw=black},
         plain/.style={rectangle,minimum size=6mm,rounded corners=3mm},
         bar/.style={ultra thick,draw=gray!30},
         trans/.style={ultra thick,draw=gray!30,->},
         flow/.style={dashed,->}}

%%%%%%%%%%%%%%%%%%%%%%%%%%%%%%%%%%%%%%%%%%%%%%%%%%%%%%%%%%%%%%%%%%%%%%%%%%%
%% Format
\newcommand{\ie}{i.e.}
\newcommand{\eg}{e.g.}
\newcommand{\ow}{otherwise}

%%%%%%%%%%%%%%%%%%%%%%%%%%%%%%%%%%%%%%%%%%%%%%%%%%%%%%%%%%%%%%%%%%%%%%%%%%%
%% Sets

\newcommand{\oTerm}{\mathbb{T}(\Sigma)}
\newcommand{\cTerm}{\mathbb{C}(\Sigma)}
\newcommand{\pset}{\mathcal{P}}
\newcommand{\Trans}{\longrightarrow}
\newcommand{\Label}{L}
\newcommand{\oSource}[2][\id]{\mathbb{S}_{#2}^{#1}(\Sigma)}
\newcommand{\oTarget}[2][\id]{\mathbb{R}_{#2}^{#1}(\Sigma)}
\newcommand{\cSource}[2][\id]{\mathbb{O}_{#2}^{#1}(\Sigma)}
\newcommand{\cTarget}[2][\id]{\mathbb{D}_{#2}^{#1}(\Sigma)}

\newcommand{\NN}{\mathbb{N}}

%%%%%%%%%%%%%%%%%%%%%%%%%%%%%%%%%%%%%%%%%%%%%%%%%%%%%%%%%%%%%%%%%%%%%%%%%%%
%% Relations
\newcommand{\rel}[1]{\buildrel #1\over\longrightarrow}
\newcommand{\nrel}[1]{\,\,\,\not\!\!\!\rel{#1}}

\newcommand{\Rel}{\,\mathcal{R}\,}
\newcommand{\defeq}{\buildrel\rm def\over=}

%%%%%%%%%%%%%%%%%%%%%%%%%%%%%%%%%%%%%%%%%%%%%%%%%%%%%%%%%%%%%%%%%%%%%%%%%%%
%% Terms

\newcommand{\prj}{\mathit{prj}}

\newcommand{\id}{\mathit{id}}

%%%%%%%%%%%%%%%%%%%%%%%%%%%%%%%%%%%%%%%%%%%%%%%%%%%%%%%%%%%%%%%%%%%%%%%%%%%
%% Contexts

%%%%%%%%%%%%%%%%%%%%%%%%%%%%%%%%%%%%%%%%%%%%%%%%%%%%%%%%%%%%%%%%%%%%%%%%%%%
%% Functions
\newcommand{\var}{\mathrm{var}}

\newcommand{\DT}{\mathcal{D}}
\newcommand{\tr}{\mathit{Tr}}

%%%%%%%%%%%%%%%%%%%%%%%%%%%%%%%%%%%%%%%%%%%%%%%%%%%%%%%%%%%%%%%%%%%%%%%%%%%
%% Tau-alpha-types

%%%%%%%%%%%%%%%%%%%%%%%%%%%%%%%%%%%%%%%%%%%%%%%%%%%%%%%%%%%%%%%%%%%%%%%%%%%
%% Theorems

\theoremstyle{plain}% default
\newtheorem{thm}{Theorem}
\newtheorem{lem}{Lemma}
\newtheorem{prop}{Proposition}
\newtheorem{cor}{Corollary}

\theoremstyle{definition}
\newtheorem{defn}{Definition}
\newtheorem{exmp}{Example}
\newtheorem{obs}{Observation}
\newtheorem{ntn}[defn]{Notation}
\newtheorem*{note}{Note to the reader}

\theoremstyle{remark}
\newtheorem{rem}{Remark}

\newproof{pf}{Proof}

% \theoremstyle{plain}
% \theoremseparator{.}
% \newtheorem{notation}[definition]{Notation}

% \theoremstyle{plain}
% \theoremseparator{.}
% \newtheorem{observation}{Observation}

% %\newenvironment{notation}[1][Notation]{\begin{trivlist}
% %  \item[\hskip \labelsep {\bfseries
% %      #1~\stepcounter{definition}\arabic{definition}.}]}{\end{trivlist}}
% %\newenvironment{observation}[1][Observation]{\begin{trivlist}
% %  \item[\hskip \labelsep {\bfseries
% %      #1~\stepcounter{remark}\arabic{remark}.}]}{\end{trivlist}}

%%%%%%%%%%%%%%%%%%%%%%%%%%%%%%%%%%%%%%%%%%%%%%%%%%%%%%%%%%%%%%%%%%%%%%%%%%%
%% Fixmes
% counters
\newcounter{ai}
\newcounter{li}
\newcounter{ani}
\newcounter{ii}

% generic marginpar fixmes

% specific marginpar fixmes
 % Alvaro's fixmes.
 % Luca's fixmes.
 % Anna's fixmes.
 % Ignacios's fixmes.

% generic in-text longer fixmes

% specific in-text longer fixmes

% ref commands for fixmes
 % ref for Alvaro's fixmes.
 % ref for Luca's fixmes.
 % ref for Anna's fixmes.
 % ref for Ignacio's fixmes.

%%%%%%%%%%%%%%%%%%%%%%%%%%%%%%%%%%%%%%%%%%%%%%%%%%%%%%%%%%%%%%%%%%%%%%%%%%%
%%% Local Variables:
%%% fill-column: 78
%%% require-final-newline: t
%%% mode-require-final-newline: t
%%% next-line-add-newlines: nil
%%% show-trailing-whitespaces: t
%%% indent-tabs-mode: nil
%%% ispell-dictionary: "british"
%%% mode: latex
%%% TeX-PDF-mode: t
%%% TeX-master: "rfbndnosos"
%%% End:

%%%%%%%%%%%%%%%%%%%%%%%%%%%%%%%%%%%%%%%%%%%%%%%%%%%%%%%%%%%%%%%%%%%%%%%%%%%
%% Document
\begin{document}

\begin{frontmatter}
\title{A unified rule format for bounded nondeterminism\\in SOS with terms as
  labels\tnoteref{t1}}
\tnotetext[t1]{This research has been supported by the project `Nominal
  Structural Operational Semantics' (nr.~141558-051) of the Icelandic Research
  Fund, the projects StrongSoft (TIN2012-39391-C04–04) and TRACES
  (TIN2015-67522-C3-3-R) of the Spanish Ministerio de Econom\'{\i}a y
  Competitividad, the project 001-ABEL-CM-2013 within the NILS Science and
  Sustainability Programme, and the grant GR3/14 of the Universidad
  Complutense de Madrid - Banco Santander.}

\author[ru]{L. Aceto\corref{cor1}}
\ead{luca@ru.is}

\author[ru]{I. Fábregas}
\ead{fabregas@ucm.es}

\author[ru]{A. García-Pérez}
\ead{alvarog@ru.is}

\author[ru]{A. Ingólfsdóttir}
\ead{annai@ru.is}

\address[ru]{ICE-TCS, School of Computer Science, Reykjavik University,\\
  Menntavegur 1, IS-101, Reykjavik, Iceland.}

\cortext[cor1]{Corresponding author}

%\titlerunning{Bounded nondeterminism in nominal SOS}

%\authorrunning{Luca Aceto et al.}

\begin{abstract}
  We present a unified rule format for structural operational semantics with
  terms as labels that guarantees that the associated labelled transition
  system has some bounded-nondeterminism property. The properties we consider
  include finite branching, initials finiteness and image finiteness.

  % We introduce a unified framework for rule formats that ensure bounded
  % nondeterminism in structural operational semantics. Our framework allows to
  % have transitions that are labelled by arbitrary terms, which, together with
  % a cofiniteness result for the freshness assertions of the nominal setting,
  % enables to apply the rule formats to nominal structural operational
  % semantics.
\end{abstract}

\begin{keyword}
  rule formats \sep bounded nondeterminism \sep structural operational
  semantics \sep labelled transition systems
\end{keyword}

\end{frontmatter}

%%%%%%%%%%%%%%%%%%%%%%%%%%%%%%%%%%%%%%%%%%%%%%%%%%%%%%%%%%%%%%%%%%%%%%%%%%%
%% Intro
\section{Introduction}
\label{sec:introduction}

Structural operational semantics (SOS) \cite{Plo81,Plo04} is a widely used
formalism for defining the formal semantics of computer programs and for
proving properties of the corresponding programming languages. In the SOS
formalism a transition system specification (TSS) \cite{GV92}, which consists
of a signature together with a set of inference rules, specifies a labelled
transition system (LTS) \cite{Kel76} whose states (\ie, processes) are closed
terms over the signature and whose transitions are those that can be proved
using the inference rules.

Rule formats \cite{AFV01,MRG07} are (syntactically checkable) restrictions on
the inference rules of a TSS that guarantee some useful property of the
associated LTS. In this paper we focus on the finiteness of the number of
outgoing transition from a given state, which is referred to as bounded
nondeterminism in \cite{Gla87}. Broadly, bounded nondeterminism is taken as a
synonym of finite branching \cite{FV03}. Finite branching breaks down into the
more elementary properties of initials finiteness and image finiteness
\cite{Abr87,HM85}. Intuitively, initials finiteness focuses on the finiteness
of the number of initial actions a process can perform, whereas image
finiteness focuses on the finiteness of the number of states a process can
reach by performing a given action.

The wealth of research on various kinds of bounded nondeterminism properties
has been motivated by results witnessing desirable features enjoyed by the
LTSs that afford them. For example, bisimulation-based preorders over finitely
branching LTSs with divergence are ``finitary'', and this property makes it
possible to use them as behavioural yardsticks to prove full abstraction
results for denotational semantics for process calculi---see, for
instance,~\cite{Abramsky91,AcetoI97}. This state of affairs is in stark
contrast with the general argument presented in~\cite{AP86} to the effect that
no continuous least fixed-point semantics, satisfying a certain full
abstraction property, can exist in the presence of countable
nondeterminism. Another classic example of the use of a bounded nondeterminism
property from the literature on concurrency theory and modal logic is provided
by the Hennessy-Milner Theorem, to the effect that two processes in an
image-finite LTS are bisimilar exactly when they satisfy the same formulae in
Hennessy-Milner logic~\cite{MR1837791,HM85}. Languages whose semantics is
given by LTSs satisfying some bounded nondeterminism property also tend to
have lower ``complexity'' than those exhibiting unbounded nondeterminism. For
instance, Chandra has shown in~\cite{Chandra78} that the halting problem for
programs with unbounded nondeterminism is complete for $\Pi_1^1$ and is
therefore of higher complexity than truth in the standard model of the natural
numbers. Indeed, in the presence of infinite nondeterminism, quoting
from~\cite[page~131]{Chandra78}:
\begin{quote}
total correctness (or even absence of divergence) is not expressible in
first-order arithmetic. This may be contrasted with the results
in~\cite{HarelMP77} where the model does not allow infinite nondeterminism.
\end{quote}
In the light of the above-mentioned work, it should come as no surprise that
the study of rule formats for bounded nondeterminism is far from being
new. Finite branching of the associated LTS is one of the sanity properties
guaranteed by the GSOS format of \cite{BIM95}. Vaandrager \cite{Vaa93}
introduced a rule format for SOS, based on the de~Simone format \cite{deS85},
that guarantees that the associated LTS is finite branching. Following
Vaandrager, Bloom \cite{Blo94} introduced a rule format for his CHOCOLATE
formalism that also guarantees a finite-branching LTS. Finally, Fokkink and Vu
\cite{FV03} introduced yet another, less restrictive rule format for SOS, which
relies on an adaptation of the notion of strict stratification from
\cite{Gro93}, and showed that a TSS in this format induces an LTS that is
finite branching. Recently, we adapted the work in \cite{FV03} and presented
rule formats also for initials finiteness and image finiteness \cite{AcetoGI16}.

Our work in this paper takes this programme further by advancing on three fronts:
\begin{enumerate}[(i)]
\item We extend the applicability of the rule formats for bounded nondeterminism to the case of
  higher-order processes by tackling one of the most prominent challenges in
  that setting, namely that of allowing arbitrary terms as labels in
  transitions \cite{Ber98,MGR05}.
\item We consider a family of bounded-nondeterminism properties that are more
  elementary than finite branching, and which include image finiteness and
  initials finiteness \cite{Abr87}. We introduce a unified rule format that
  can be instantiated to each of twelve bounded-nondeterminism conditions.
\item For each rule in a TSS, our rule format uses \emph{global} information
  in order to determine whether the rule is a so-called junk rule (\ie, a rule
  that is never involved in proving some transition). Our rule format filters out more junk
  rules than the one in \cite{FV03}.
\end{enumerate}
Previous works on rule formats for bounded nondeterminism, such as that in
\cite{FV03}, require the labels in transitions to be ground. However, SOS for
higher-order processes \cite{Ber98,MGR05} needs to consider arbitrary terms as
labels and the use of variables in labels must be suitably restricted to
guarantee the finiteness property of interest. Indeed, the unconstrained
occurrence of variables in a label could break various bounded nondeterminism
properties because a variable stands for an arbitrary process of the LTS. As
an example, an axiom $f(x)\rel{y}f(x)$, where $x$ and $y$ are variables, could
be used to prove any transition of the form $f(p)\rel{q}f(p)$, where $p$ and
$q$ are arbitrary closed terms. Therefore any term of the form $f(p)$ would be
neither finite branching nor initials finite. On the other hand, the use of
$y$ in that axiom does not jeopardize image finiteness.

In order to deal with (i) and (ii) above in a uniform way for all the
twelve bounded nondeterminism properties studied in the paper, we consider 
a family of representational transformations for
transitions that, instead of \emph{triadic} transition formulae $t\rel{l}t'$,
consider \emph{dyadic} transition formulae $s\rel{} r$ in which the label $l$
occurs either as a part of the source $s$ or of the target $r$. Building on \cite{FV03},
the dyadic transformations make it possible to define a general rule format (see
Theorem~\ref{thm:dk-finiteness}), which can be instantiated with the
particular dyadic transformation that corresponds to a bounded nondeterminism property of our choice.

The dyadic transformations unveil a space of bounded-nondeterminism properties
that has a rich structure (see Section~\ref{sec:other-properties}). The space
contains some elementary properties that are minimal in the ``information''
order. We explore the equivalences between the conjunction of certain pairs of
properties and other properties that are derived, bringing forth a principled
approach to bounded nondeterminism in all its possible variants. 

In order to deal with the detection of junk rules, point (iii) above, we refine the $\eta$-types of \cite{FV03},
which are at the basis of the rule formats for bounded nondeterminism from
\cite{AFV01,FV03}.
%If a rule gives rise to transitions from a given process,
%the rule would have a valid $\eta$-type. 
Junk rules are those that can never take part in a proof tree. The set of TSSs for which a rule
format is applicable can be safely enlarged by relaxing the restrictions of
the rule format on junk rules, since these rules cannot give rise to
transitions. We introduce the $S$-types (see Definition~\ref{def:S-types}),
which use global information (\ie, information from every rule in a TSS) in
order to detect more junk rules than the $\eta$-types of \cite{FV03}, which
only use local information. A key ingredient of the $S$-types is a notion of
\emph{partial} strict stratification (see Definition~\ref{def:partial-strict-stratification}) 
that allows some processes that unify with sources of
premises in a rule to have undefined stratification order.

% The $\eta$-types determine whether a rule may give rise to
% transitions or not by using information that is local to the rule. The
% so-called junk rules are those that can never take part in a proof tree. The
% set of TSSs for which a rule format is applicable can be safely enlarged by
% relaxing the restrictions of the rule format on junk rules, since these rules
% cannot give rise to transitions. We introduce the $S$-types (see
% Definition~\ref{def:S-types}), which use global information (\ie, information
% from every rule in the TSS) in order to improve the detection of junk rules. A
% key ingredient of the $S$-types is a notion of partial strict stratification
% (see Definition~\ref{def:partial-strict stratification}) that allows some
% processes that unify with sources of premisses in a rule to have undefined
% order.

The rest of the
paper is organized as follows. Section~\ref{sec:preliminaries} collects
standard background on the SOS formalism and unifies
notation. Section~\ref{sec:dyadic-transformation} introduces the dyadic
transformations, which provide a uniform framework to deal with TSSs with
terms and labels and their bounded nondeterminism
properties. Section~\ref{sec:other-properties} studies the structure of the
space of bounded-nondeterminism properties unveiled by the dyadic
transformations. Section~\ref{sec:junk-rules-eta-types} explains how the
set of finitely branching TSSs that meet the conditions of the rule format
from \cite{FV03} can be enlarged by considering the so-called junk
rules. Section~\ref{sec:refining-eta-types} shows how to refine the
$\eta$-types of \cite{FV03} in order to filter out more junk
rules. Section~\ref{sec:s-types} introduces the $S$-types, which consider the
previously mentioned refinements on the $\eta$-types and adapt them to dyadic
TSSs. Section~\ref{sec:rule-formats} presents rule formats that are based on
the $S$-types guaranteeing finite branching of the LTS associated to a triadic
TSS with terms as labels. (A variety of bounded nondeterminism properties can
be guaranteed by picking different dyadic transformations.)
Section~\ref{sec:application} shows examples of the applicability of our rule
format. In particular, we consider a TSS with terms as labels that implements
a subset of CHOCS \cite{MGR05,Tho89} and use our format to show various bounded 
nondeterminism properties of its transition relations. Section~\ref{sec:future-work} discusses
avenues for future work and concludes. Throughout the paper, we use a variety 
of simple (but admittedly artificial) examples to motivate our definitions and main 
observations.

% Section~\ref{sec:junk-rules-eta-types} explains how the detection of junk
% rules is paramount in order to extend a rule
% format. Section~\ref{sec:refining-eta-types} provides motivational examples to
% our refinements on the $\eta$-types, which increase the number of TSSs that
% are covered by the rule format in \cite{FV03}. Section~\ref{sec:s-types}
% introduces the $S$-types, whose key ingredients are a notion of partial strict
% stratification (Definition~\ref{def:partial-strict stratification}) and a
% restricted support map (Definition~\ref{def:restricted-support}). The partial
% strict stratification allows some processes that unify with the premisses of
% rules to have undefined order. The restricted support map considers global
% information in order to determine if the premisses of a rule can unify with
% other rules in the TSS. Section~\ref{sec:rule-formats} presents our rule
% format for bounded nondeterminism, which is based on the $S$-types. The rule
% format, which is parametric on a dyadic transformation, can be instantiated to
% any bounded-nondeterminism property by picking the dyadic transformation that
% corresponds to the property of our choice.

%%%%%%%%%%%%%%%%%%%%%%%%%%%%%%%%%%%%%%%%%%%%%%%%%%%%%%%%%%%%%%%%%%%%%%%%%%%
%% Preliminaries
\section{Preliminaries}
\label{sec:preliminaries}

In this section we review and adapt standard background on the structural
operational semantics formalism (SOS for short) and unify notation. We follow
the presentation in \cite{FV03}.

For a set $S$, we write $\pset(S)$ for the collection of all the subsets of
$S$, and $\pset_\omega(S)$ for the collection of all the finite subsets of
$S$.

\begin{defn}[Signature and Term]
  \label{def:terms}
  We assume a countably infinite set of variables $V$, ranged over by $x$,
  $y$, $z$. A signature $\Sigma$ is a set of function symbols, disjoint from
  $V$, together with an arity map that assigns a natural number to each
  function symbol. We use $f$ and $g$ to range over the set of functions
  symbols in $\Sigma$. Function symbols of arity zero, which are ranged over
  by $c$ and $d$, are called constants. Function symbols of arity one and two
  are called unary and binary functions respectively.

  The set $\oTerm$ of (open) terms over a signature $\Sigma$, ranged over by
  $t$, $u$, $v$ is the least set such that:
  \begin{enumerate}[(i)]
  \item each variable is a term, and
  \item if $f$ is a function symbol of arity $n$ and $t_1, \ldots, t_n$ are
    terms, then $f(t_1,\ldots,t_n)$ is a term.
  \end{enumerate}

  The function $\var : \oTerm\to \pset_\omega(V)$ delivers, for a term $t$,
  the set of variables that occur in $t$. A term $t$ is closed iff
  $\var(t)=\emptyset$. The set of closed terms over $\Sigma$, ranged over by
  $p$, $q$, is denoted by $\cTerm$.
\end{defn}

Intuitively, a signature $\Sigma$ collects the term constructors offered by a
programming or specification language. In this paper we consider terms also as
transition labels, where symbols $l$, $m$, $n$ range over label terms.
\begin{defn}[Formula]
  The set of positive formulae over signature $\Sigma$ is the set of triples
  $(t,l,t')\in\oTerm\times \oTerm\times\oTerm$. We use the more suggestive
  notation $t\rel{l}t'$ in lieu of $(t,l,t')$. The set of negative formulae
  over signature $\Sigma$ is the set of pairs $(t,l)\in\oTerm\times
  \oTerm$. We use the more suggestive notation $t\nrel{l}$ in lieu of $(t,l)$.
\end{defn}

\begin{defn}[Substitution]
  A substitution is a partial map $\sigma:V\to\oTerm$. The substitutions are
  ranged over by $\sigma$, $\tau$. A substitution is closed if it maps
  variables to closed terms. A substitution extends to a map from terms to
  terms in the usual way, \ie, the term $\sigma(t)$ is obtained by replacing
  the occurrences in $t$ of each variable $x$ in the domain of $\sigma$ by
  $\sigma(x)$. When applying substitutions $\sigma$ and $\tau$ successively,
  we may abbreviate $\tau(\sigma(t))$ to $\tau\sigma(t)$. We say term $u$ is a
  substitution instance of $t$ iff there exists a substitution $\sigma$ such
  that $\sigma(t)=u$.

  In what follows, we will sometimes use the notation $\{x_i \mapsto t_i \mid
  i\in I\}$, where $I$ is an index set and the $x_i$'s are pairwise distinct
  variables, to denote the substitution that maps each $x_i$ to the term $t_i$,
  with $i\in I$.
\end{defn}
A substitution $\sigma$ extends to formulae $t\rel{l}t'$ and $u\nrel{l}$ in
the usual way, by applying the substitution to each term component of the
formula, \ie, ${\sigma(t)\rel{\sigma(l)}\sigma(t')}$ and
$\sigma(u)\nrel{\sigma(l)}$ respectively. The notion of substitution instance
extends similarly.

Labelled transition systems~\cite{Kel76} are a fundamental model of
computation and are often used to describe the operational semantics of
programming and specification languages (see, for instance,
\cite{Mil89,Plo83,Plo04,SW01}). We will consider classic labelled transition
systems~\cite{Kel76} where transitions are labelled with processes instead of
actions.
\begin{defn}[Labelled transition system with processes as labels]
  \label{def:lts}
  Let $\Sigma$ be a signature. A labelled transition system with processes as
  labels (LTS for short) is a pair $(\cTerm, \to)$ where $\cTerm$ is the set
  of processes, \ie, closed terms over $\Sigma$, and $\Trans \subseteq
  \cTerm\times \cTerm\times\cTerm$ is the set of transitions, \ie, closed
  positive formulae. We say that $p\rel{l}p'$ is a transition of the LTS iff
  $(p,l,p')\in\Trans$.
\end{defn}

Transition system specifications are sets of inference rules that can be used
to prove the valid transitions between terms in a language.
\begin{defn}[Transition system specification with terms as labels]
  Let $\Sigma$ be a signature. A transition rule (a
  rule, for short) $\rho$ is of the form
  \begin{mathpar}
    \inferrule*[]
    {H}
    {t\rel{l}t'}
  \end{mathpar}
  (abbreviated as $H/t\rel{l}t'$) where $H$ is a set of positive premises of
  the form $u\rel{m}u'$ and negative premises of the form $v\nrel{n}$, and
  $t\rel{l}t'$ is the conclusion of the rule (with
  $t,t',u,u',v,l,m,n\in\oTerm$).  We say $t$ is the source, $l$ is the label,
  and $t'$ is the target of $\rho$. We say $\rho$ is an axiom iff $\rho$ has
  an empty set of premises, \ie, $H=\emptyset$.

  A transition system specification with terms as labels (TSS for short) is a
  set of transition rules.
\end{defn}
A substitution map extends to a rule $\rho$ by applying the substitution to
the formulae in $\rho$. The notion of substitution instance extends similarly
to rules.

\begin{defn}[Unify with a rule]
  Let $R$ be a TSS. We say that transition $p\rel{l}p'$ unifies with rule
  $\rho\in R$ iff $\rho$ has conclusion $t\rel{m}t'$ and $p\rel{l}p'$ is a
  substitution instance of $t\rel{m}t'$.
\end{defn}

\begin{defn}[Proof tree]
  \label{def:proof-tree}
  Let $R$ be a TSS without negative premises. A proof tree in $R$ of a
  transition $p\rel{l}p'$ is an upwardly branching tree without paths of
  infinite length whose nodes are labelled by transitions such that
  \begin{enumerate}[(i)]
  \item the root is labelled by $p\rel{l}p'$, and
  \item if $K$ is the set of labels of the nodes directly above a node with
    label $q\rel{n}q'$, then $K/q\rel{n}q'$ is a substitution instance of some
    rule ${H/t\rel{m}t'}\in R$.
  \end{enumerate}
  We say that $p\rel{l}p'$ is provable in $R$ iff $p\rel{l}p'$ has a proof
  tree in $R$.
\end{defn}
The set of provable transitions in $R$ is the least set of transitions that
satisfies the rules in $R$. Notice that if $p\rel{l}p'$ unifies with an axiom
(\ie, a rule of the form $\emptyset/t\rel{m}t'$) then, trivially, $p\rel{l}p'$
has a proof tree in $R$ which consists of a root node labelled by
$p\rel{l}p'$.

A TSS with terms as labels and without negative premises induces an LTS with
processes as labels in a straightforward way.
\begin{defn}[TSS induces LTS]
  \label{def:tss-induced-lts}
  Let $R$ be a TSS with terms as labels and without negative premises, and let $T$
  be an LTS with processes as labels. We say that $R$ induces $T$ (or $T$ is associated with
  $R$) iff the set of transitions of $T$ is the set of provable transitions in
  $R$.

  The phrases
  \begin{itemize}
  \item $p\rel{l}p'$ is provable in $R$,
  \item $p\rel{l}p'$ is a transition of $T$, and
  \item $p$ can perform an $l$-transition to $p'$ in $T$
  \end{itemize}
  are synonyms. For brevity, we may omit the $R$ and/or the $T$ when they are
  clear from the context.
\end{defn}

In \cite{Prz90}, Przymusinski introduced \emph{three-valued stable models},
which can be used to associate an LTS to a TSS with negative premises. Each
TSS has a least three-valued stable model, which coincides with the
well-founded semantics from \cite{GKRS91}. We consider the set of
\emph{sentences that are certainly true} in the least three-valued stable
model, which, for a TSS without negative premises, coincides with the set of
provable transitions in Definition~\ref{def:tss-induced-lts}. As Fokkink and
Vu noticed in \cite{FV03}, if $R$ is a TSS and $R'$ is obtained by removing
all the negative premises from the rules in $R$, then the LTS associated with
$R$ is included in the LTS associated with $R'$. In particular, if the LTS
associated with $R'$ has any of the bounded-nondeterminism properties
considered in this paper, then the LTS associated with $R$ has the property
too. We follow \cite{FV03} and ignore the negative premises in the TSSs. (Note
that ignoring negative premises makes the rule formats proposed in the paper
more restrictive, as negative premises may render some rules junk. However, as
in~\cite{FV03}, none of the rule formats that we introduce here impose any
restrictions on negative premises and the results we provide extend to a
setting with negative premises at the price of working with three-valued
stable models.) From now on we use `TSS with terms as labels' as a synonym of
`TSS with terms as labels and without negative premises'.

\begin{ntn}
  \label{not:sigma-zero}
  Many of the examples in the paper consider the following signature
  $\Sigma_0$ of reference. Let $L$ consist of countably many ground labels
  $l_i$ with $i\in\NN$ (\ie, the $l_i$ are constants). The signature
  $\Sigma_0$ consists of unary function symbols $f$ and $g$, and the constant
  symbols in $L$.
\end{ntn}

%%%%%%%%%%%%%%%%%%%%%%%%%%%%%%%%%%%%%%%%%%%%%%%%%%%%%%%%%%%%%%%%%%%%%%%%%%%
%% Dyadic transformation
\section{Dyadic transformations}
\label{sec:dyadic-transformation}
As mentioned in the introduction, our goal in this paper is to present a
unified rule format guaranteeing a variety of bounded nondeterminism
properties for TSSs with terms as labels. In order to do so, we introduce a
collection of dyadic transformations over TSSs that play a key role in our
approach.

The dyadic transformations stem from what we call the branching flow in
transitions (the notion of branching flow will become clear below and in
Definition~\ref{def:properties} in
Section~\ref{sec:other-properties}). Differently from the usual flow that
stresses the states in a sequence of transitions, the branching flow
emphasizes the one-to-many relation between the components of formulae in the
set of transitions of an LTS.

We introduce a family of transformations that turn the \emph{triadic} formula
representation $t\rel{l}t'$ into a \emph{dyadic} representation
$s\Rel r$ where $s$ is the \emph{source}, $r$ is the \emph{target} and $\Rel$
is a binary relation symbol in the set
$\{\rel{},\longleftarrow,\uparrow,\downarrow\}$. The relation symbol $\Rel$
does not carry any label, and the arbitrary term $l$ labelling the triadic
formula $t\rel{l}t'$ is part either of the source $s$ or the target $r$ in the
dyadic formula $s \Rel r$. Definition~\ref{def:dyadic-transformations} below
presents six dyadic transformations. Our dyadic representation is reminiscent 
of the \emph{commitments} of \cite{Mil93}, the \emph{residual data types} 
in Definition 13.9 of \cite{Ben10} and the \emph{residuals} in Definition 1 of \cite{PBEGW15}. 
In those three works, a transition consisting of an agent, an action and a derivative 
agent is represented as an agent and a commitment (resp. residual), which contains both 
the action and the derivative agent.

\begin{defn}[Dyadic transformation]
  \label{def:dyadic-transformations}
  Let $t\rel{l}t'$ be a triadic formula of a TSS with terms as labels. The
  dyadic transformations $\DT_k$ (with $1\leq k \leq 6$) are given by:
  \begin{center}
    \begin{minipage}{.5\linewidth}
      \begin{itemize}
      \item $\DT_1(t\rel{l}t')=t\rel{}(l,t')$.
      \item $\DT_2(t\rel{l}t')=t'\longleftarrow(l,t)$.
      \item $\DT_3(t\rel{l}t')=l\uparrow(t,t')$.
      \end{itemize}
    \end{minipage}
    \begin{minipage}{.48\linewidth}
      \begin{itemize}
      \item $\DT_4(t\rel{l}t')=(t,l)\rel{} t'$.
      \item $\DT_5(t\rel{l}t')=(t',l)\longleftarrow t$.
      \item $\DT_6(t\rel{l}t')=(t,t')\downarrow l$.
      \end{itemize}
    \end{minipage}
  \end{center}
  A transformation $\DT_k$ (with $1\leq k\leq 6$) is extended to a triadic TSS
  with terms as labels in the trivial way. Each rule
  \begin{mathpar}
    \inferrule*[right={\normalsize\qquad\qquad{\normalfont is transformed into}}]
    {\{u_i\rel{n_i}u'_i~|~i\in I\}}
    {t\rel{m}t'}
    \and
    % {\color{white}
    % \inferrule*
    % {{\color{black}\normalsize\text{is transformed into}}}{ }}
    % \and
    \inferrule*[right={\normalsize .}]
    {\{\DT_k(u_i\rel{n_i}u'_i)~|~i\in I\}}
    {\DT_k(t\rel{m}t')}
  \end{mathpar}
\end{defn}

The sets $\oSource[]{k}$ and $\oTarget[]{k}$ contain respectively the sources
and the targets over a signature $\Sigma$ in a $\DT_k$-transformed TSS. For
closed formulae $o\Rel{}d$ we use the terminology \emph{origin} for the $o$
and \emph{destination} for the $d$. The sets $\cSource[]{k}$ and
$\cTarget[]{k}$ contain respectively the origins (\ie, closed sources) and the
destinations (\ie, closed targets) over a signature $\Sigma$ in a
$\DT_k$-transformed TSS.
\begin{exmp}\label{Ex:source-origin-ex}
Applying the transformation $\DT_2$ to the rule 
  \begin{mathpar}
    \inferrule*[]
    {x\rel{y}x'}
    {f(x)\rel{f(y)}f(x')}
\end{mathpar}
yields the rule
  \begin{mathpar}
    \inferrule*[right={\normalsize .}]
    {x'\longleftarrow(y,x)}
    {f(x')\longleftarrow(f(y),f(x))} 
\end{mathpar}
The term $f(x')$ is the source of the transition formula
$f(x')\longleftarrow(f(y),f(x))$ and $(f(y),f(x))$ is its target. Therefore,
for all closed terms $p$ and $q$, the closed term $f(p)$ is an origin and
$(f(p),f(q))$ is a destination.
\end{exmp}
The dyadic transformations are duly extended to proof trees and LTSs. Notice
that the transformations $\DT_k$, with $1\leq k\leq 6$, are only
representational. Both a dyadic formula and its triadic analogue have type
$\oSource[]{k}\times\oTarget[]{k}=\oTerm\times\oTerm\times\oTerm$, \ie, both
formulae are structurally identical, and a triadic TSS and its dyadic counterpart
prove exactly the same set of transitions, in the sense of the following lemma.
\begin{lem}
  Let $R$ be a triadic TSS with terms as labels and let $R'$ be the dyadic TSS
  obtained by applying to $R$ some dyadic transformation $\DT_k$ with $1\leq
  k\leq 6$, \ie, $R'=\DT_k(R)$. A transition $p\rel{l}p'$ is provable in $R$
  iff $\DT_k(p\rel{l}p')$ is provable in $R'$.
\end{lem}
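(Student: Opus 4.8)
The plan is to exploit the remark, made just before the statement, that each $\DT_k$ is purely representational: it merely rearranges the three term components of a formula and tags the result with a fixed relation symbol, leaving unchanged their type $\oTerm\times\oTerm\times\oTerm$. Consequently $\DT_k$ is a bijection on the set of (open or closed) formulae, and I would begin by recording this together with a description of its inverse $\DT_k^{-1}$, which reassembles a dyadic formula $s\Rel r$ back into the triadic $t\rel{l}t'$.

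The first substantive step is to show that $\DT_k$ commutes with substitution, that is, $\DT_k(\sigma(\phi))=\sigma(\DT_k(\phi))$ for every formula $\phi$ and substitution $\sigma$. This is immediate because a substitution acts componentwise on the three terms of $\phi$, whereas $\DT_k$ only permutes those components and inserts the fixed symbol $\Rel$; the two operations therefore do not interfere. From this commutation I would deduce the key auxiliary fact: for any set of (closed) formulae $K$, closed terms $q,q'$, and rule $H/t\rel{m}t'\in R$, the instance $K/q\rel{n}q'$ is a substitution instance of $H/t\rel{m}t'$ exactly when $\DT_k(K)/\DT_k(q\rel{n}q')$ is a substitution instance of $\DT_k(H)/\DT_k(t\rel{m}t')$, where the latter rule is precisely the element of $R'=\DT_k(R)$ obtained from $H/t\rel{m}t'$. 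Indeed, a witnessing substitution $\sigma$ for one side is, by the commutation property, a witnessing substitution for the other.

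With these facts in hand, the proof becomes a bijection between proof trees. Given a proof tree $\pi$ of $p\rel{l}p'$ in $R$ (Definition~\ref{def:proof-tree}), I would apply $\DT_k$ to the label of every node, producing a tree $\DT_k(\pi)$ of the same shape whose root carries $\DT_k(p\rel{l}p')$. The local correctness condition (ii) at each node transfers verbatim via the auxiliary fact, so $\DT_k(\pi)$ is a proof tree in $R'$; the converse direction applies $\DT_k^{-1}$ in the same node-wise manner. Since $\DT_k$ acts only on node labels, the tree's branching structure and the absence of infinite paths are trivially preserved, so the correspondence is well defined in both directions, and this yields the claimed equivalence.

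I do not anticipate a genuine obstacle: the statement holds essentially by construction, and the only point requiring care is the uniform verification---across all six cases $1\le k\le 6$---that substitution commutes with $\DT_k$ and that the relation-symbol tag plays no role in unification. Once that bookkeeping is discharged, the proof-tree bijection is a routine node-wise relabelling argument.
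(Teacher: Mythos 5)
Your proposal is correct. The paper in fact offers no proof of this lemma at all---it is stated as an immediate consequence of the preceding observation that the $\DT_k$ are purely representational, with triadic and dyadic formulae being structurally identical---and your substitution-commutation fact plus the node-wise relabelling bijection on proof trees is exactly the routine verification the paper leaves implicit.
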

%\begin{proof}
%  Trivially, since $p\rel{l}p'$ and $\DT_k(p\rel{l}p')$ are structurally
%  identical.
%\end{proof}
% We say that an LTS in which the transitions are transformed to the
% dyadic representation is a dyadic LTS.

As we sketched before, the $\Rel$ in a dyadic TSS is a directed binary
relation symbol that reflects the branching flow. By way of example, consider
the dyadic transformation $\DT_1$. An origin $p$ that branches into a finite
number of destinations $(l,p')$ corresponds to a finitely branching process $p$
in the (triadic) LTS. An origin $p$ that branches into a possibly infinite number 
of destinations $(l,p')$ where there are only finitely many distinct $l$ corresponds 
to initials finiteness. On the other hand, in the setting of transformation $\DT_4$,
an origin $(p,l)$ that branches into a finite number of destinations $p'$ corresponds 
to image finiteness. (We defer a discussion of the correspondence of the other dyadic
transformations to other bounded-nondeterminism properties of interest to
Section~\ref{sec:other-properties}.)

The states in a sequence of transitions can be retrieved from the
relation symbol $\Rel$, by noticing the direction of the arrow that $\Rel$
depicts. (For $\uparrow$ and $\downarrow$, turn the arrow clockwise and
anti-clockwise respectively over the $(t,t')$.) For example, the
$\DT_6$-transformed transitions $(p_1,p_2)\downarrow l_1$ and
$(p_2,p_3)\downarrow l_2$, correspond to the sequence of triadic transitions
$p_1\rel{l_1}p_2\rel{l_2}p_3$.

For the first three dyadic transformations, \ie, $\DT_k$ where $1\leq k\leq
3$, the targets range over the Cartesian product $\oTerm\times\oTerm$. The
following notation will be adopted in what follows.
\begin{defn}
  \label{def:tr}
  Let $t\rel{l}t'$ be a formula of a triadic TSS and let $o \Rel (d_1,d_2)$ be
  the dyadic formula obtained by applying to $t\rel{l}t'$ one of the first
  three dyadic transformations $\DT_k$, \ie, $o \Rel
  (d_1,d_2)=\DT_k(t\rel{l}t')$ with $1\leq k\leq 3$.  We write
  $\tr_k(o_1,d_1,d_2)$ for the triadic transition obtained by assigning the
  roles of source, label and target to each of the $o$, $d_1$ and $d_2$
  according to $\DT_k$, in such a way that $\tr_k$ is the inverse of
  $\DT_k$. More precisely, $\tr_1(o,d_1,d_2)=o\rel{d_1}d_2$,
  $\tr_2(o,d_1,d_2)=d_2\rel{d_1}o$, and $\tr_3(o,d_1,d_2)=d_1\rel{o}d_2$.
%%%%%%%%%%%%%
%%  Let $R$ be a triadic TSS with terms as labels and let $R'$ be the dyadic TSS
%%  obtained by applying to $R$ some of the first three dyadic transformations
%%  $\DT_k$, \ie, $R'=\DT_k(R)$ with $1\leq k\leq 3$. Let $o$, $d_1$ and $d_2$
%%  be such that $o$ is the origin and $(d_1,d_2)$ is the destination of
%%  $R'$. We write $\tr_k(o_1,d_1,d_2)$ for the triadic transition in $R$
%%  obtained by assigning the roles of source, label and target to each of the
%%  $o$, $d_1$ and $d_2$ according to $\DT_k$, so $\tr_k$ is the inverse of
%%  $\DT_k$. More precisely, $\tr_1(o,d_1,d_2)=o\rel{d_1}d_2$,
%%  $\tr_2(o,d_1,d_2)=d_2\rel{d_1}o$, and $\tr_3(o,d_1,d_2)=d_1\rel{o}d_2$.
%%%%%%%%%%%%%%%%%%%%%%%%%%%%%%%%
\end{defn}

We consider the dyadic projections that are defined over the first three
dyadic transformations by projecting either on the first or on the second component
in the targets of transitions.
\begin{defn}[Dyadic projection]
  \label{def:dyadic-projections}
  Let $t\rel{l}t'$ be a formula of a triadic TSS. The dyadic projections
  $\DT_k^\prj$ (with $1\leq k \leq 3$ and $\prj$ either one of the projection
  functions $\pi_1$ and $\pi_2$) are given by:
  \begin{center}
    \begin{minipage}{.5\linewidth}
      \begin{itemize}
      \item $\DT_1^{\pi_1}(t\rel{l}t')=t\rel{}l$.
      \item $\DT_2^{\pi_1}(t\rel{l}t')=t'\longleftarrow l$.
      \item $\DT_3^{\pi_1}(t\rel{l}t')=l\uparrow t$.
      \end{itemize}
    \end{minipage}
    \begin{minipage}{.48\linewidth}
      \begin{itemize}
      \item $\DT_3^{\pi_2}(t\rel{l}t')=l\uparrow t'$.
      \item $\DT_2^{\pi_2}(t\rel{l}t')=t'\longleftarrow t$.
      \item $\DT_1^{\pi_2}(t\rel{l}t')=t\rel{} t'$.
      \end{itemize}
    \end{minipage}
  \end{center}
  A dyadic projection $\DT_k^{\prj}$ (with $1\leq k\leq 3$ and
  $\prj\in\{\pi_1,\pi_2\}$) is extended to a triadic TSS with terms as labels
  in the trivial way. Each rule
  \begin{mathpar}
    \inferrule*[right={\normalsize\qquad\qquad{\normalfont is transformed into}}]
    {\{u_i\rel{n_i}u'_i~|~i\in I\}}
    {t\rel{m}t'}
    \and
    \inferrule*[right={\normalsize .}]
    {\{\DT_k^{\prj}(u_i\rel{n_i}u'_i)~|~i\in I\}}
    {\DT_k^{\prj}(t\rel{m}t')}
  \end{mathpar}
\end{defn}
Notice that the dyadic projections $\DT_k^{\prj}$ are not only
representational transformations. Now, a projected formula has type
$\oSource[]{k}\times\prj(\oTarget[]{k})=\oTerm\times\oTerm$ (with $1\leq k\leq
3$ and $\prj\in\{\pi_1,\pi_2\}$), and its triadic analogue has type
$\oTerm\times\oTerm\times\oTerm$, which means that the two formulae are not
structurally identical. The dyadic projection entails a many-to-one relation
among the transitions of a triadic LTS and the transitions of its
dyadic-projected counterpart. That is, if $o\Rel d$ is a dyadic-projected
transition, then there exists a closed term $p$ such that $o$, $d$ and $p$ can
be assigned the roles of source, label and targets as to constitute a
transition in the original triadic LTS.

The rules of a triadic TSS with terms as labels are in a similar many-to-one
relation with the rules of the $\DT_k^{\prj}$-projected TSS. For illustration,
for each rule $\rho=\{v_i\rel{}w_i\mid i\in I\}/s\rel{}r$ in the
$\DT_1^{\pi_1}$-projected TSS there exist terms $t$ and $u_i$ with $i\in I$
such that $\{v_i\rel{w_i}u_i\mid i\in I\}/s\rel{r}t$ is a rule in the
original triadic TSS.
\begin{lem}
  Let $R$ be a triadic TSS with terms as labels and let $R'$ be the dyadic TSS
  obtained by applying to $R$ some dyadic projection $\DT_k^{\prj}$ (with
  $1\leq k\leq 3$ and $\prj\in\{\pi_1,\pi_2\}$), \ie, $R'=\DT_k^\prj(R)$. 
%%%%%%%%%%%%%%%
%%% The  following statements hold:
%%%  \begin{enumerate}[(i)]
%%%  \item 
Then, for each triadic transition $p\rel{l}p'$ that is provable in $R$, the
    transition $\DT_k^\prj(p\rel{l}p')$ is provable in $R'$.
%%%%%%%%%%%%%%%%%%%%%
%%%  \item For each transition $o\Rel d$ that is provable in $R'$ there exists a
%%%    closed term $q$ such that either $\tr_k(o,d,q)$ (if $\prj=\pi_1$) or
%%    $\tr_k(o,q,d)$ (if $\prj=\pi_2$) is provable in $R$.
%%  \end{enumerate}
\end{lem}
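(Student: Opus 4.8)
The plan is to transform a given proof tree of $p\rel{l}p'$ in $R$ into a proof tree of $\DT_k^\prj(p\rel{l}p')$ in $R'$ by relabelling: replace the label $q\rel{n}q'$ of every node by its projection $\DT_k^\prj(q\rel{n}q')$, leaving the underlying tree shape unchanged. Since the shape is preserved, the relabelled tree is still upwardly branching with no path of infinite length, and its root is labelled by $\DT_k^\prj(p\rel{l}p')$, which is exactly the transition we want to prove. The only condition of Definition~\ref{def:proof-tree} that still requires checking is the local rule-instance condition at each node.

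The key ingredient is that the dyadic projections commute with substitution. Each $\DT_k^\prj$ (with $1\leq k\leq 3$ and $\prj\in\{\pi_1,\pi_2\}$) merely selects two of the three term components of a formula and arranges them, together with a fixed relation symbol, into a dyadic formula; since a substitution acts componentwise on the terms of a formula, a straightforward inspection of the six cases of Definition~\ref{def:dyadic-projections} gives $\DT_k^\prj(\sigma(t\rel{n}t'))=\sigma(\DT_k^\prj(t\rel{n}t'))$ for every formula $t\rel{n}t'$ and every substitution $\sigma$. I would state and dispatch this commutation property first, as it is the workhorse of the argument.

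It then remains to verify the local condition. Fix a node of the original tree labelled $q\rel{n}q'$ whose children carry labels $\{q_i\rel{n_i}q'_i\mid i\in I\}$. By the proof-tree condition there is a rule $\rho=\{u_i\rel{n_i}u'_i\mid i\in I\}/(t\rel{m}t')\in R$ and a substitution $\sigma$ with $\sigma(t\rel{m}t')=q\rel{n}q'$ and $\sigma(u_i\rel{n_i}u'_i)=q_i\rel{n_i}q'_i$ for all $i\in I$. By the definition of $\DT_k^\prj$ on TSSs, the projected rule $\DT_k^\prj(\rho)=\{\DT_k^\prj(u_i\rel{n_i}u'_i)\mid i\in I\}/\DT_k^\prj(t\rel{m}t')$ belongs to $R'$. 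Applying $\sigma$ to $\DT_k^\prj(\rho)$ and invoking the commutation property yields exactly $\{\DT_k^\prj(q_i\rel{n_i}q'_i)\mid i\in I\}/\DT_k^\prj(q\rel{n}q')$, so the relabelled children together with the relabelled node form a substitution instance of a rule of $R'$, as required.

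The argument is essentially a bookkeeping exercise, and no genuine obstacle arises: the one point to handle with care is the commutation property, together with the observation that the \emph{same} substitution $\sigma$ witnesses the rule instance before and after projection. Note that only the forward implication is claimed; unlike the earlier lemma for the purely representational transformations $\DT_k$, the converse fails here because each $\DT_k^\prj$ discards one component and is therefore many-to-one, as the discussion preceding the statement explains.
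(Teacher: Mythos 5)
Your proof is correct and follows the same route the paper takes: the paper treats this direction as immediate from Definition~\ref{def:dyadic-projections} (its suppressed proof dismisses it as trivial by definition of $\DT_k^\prj$), and your argument---relabelling each node of the proof tree by its projection, justified by the fact that $\DT_k^\prj$ commutes with substitution and maps rules of $R$ to rules of $R'$---is exactly the routine verification underlying that claim.
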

%%%%%%%% Omitted the proof %%%%%%%%%
\iffalse
\begin{proof}
  Statement $(i)$ is trivial by definition of $\DT_k^\prj$. For statement
  $(ii)$ we proceed by induction on the height of the proof tree of $o\Rel
  d$. The base case is when $o\Rel d$ is a substitution instance of an axiom
  $s\Rel r$ in $R'$. We only detail the proof when $\prj=\pi_1$, as the case
  $\prj=\pi_2$ is similar. By Definition~\ref{def:dyadic-projections} there
  exist a term $t$ and an axiom $\tr_k(s,r,t)$ in $R$ such that $\tr_k(o,d,p)$
  and $p$ are substitution instances (with the same substitution) of axiom
  $\tr_k(s,r,t)$ and term $t$ respectively, and the lemma holds.

  The general case is when $o\Rel d$ is a substitution instance of the
  conclusion of a rule $\rho'=H'/s\Rel r$ in $R'$ such that $o\Rel d$ has a
  proof tree $P'$ and $K'/o\Rel d$ is a substitution instance of $\rho'$, where
  $K'$ is the set of labels of the nodes directly above the root of $P'$. For
  every premise $v_i\Rel w_i\in H'$, there exists proof tree $Q'_i$ and, by the
  induction hypothesis, there are terms $u_i$ and proof trees $Q_i$ such that
  $Q_i$ proves a substitution instance of $\tr_k(v_i,w_i,u_i)$ for each
  $i$. By Definition~\ref{def:dyadic-projections} there exist term $t$ and
  rule $\rho=H/\tr_k(s,r,t)$ in $R$ such that $\tr_k(o,d,p)$ and $p$ are
  substitution instances (with the same substitution) of the conclusion
  $\tr_k(s,r,t)$ of rule $\rho$ and of term $t$ respectively, and the lemma
  holds.
\end{proof}
\fi
%%%%%%%%% End of omitted proof %%%%%%%%%%%%%
\begin{rem}
Let $R$ be a triadic TSS and let $R'$ be the dyadic TSS obtained by applying
to $R$ some dyadic projection $\DT_k^{\prj}$ (with $1\leq k\leq 3$ and
$\prj\in\{\pi_1,\pi_2\}$). Since dyadic projections abstract from information
that is present in the original triadic TSS, there might be some transition
$o\Rel d$ that is provable in $R'$, but for which there exists no closed term
$q$ such that either $\tr_k(o,d,q)$ (if $\prj=\pi_1$) or $\tr_k(o,q,d)$ (if
$\prj=\pi_2$) is provable in $R$. By way of example, consider the triadic TSS
$R_1$ with rules
  \begin{mathpar}
    \inferrule*[]
    {~}
    {c\rel{a}c'}
    \and
    \inferrule*[right={\normalsize .}]
    {x\rel{a}x}
    {f(x)\rel{a}x}
  \end{mathpar}
The dyadic TSS $R_1'$ obtained from $R_1$ using  $\DT_1^{\pi_1}$ is 
  \begin{mathpar}
    \inferrule*[]
    {~}
    {c\rel{}a}
    \and
    \inferrule*[right={\normalsize .}]
    {x\rel{}a}
    {f(x)\rel{}a}
  \end{mathpar}
The transition $f(c)\rel{}a$ is provable using the rules in $R'_1$. On the
other hand, there is no transition from $f(c)$ that is provable  in $R_1$. 

For what concerns the dyadic projection $\DT_1^{\pi_2}$, consider the triadic TSS
$R_2$ with rules
  \begin{mathpar}
    \inferrule*[]
    {~}
    {c\rel{a}c}
    \and
    \inferrule*[right={\normalsize .}]
    {x\rel{b}y}
    {f(x)\rel{b}f(y)}
  \end{mathpar}
The dyadic TSS $R_2'$ obtained from $R_2$ using  $\DT_1^{\pi_2}$ is 
  \begin{mathpar}
    \inferrule*[]
    {~}
    {c\rel{}c}
    \and
    \inferrule*[right={\normalsize .}]
    {x\rel{}y}
    {f(x)\rel{}f(y)}
  \end{mathpar}
The transition $f(c)\rel{}f(c)$ is provable using the rules in $R'_2$. On the
other hand, there is no transition from $f(c)$ that is provable  in $R_2$. 

Similar examples can be constructed for each of the other four dyadic
projections.
\end{rem}

\begin{ntn}
  In order to unify the notation, in what follows, we let
  $\prj\in\{\pi_1,\pi_2,\id\}$, where $\id$ is the identity function, and say
  `dyadic transformation $\DT_k^\prj$' when referring both to the dyadic
  transformations $\DT_k^\prj$ with $1\leq k\leq 6$ and $\prj=\id$, and to the
  dyadic projections $\DT_k^\prj$ with $1\leq k\leq 3$ and
  $\prj\in\{\pi_1,\pi_2\}$. We write $\oSource[\prj]{k}$, $\oTarget[\prj]{k}$,
  $\cSource[\prj]{k}$ and $\cTarget[\prj]{k}$ for the sets of sources,
  targets, origins, and destinations respectively in a $\DT_k^\prj$-dyadic
  LTS.

  From now on, we replace the binary relation symbol `$\Rel$' with the more
  suggestive `$\rel{}$' and write `$s\rel{} r$' instead of `$s\Rel r$' for any
  $\DT_k^\prj$-dyadic formula
\end{ntn}

\begin{defn}
  \label{def:DT-finite}
  Let $R$ be a triadic TSS with terms as labels and let $R'$ be the dyadic TSS
  obtained by applying to $R$ some dyadic transformation $\DT_k^\prj$, \ie,
  $R'=\DT_k^\prj(R)$.
  \begin{enumerate}[(i)]
  \item We say $R'$ is $\DT_k^\prj$-dyadic.
  \item We say $R$ is $\DT_k^\prj$-finite iff $R'$ is finitely branching, that
    is, iff for every origin $o$ in $R'$ the set $\{d\mid o\rel{} d\}$ is
    finite.
  \end{enumerate}
  (When the transformation $\DT_k^\prj$ is clear from the context, or when we
  refer to any of the dyadic transformations indistinctly, we will omit the
  tag `$\DT_k^\prj$-' and simply say `dyadic'.)
\end{defn}
For example, let $R$ be a triadic TSS with terms as labels. $R$ is
finitely branching iff $R$ is $\DT_1^\id$-finite. $R$ is image finite iff $R$ is
$\DT_4^\id$-finite. $R$ is initials finite iff $R$ is $\DT_1^{\pi_1}$-finite.

The dyadic transformations allow one to deal with SOS with higher-order
processes by using the framework for plain SOS. Notice that one can start with
a triadic TSS with terms as labels, apply a dyadic transformation, and then
trivially inject the obtained dyadic TSS into a triadic TSS with ground labels
by assuming that all transitions are labelled with the same constant term.

Next we explore the space of bounded nondeterminism properties that results
from Definition~\ref{def:DT-finite}.

%%%%%%%%%%%%%%%%%%%%%%%%%%%%%%%%%%%%%%%%%%%%%%%%%%%%%%%%%%%%%%%%%%%%%%%%%%%
%% Other properties
\section{Space of bounded-nondeterminism properties }
\label{sec:other-properties}

Definition~\ref{sec:other-properties} introduces a variety of
bounded-nondeterminism properties. Most of these properties do not hold for
reasonably expressive LTSs/process calculi. However, they have some
theoretical interest in the classification of the bounded-nondeterminism
properties.
\begin{defn}
  \label{def:properties}
  We define the following bounded-nondeterminism properties of a triadic LTS:
  \begin{enumerate}[(i)]
  \item Finite branching: $\forall p.~\{(l,p')\mid p\rel{l}p'\}\in
    \pset_\omega(\cTerm\times \cTerm)$.
  \item Finite folding: $\forall p'.~\{(p,l)\mid p\rel{l}p'\}\in
    \pset_\omega(\cTerm\times \cTerm)$.
  \item Finite bundling: $\forall l.~\{(p,p')\mid p\rel{l}p'\}\in
    \pset_\omega(\cTerm\times \cTerm)$.
  \item Image finiteness: $\forall p.\forall l.~\{p'\mid
    p\rel{l}p'\}\in\pset_\omega(\cTerm)$.
  \item Source finiteness: $\forall l.\forall p'.~\{p\mid
    p\rel{l}p'\}\in\pset_\omega(\cTerm)$.
  \item Label finiteness: $\forall p.\forall p'.~\{l\mid
    p\rel{l}p'\}\in\pset_\omega(\cTerm)$.
  \item Initials finiteness: $\forall p.~\{l\mid \exists p'.\
    p\rel{l}p'\}\in\pset_\omega(\cTerm)$.
  \item Finals finiteness: $\forall p'.~\{l\mid \exists p.\
    p\rel{l}p'\}\in\pset_\omega(\cTerm)$.
  \item Heads finiteness: $\forall l.~\{p\mid \exists p'.\
    p\rel{l}p'\}\in\pset_\omega(\cTerm)$.
  \item Tails finiteness: $\forall l.~\{p'\mid \exists p.\
    p\rel{l}p'\}\in\pset_\omega(\cTerm)$.
  \item Antecedents finiteness: $\forall p'.~\{p\mid \exists l.\
    p\rel{l}p'\}\in\pset_\omega(\cTerm)$.
  \item Consequents finiteness: $\forall p.~\{p'\mid \exists l.\
    p\rel{l}p'\}\in\pset_\omega(\cTerm)$.
  \end{enumerate}
\end{defn}
The properties listed above involve all the possible combinations of the three
components in a triadic formula (source, target and label) such that either
one component branches finitely into the other two, two components branch
finitely into the other one, or one component branches finitely into a second
component and possibly infinitely into the third
component. Figure~\ref{fig:properties} depicts these combinations
schematically. Finite branching $(i)$, image finiteness $(iv)$, and initials
finiteness $(vii)$ are well-known properties. We have contrived the names of
the rest of the properties, which, to the best of our knowledge, have not been
considered in the literature before.

Properties $(i)$ to $(xii)$ coincide with finite branching of the dyadic LTSs
obtained by applying each of the six dyadic transformations in
Definition~\ref{def:dyadic-transformations} and the six dyadic projections in
Definition~\ref{def:dyadic-projections}, respectively. In the nomenclature of
Definition~\ref{def:DT-finite}:
\begin{center}
\begin{minipage}{.5\linewidth}
  \begin{enumerate}[$(i)$]
  \item Finite branching $\defeq$ $\DT_1^{\id}$-finite.
  \item Finite folding $\defeq$ $\DT_2^{\id}$-finite.
  \item Finite bundling $\defeq$ $\DT_3^{\id}$-finite.
  \item Image finite $\defeq$ $\DT_4^{\id}$-finite.
  \item Source finite $\defeq$ $\DT_5^{\id}$-finite.
  \item Label finite $\defeq$ $\DT_6^{\id}$-finite.
  \end{enumerate}
\end{minipage}
\begin{minipage}{.48\linewidth}
  \begin{enumerate}[$(i)$]
    \addtocounter{enumi}{6}
  \item Initials finite $\defeq$ $\DT_1^{\pi_1}$-finite.
  \item Finals finite $\defeq$ $\DT_2^{\pi_1}$-finite.
  \item Heads finite $\defeq$ $\DT_3^{\pi_1}$-finite.
  \item Tails finite $\defeq$ $\DT_3^{\pi_2}$-finite.
  \item Antecedents finite $\defeq$ $\DT_2^{\pi_2}$-finite.
  \item Consequents finite $\defeq$ $\DT_1^{\pi_2}$-finite.
  \end{enumerate}
\end{minipage}
\end{center}

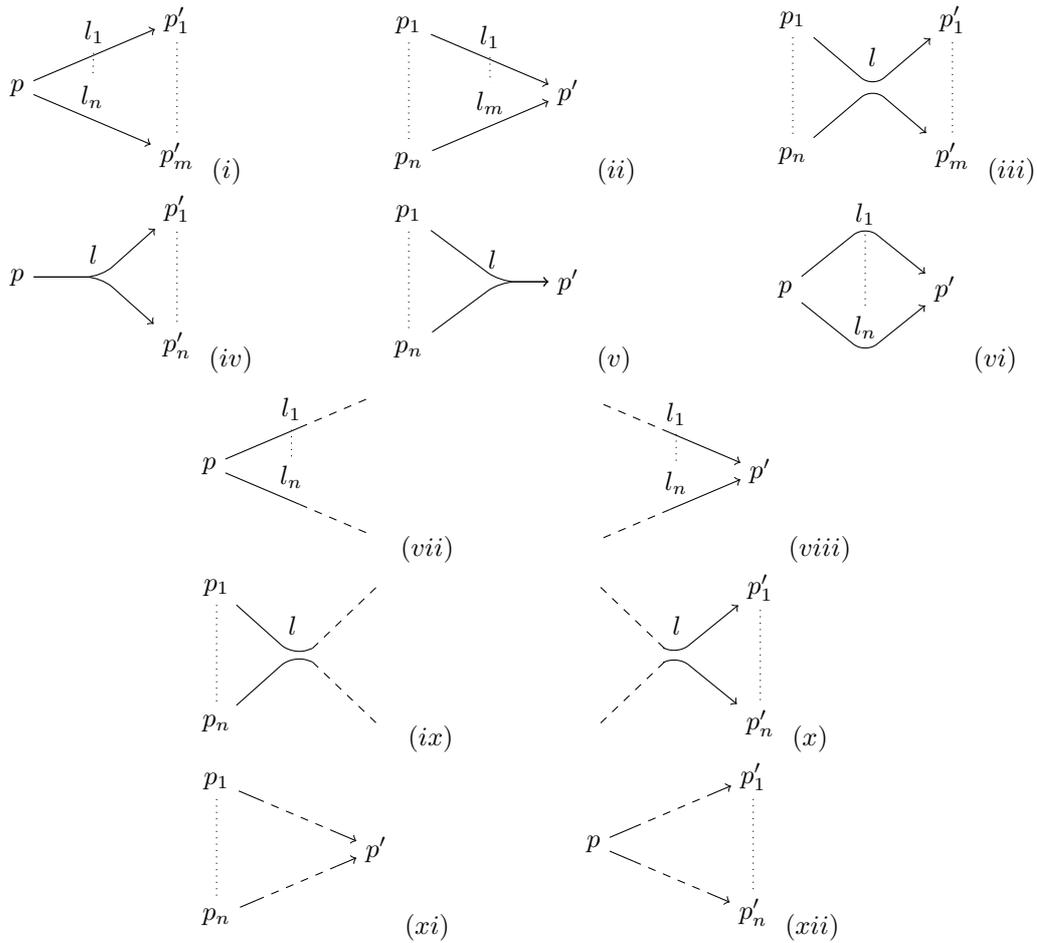
\begin{figure}
  \centering
  \begin{minipage}[b]{.3\linewidth}
  \begin{tikzpicture}[node distance=.9cm and 2.1cm]
    \node (p)  [on grid]                   {$p$};
    \node (p1) [on grid, above right=of p] {$p'_1$};
    \node (pn) [on grid, below right=of p] {$p'_m$};
    \path (p) edge [->] node (l1) [yshift=.3cm] {$l_1$} (p1);
    \path (p) edge [->] node (ln) [yshift=.3cm] {$l_n$} (pn);
    \path (p1) edge [dotted] (pn);
    \path (l1) edge [dotted] (ln);
  \end{tikzpicture}
  $(i)$
  \end{minipage}
  \begin{minipage}[b]{.3\linewidth}
  \begin{tikzpicture}[node distance=.9cm and 2.1cm]
    \node (p)                             {$p'$};
    \node (p1) [on grid, above left=of p] {$p_1$};
    \node (pn) [on grid, below left=of p] {$p_n$};
    \path (p1) edge [->] node (l1) [yshift=.3cm] {$l_1$} (p);
    \path (pn) edge [->] node (ln) [yshift=.3cm] {$l_m$} (p);
    \path (p1) edge [dotted] (pn);
    \path (l1) edge [dotted] (ln);
  \end{tikzpicture}
  $(ii)$
  \end{minipage}
  \begin{minipage}[b]{.3\linewidth}
  \begin{tikzpicture}[node distance=.9cm and 2.1cm]
    \node (p1) [on grid]             {$p_1$};
    \node (s)  [on grid,below=of p1] { };
    \node (pn) [on grid,below=of s]  {$p_n$};
    \node (l) at (1.05,-.5) {$l$};
    \node (p'1) [on grid,right=of p1] {$p'_1$};
    \node (p'n) [on grid,right=of pn] {$p'_m$};
    \draw [->,rounded corners=.2cm] (p1) to (1.05,-.9) to (p'1);
    \draw [->,rounded corners=.2cm] (pn) to (1.05,-.9) to (p'n);
    \path (p1) edge [dotted] (pn);
    \path (p'1) edge [dotted] (p'n);
  \end{tikzpicture}
  $(iii)$
  \end{minipage}
  \begin{minipage}[b]{.3\linewidth}
  \begin{tikzpicture}[node distance=.9cm and 2.1cm]
    \node (p)                     {$p$};
    \node (l) at (1,.3) {$l$};
    \node (p1) [on grid,above right=of p] {$p'_1$};
    \node (pn) [on grid,below right=of p] {$p'_n$};
    \draw [->,rounded corners=.2cm] (p) to (1.1,0) to (p1);
    \draw [->,rounded corners=.2cm] (p) to (1.1,0) to (pn);
    \path (p1) edge [dotted] (pn);
  \end{tikzpicture}
  $(iv)$
  \end{minipage}
  \begin{minipage}[b]{.3\linewidth}
  \begin{tikzpicture}[node distance=.9cm and 2.1cm]
    \node (p)                     {$p'$};
    \node (l) at (-1,.3) {$l$};
    \node (p1) [on grid,above left=of p] {$p_1$};
    \node (pn) [on grid,below left=of p] {$p_n$};
    \draw [->,rounded corners=.2cm] (p1) to (-.9,0) to (p);
    \draw [->,rounded corners=.2cm] (pn) to (-.9,0) to (p);
    \path (p1) edge [dotted] (pn);
  \end{tikzpicture}
  $(v)$
  \end{minipage}
  \begin{minipage}[b]{.3\linewidth}
  \begin{tikzpicture}[node distance=.9cm and 2.1cm]
    \node (l1) [on grid]             {$l_1$};
    \node (s)  [on grid,below=of l1] { };
    \node (b) [white,on grid,below=of s] {$p_n$};
    \node (ln) at (0,-1.5) {$l_n$};
    \node (p)  at (-1.05,-1)       {$p$};
    \node (p') at (1.05,-1)        {$p'$};
    \draw [->,rounded corners=.2cm] (p) to (0,-.15) to (p');
    \draw [->,rounded corners=.2cm] (p) to (0,-1.85) to (p');
    \path (l1) edge [dotted] (ln);
  \end{tikzpicture}
  $(vi)$
  \end{minipage}
  \begin{minipage}[b]{.3\linewidth}
  \begin{tikzpicture}[node distance=.9cm and 2.1cm]
    \node (p)  [on grid]                   {$p$};
    \node (p1) [white,on grid, above right=of p] {$p'_1$};
    \node (pn) [white,on grid, below right=of p] {$p'_n$};
    \path (p) edge [-] node (l1) [yshift=.45cm,xshift=.38cm]
                                 {$l_1$} (1.19,.51);
    \path (1.19,.51) edge [dashed] (2.1,.9);
    \path (p) edge [-] node (ln) [yshift=.15cm,xshift=.38cm]
                                 {$l_n$} (1.19,-.51);
    \path (1.19,-.51) edge [dashed] (2.1,-.9);
    \path (l1) edge [dotted] (ln);
  \end{tikzpicture}
  $(vii)$
  \end{minipage}
  \begin{minipage}[b]{.3\linewidth}
  \begin{tikzpicture}[node distance=.9cm and 2.1cm]
    \node (p)  [on grid]                   {$p'$};
    \node (p1) [white,on grid, above left=of p] {$p_1$};
    \node (pn) [white,on grid, below left=of p] {$p_n$};
    \path (p) edge [<-] node (l1) [yshift=.45cm,xshift=-.38cm]
                                 {$l_1$} (-1.19,.51);
    \path (-1.19,.51) edge [dashed] (-2.1,.9);
    \path (p) edge [<-] node (ln) [yshift=.15cm,xshift=-.38cm]
                                 {$l_n$} (-1.19,-.51);
    \path (-1.19,-.51) edge [dashed] (-2.1,-.9);
    \path (l1) edge [dotted] (ln);
  \end{tikzpicture}
  $(viii)$
  \end{minipage}\\
  \begin{minipage}[b]{.3\linewidth}
  \begin{tikzpicture}[node distance=.9cm and 2.1cm]
    \node (p1) [on grid]             {$p_1$};
    \node (s)  [on grid,below=of p1] { };
    \node (pn) [on grid,below=of s]  {$p_n$};
    \node (l) at (1,-.5) {$l$};
    \node (p'1) [white,on grid,right=of p1] {$p'_1$};
    \node (p'n) [white,on grid,right=of pn] {$p'_n$};
    \draw [-,rounded corners=.2cm] (p1) to (1,-.9) to (1.26,-.81);
    \draw [-,dashed] (1.26,-.81) to (2.1,0);
    \draw [-,rounded corners=.2cm] (pn) to (1,-.9) to (1.26,-.99);
    \draw [-,dashed] (1.26,-.99) to (2.1,-1.8);
    \path (p1) edge [dotted] (pn);
  \end{tikzpicture}
  $(ix)$
  \end{minipage}
  \begin{minipage}[b]{.3\linewidth}
  \begin{tikzpicture}[node distance=.9cm and 2.1cm]
    \node (p1) [white,on grid]             {$p_1$};
    \node (s)  [on grid,below=of p1] { };
    \node (pn) [white,on grid,below=of s]  {$p_n$};
    \node (l) at (1,-.5) {$l$};
    \node (p'1) [on grid,right=of p1] {$p'_1$};
    \node (p'n) [on grid,right=of pn] {$p'_n$};
    \draw [-,dashed] (0,0) to (.84,-.81);
    \draw [->,rounded corners=.2cm] (.84,-.81) to (1,-.9) to (p'1);
    \draw [-,dashed] (0,-1.8) to (.84,-.99);
    \draw [->,rounded corners=.2cm] (.84,-.99) to (1,-.9) to (p'n);
    \path (p'1) edge [dotted] (p'n);
  \end{tikzpicture}
  $(x)$
  \end{minipage}\\
  \begin{minipage}[b]{.3\linewidth}
  \begin{tikzpicture}[node distance=.9cm and 2.1cm]
    \node (p)                             {$p'$};
    \node (p1) [on grid, above left=of p] {$p_1$};
    \node (pn) [on grid, below left=of p] {$p_n$};
    \path (p1) edge [-] (-1.575,.675);
    \draw [dashed] (-1.575,.675) to (-.525,.225);
    \path (-.525,.225) edge [->] (p);
    \path (pn) edge [-] (-1.575,-.675);
    \draw [dashed] (-1.575,-.675) to (-.525,-.225);
    \path (-.525,-.225) edge [->] (p);
    \path (p1) edge [dotted] (pn);
  \end{tikzpicture}
  $(xi)$
  \end{minipage}
  \begin{minipage}[b]{.3\linewidth}
  \begin{tikzpicture}[node distance=.9cm and 2.1cm]
    \node (p)  [on grid]                   {$p$};
    \node (p1) [on grid, above right=of p] {$p'_1$};
    \node (pn) [on grid, below right=of p] {$p'_n$};
    \path (p) edge [-] (.525,.225);
    \draw [dashed] (.525,.225) to (1.575,.675);
    \path (1.575,.675) edge [->] (p1);
    \path (p) edge [-] (.525,-.225);
    \draw [dashed] (.525,-.225) to (1.575,-.675);
    \path (1.575,-.675) edge [->] (pn);
    \path (p1) edge [dotted] (pn);
  \end{tikzpicture}
  $(xii)$
  \end{minipage}
  \caption{Schematic representation of the properties in
    Definition~\ref{def:properties}. The dots indicate that the components
    branch finitely (\ie, the indices $n$ and the $m$ are natural
    numbers), and the dashes indicate that the missing components can branch
    infinitely.}
  \label{fig:properties}
\end{figure}

\begin{figure}[t]
  \centering
  \begin{tikzpicture}
    \matrix [row sep=.2cm, column sep=.25cm] {
      \node (i) [on grid] {$(i)$}; & & & & & &
      \node(iii)[on grid] {$(iii)$}; & & & & & &
      \node (ii) [on grid] {$(ii)$};
      & \node{derived};\\\\\\
      & & & & & \node (x) [on grid] {$(x)$}; & &
      \node (ix) [on grid] {$(ix)$};\\
      \node (xii) [on grid] {$(xii)$};
      & & & & & & & & & & & & \node (xi) [on grid] {$(xi)$};
      & \node{complementary};\\
      & & & & & \node (vii) [on grid] {$(vii)$}; & &
      \node (viii) [on grid] {$(viii)$};\\\\\\
      \node (iv) [on grid] {$(iv)$}; & & & & & &
      \node (vi) [on grid] {$(vi)$}; & & & & & &
      \node (v) [on grid] {$(v)$};
      & \node{elementary};\\
    };
    \path (i) edge (vii);
    \path (vii) edge (vi);

    \path (i) edge (xii);
    \path (xii) edge (iv);

    \path (iii) edge (x);
    \path (x) edge (iv);

    \path (iii) edge (ix);
    \path (ix) edge (v);

    \path (ii) edge (viii);
    \path (viii) edge (vi);

    \path (ii) edge (xi);
    \path (xi) edge (v);
  \end{tikzpicture}
  \caption{Hasse diagram of the properties in
    Definition~\ref{def:properties}.}
  \label{fig:hasse}
\end{figure}
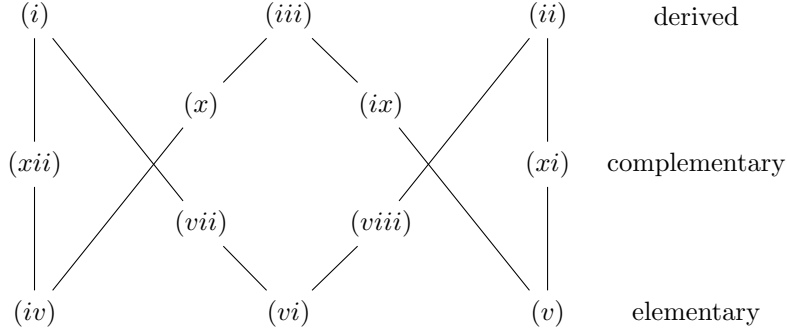

Properties $(i)-(xii)$ can be arranged in the Hasse diagram of Figure~\ref{fig:hasse},
which depicts the information order of the properties, that is, the order
relation is reverse implication. We write $q\leq p$ if $q$ occurs below $p$ in
the Hasse diagram. Properties $(iv)$, $(vi)$ and $(v)$ at the bottom are
elementary (\ie, most general) and properties $(i)$, $(iii)$ and $(ii)$ at the
top are derived (\ie, most specific). Properties $(vii)$ to $(xii)$ are
complementary (\ie, the difference between a derived and an elementary, more
on this below).

%Let us picture the Hasse diagram as a closed curve with three lobes. 
The properties are displayed in three layers: the bottom layer
contains three elementary properties; the top layer contains three derived
properties; the middle layer contains six complementary properties, each of
them in the middle of a path from an elementary to a derived property.

\begin{prop}
  \label{pr:covers}
  For every two bounded-nondeterminism properties $p$ and $q$ of
  Definition~\ref{def:properties}, $p$ implies $q$ iff $q\leq p$ in the Hasse
  diagram of Figure~\ref{fig:hasse}.
\end{prop}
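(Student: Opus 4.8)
The plan is to prove the two directions separately, writing $\le$ for the order of Figure~\ref{fig:hasse} (reverse implication), which is the reflexive--transitive closure of the twelve cover edges drawn there.

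\emph{The ``if'' direction} ($q\le p$ implies that $p$ implies $q$). Since $\le$ is generated by the covers and logical implication is transitive, it suffices to verify that each edge is a valid implication, and these come in exactly two shapes. The upper edges, from a derived to a complementary property (\eg{} $(i)\to(vii)$ and $(i)\to(xii)$), drop one coordinate: the finite set witnessing the derived property is a set of pairs and the set witnessing the complementary property is one of its two projections, and a projection of a finite set is finite. The lower edges, from a complementary to an elementary property (\eg{} $(vii)\to(vi)$), trade an existential quantifier for a fixed extra parameter: the set witnessing the elementary property is a subset of the set witnessing the complementary one, and a subset of a finite set is finite. Because all three components of a transition are closed terms, every permutation of the roles source/label/target carries an LTS to an LTS and induces an automorphism of the poset of Definition~\ref{def:properties}; this role-symmetry acts transitively on the six upper edges and on the six lower edges, so checking one representative of each shape settles this direction.

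\emph{The ``only if'' direction} ($p$ implies $q$ only if $q\le p$). I argue the contrapositive: whenever $q\not\le p$ I exhibit an LTS satisfying $p$ but failing $q$. By the ``if'' direction just proved, the set of properties \emph{failed} by a fixed LTS is upward closed in $\le$. Hence it suffices, for each $q$, to produce finitely many LTSs each failing $q$ whose failed-sets intersect in exactly the principal up-set $\{p \mid q\le p\}$ (written $\uparrow q$); then any $p$ with $q\not\le p$ lies outside that intersection and is therefore satisfied by one of the witnesses.

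\emph{The counterexamples.} Using the role-symmetry above it suffices to describe two LTSs up to $S_3$. (a) The ``single fan'' LTS $\{p_0\rel{l_0}q_n \mid n\in\NN\}$ with the $q_n$ distinct fails image finiteness $(iv)$ and, by a direct computation of its twelve profiles, fails exactly $\uparrow(iv)=\{(i),(iii),(iv),(x),(xii)\}$; its two role-permutations realise $\uparrow(v)$ and $\uparrow(vi)$. (b) The ``diagonal fan'' LTS $\{p_0\rel{l_n}q_n \mid n\in\NN\}$ with both the $l_n$ and the $q_n$ distinct fails exactly $\{(i),(vii),(xii)\}$; its two role-permutations fail $\{(ii),(viii),(xi)\}$ and $\{(iii),(ix),(x)\}$. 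One then checks, property by property, that for each $q$ the intersection of the failed-sets of those among these six LTSs that fail $q$ is precisely $\uparrow q$; for example for $q=(vii)$ the two relevant failed-sets are $\{(i),(ii),(vi),(vii),(viii)\}$ and $\{(i),(vii),(xii)\}$, whose intersection is $\{(i),(vii)\}=\uparrow(vii)$.

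\emph{Main obstacle.} The delicate point, and the reason a single principal counterexample per property does not work, is that each derived property is a conjunction of two complementary ones: $(i)$ holds iff both $(vii)$ and $(xii)$ hold, because $\{(l,p')\mid p\rel{l}p'\}$ is finite exactly when both its projections are, and symmetrically for $(ii)$ and $(iii)$. Consequently no LTS fails exactly $(i)$, so the top layer can only be separated by the intersection-of-profiles argument above. Verifying the twelve intersections is thus the crux of the proof, though the $S_3$-symmetry reduces the genuinely independent profile computations to just the two LTSs $(a)$ and $(b)$.
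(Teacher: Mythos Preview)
Your argument is correct. The ``if'' direction is essentially the paper's: a derived property bounds a set of pairs, the complementary properties bound its projections, and the elementary properties bound the fibres, so the implications along each chain are immediate. Your explicit appeal to the $S_3$-symmetry of the roles source/label/target is a nice packaging of what the paper leaves implicit in its use of the $\tr_k$ notation.

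For the ``only if'' direction the paper simply asserts that ``no implications other than the ones given by the Hasse diagram exist'' and that this ``can be shown straightforwardly by picking counter-examples for each case, which we omit here.'' You actually carry this out, and in a rather economical way: instead of producing one counterexample per non-edge, you build six LTSs (two up to the $S_3$-action) whose failure profiles are exactly the principal up-sets $\uparrow(iv),\uparrow(v),\uparrow(vi)$ and the three sets $\{(i),(vii),(xii)\}$, $\{(ii),(viii),(xi)\}$, $\{(iii),(ix),(x)\}$, and then observe that for each $q$ the intersection of those profiles containing $q$ is precisely $\uparrow q$. This is more structured than the paper's hinted brute-force enumeration and makes transparent why a single ``principal'' counterexample per property cannot work at the top layer---your observation that any LTS failing $(i)$ must fail $(vii)$ or $(xii)$ is exactly Proposition~\ref{pr:derived-equivalent} read contrapositively. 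The trade-off is that your argument front-loads the verification of the six failure profiles, whereas the paper's intended proof presumably handles each of the (finitely many) non-relations by an ad hoc LTS; both are finite checks, but yours exposes the underlying symmetry.
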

\begin{proof}
  Recall notation $\tr_k$ from Definition~\ref{def:tr}. For the
  ($\Longleftarrow$) direction, a derived property $d$ that corresponds to a
  dyadic transformation $\DT_k^\id$ (with $1\leq k\leq3$) states that a
  component $t_1$ branches finitely into the other components $t_2$ and $t_3$,
  \ie, for each $t_1$, the set $\{(t_2,t_3)\mid \tr_k(t_1,t_2,t_3)\}$ is
  finite. Trivially, this implies the two complementary properties $c_1$ and
  $c_2$ that state that the $t_1$ is respectively allowed to branch infinitely
  into either the $t_2$ or the $t_3$, \ie, for each $t_1$ both the sets
  $\{t_2\mid\exists t_3.\tr_k(t_1,t_2,t_3)\}$ and $\{t_3\mid\exists
  t_2.\tr_k(t_1,t_2,t_3)\}$ are finite. In turn, the complementary properties
  $c_1$ and $c_2$ respectively imply the two elementary properties $e_1$ and
  $e_2$ that state that, given a $t_1$, the $t_2$ and $t_3$ branch finitely
  into each other, \ie, for each $t_1$ and $t_2$ the set $\{t_3\mid
  \tr_k(t_1,t_2,t_3)\}$ is finite, and for each $t_1$ and $t_3$ the set
  $\{t_2\mid \tr_k(t_1,t_2,t_3)\}$ is finite.

  For the ($\Longrightarrow$) direction, no implications other than the ones
  given by the Hasse diagram of Figure~\ref{fig:hasse} exist. This can be
  shown straightforwardly by picking counter-examples for each case, which we
  omit here.
\end{proof}
For instance, properties $(xii)$ and $(vii)$ are below property $(i)$ in the
diagram, and, in turn, properties $(iv)$ and $(vi)$ are below properties
$(xii)$ and $(vii)$ respectively. Therefore, $(i)$ implies each of $(xii)$,
$(vii)$, $(iv)$ and $(vi)$.

Some equivalences between combinations of the properties depicted in the Hasse
diagram of Figure~\ref{fig:hasse} can be established.
\begin{prop}
  \label{pr:derived-equivalent}
  A bounded-nondeterminism derived property $d$ is equivalent to the
  conjunction of the elementary property $e$ from one of the chains that has
  $d$ as maximum in the Hasse diagram of Figure~\ref{fig:hasse}, and the
  complementary property $c$ from the adjacent chain that also has $d$ as
  maximum, \ie, $d\Longleftrightarrow e \land c$.
\end{prop}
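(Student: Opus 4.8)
The plan is to recast the claim as one elementary fact about finiteness of binary relations and then to read it off Figure~\ref{fig:hasse} uniformly for each of the three derived properties, using the $\tr_k$ notation of Definition~\ref{def:tr} exactly as in the proof of Proposition~\ref{pr:covers}. A derived property $d$ corresponds to a transformation $\DT_k^\id$ with $1\le k\le 3$ and, following the reading in the proof of Proposition~\ref{pr:covers}, asserts that, for every value of the distinguished component $t_1$, the set $S=\{(t_2,t_3)\mid\tr_k(t_1,t_2,t_3)\}$ is finite. The four properties lying strictly below $d$ in the diagram are its two complementary properties $c_1,c_2$ (finiteness of the $t_2$- and of the $t_3$-projection of $S$) and its two elementary properties $e_1,e_2$ (finiteness of the $t_3$-fibres over each $t_2$, respectively of the $t_2$-fibres over each $t_3$); the two chains with maximum $d$ group these as $\{c_1,e_2\}$ and $\{c_2,e_1\}$.

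First I would isolate the combinatorial lemma on which everything rests: a set $S$ of ordered pairs is finite if and only if one of its coordinate projections is finite and every fibre over that coordinate is finite. The forward direction is immediate, since a projection is the image of $S$ and each fibre embeds into $S$. For the converse I would write $S=\bigcup\{\,\{a\}\times F_a\mid a\in\pi(S)\,\}$, where $\pi(S)$ is the finite projection and $F_a$ the finite fibre over $a$, exhibiting $S$ as a finite union of finite sets.

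Next I would instantiate the lemma through the two coordinates of $S$. Decomposing along the first coordinate gives $d\Longleftrightarrow c_1\land e_1$ and decomposing along the second gives $d\Longleftrightarrow c_2\land e_2$. For $d=(i)$, for instance, these read $(i)\Longleftrightarrow(vii)\land(iv)$ (initials finiteness together with image finiteness) and $(i)\Longleftrightarrow(xii)\land(vi)$; in both, the complementary and the elementary conjunct lie on different chains below $(i)$, which is precisely the pairing of an elementary property from one chain with the complementary property of the adjacent chain that the statement prescribes. The derived properties $(ii)$ and $(iii)$ are then handled verbatim by reading the same argument through $\tr_2$ and $\tr_3$, which merely permute the roles of source, label and target.

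The one point that needs genuine care is the bookkeeping of which conjuncts combine: the equivalence holds for the crossed conjunctions $c_1\land e_1$ and $c_2\land e_2$ but fails for the same-chain conjunctions $c_1\land e_2$ and $c_2\land e_1$, because each chain constrains only a single coordinate (the chain $\{c_1,e_2\}$ bounds $t_2$ twice while leaving $t_3$ free, and symmetrically for $\{c_2,e_1\}$). A single counterexample pins this down: a process whose only outgoing label is $l_0$ but which reaches infinitely many targets under $l_0$ satisfies both $(vii)$ and $(vi)$---the two properties on the chain $(i)$--$(vii)$--$(vi)$---yet violates $(i)$; equivalently, the relation $\{(l_0,p'_n)\mid n\in\NN\}$ has finite $t_2$-projection and finite $t_2$-fibres over each target but is infinite. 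This confirms that the two conjuncts must be taken from adjacent chains, as the statement requires, and completes the verification.
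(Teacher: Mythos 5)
Your proposal is correct and takes essentially the same route as the paper's own proof: the paper's ($\Longleftarrow$) direction is precisely your projection/fibre decomposition --- the complementary property makes the projection onto one coordinate finite, the elementary property from the adjacent chain makes each fibre over that coordinate finite, and the set of pairs is then a finite union of finite sets --- while the ($\Longrightarrow$) direction is immediate (the paper cites Proposition~\ref{pr:covers}). Your closing counterexample showing that same-chain conjunctions fail is sound but goes beyond what the statement asks; the paper treats that kind of non-implication separately (cf.\ Proposition~\ref{pr:two-elementary}).
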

\begin{proof}
  By Definition~\ref{def:properties} the derived property $d$ corresponds to a
  dyadic transformation $\DT_k^\id$ with $1\leq k\leq 3$. We only detail the
  proof when the complementary property $c$ corresponds to $\DT_k^{\pi_1}$, as
  the case that corresponds to $\DT_k^{\pi_2}$ is similar. The
  ($\Longrightarrow$) direction is trivial by Proposition~\ref{pr:covers}. For
  the ($\Longleftarrow$) direction, notice that the elementary property $e$
  states that two components $t_1$ and $t_2$ branch finitely into the other
  component $t_3$, \ie, for each $t_1$ and $t_2$ the set $\{t_3\mid
  \tr_k(t_1,t_2,t_3)\}$ is finite. The complementary property $c$ from the
  other chain states that component $t_1$ branches finitely into component
  $t_2$, but possibly branches infinitely into component $t_3$. By applying
  property $e$ to each of the finitely many pairs $(t_1,t_2)$, component $t_1$
  is shown to branch finitely into component $t_3$, and thus the derived
  property $d$ holds, \ie, for each $t_1$ the set $\{(t_2,t_3)\mid
  \tr_k(t_1,t_2,t_3)\}$ is finite.
\end{proof}
For instance, properties $(iv)$ and $(vii)$ together are equivalent to $(i)$,
and so are properties $(vi)$ and $(xii)$ together.

\begin{prop}
  A bounded-nondeterminism derived property $d$ is equivalent to the
  conjunction of the two complementary properties $c_1$ and $c_2$ from the two
  adjacent chains that have $d$ as maximum.
\end{prop}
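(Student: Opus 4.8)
The plan is to mirror the proof of Proposition~\ref{pr:derived-equivalent}. First I would record that, by Definition~\ref{def:properties}, the derived property $d$ corresponds to a dyadic transformation $\DT_k^\id$ with $1\le k\le 3$, so that $d$ asserts that for every component $t_1$ the set $\{(t_2,t_3)\mid \tr_k(t_1,t_2,t_3)\}$ is finite. The two complementary properties $c_1$ and $c_2$ sitting immediately below $d$ in the two adjacent chains of Figure~\ref{fig:hasse} correspond to the dyadic projections $\DT_k^{\pi_1}$ and $\DT_k^{\pi_2}$; they assert respectively that for every $t_1$ the set $\{t_2\mid\exists t_3.\ \tr_k(t_1,t_2,t_3)\}$ is finite, and that for every $t_1$ the set $\{t_3\mid\exists t_2.\ \tr_k(t_1,t_2,t_3)\}$ is finite. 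Concretely, when $d$ is finite branching $(i)$, one has $\tr_1(t_1,t_2,t_3)=t_1\rel{t_2}t_3$, and $c_1,c_2$ are initials finiteness $(vii)$ and consequents finiteness $(xii)$.

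The implication $d\Rightarrow c_1\land c_2$ is immediate from Proposition~\ref{pr:covers}, since both $c_1$ and $c_2$ lie below $d$ in the Hasse diagram.

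For the converse, I would fix an arbitrary component $t_1$ and set $A=\{t_2\mid\exists t_3.\ \tr_k(t_1,t_2,t_3)\}$ and $B=\{t_3\mid\exists t_2.\ \tr_k(t_1,t_2,t_3)\}$. Property $c_1$ gives that $A$ is finite and property $c_2$ gives that $B$ is finite. The decisive observation is that $\{(t_2,t_3)\mid\tr_k(t_1,t_2,t_3)\}\subseteq A\times B$, and the Cartesian product of two finite sets is finite; hence $\{(t_2,t_3)\mid\tr_k(t_1,t_2,t_3)\}$ is finite. Since $t_1$ was arbitrary, $d$ holds.

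I expect no genuine obstacle here: the whole content is the containment of the set of pairs in the product of its two projections, which is a one-line set-theoretic fact, together with finiteness of a product of finite sets. Alternatively, one could bypass the direct argument by combining Proposition~\ref{pr:derived-equivalent} with Proposition~\ref{pr:covers}: since $c_1$ implies (via the Hasse diagram) the elementary property $e_1$ from its own chain, the conjunction $c_1\land c_2$ implies $e_1\land c_2$, which is exactly $d$ by Proposition~\ref{pr:derived-equivalent}. Either route is routine, so the only care needed is in matching the roles of $t_2$ and $t_3$ to the correct projections $\pi_1$ and $\pi_2$ for the chosen $k$.
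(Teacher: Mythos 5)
Your proposal is correct, and your primary argument takes a genuinely different route from the paper's. The paper proves the nontrivial direction purely by composing earlier results: by Proposition~\ref{pr:covers}, the complementary property $c_1$ implies the elementary property $e_1$ below it in its own chain, and then $e_1\land c_2$ implies $d$ by Proposition~\ref{pr:derived-equivalent} --- which is precisely the ``alternative'' route you sketch in your closing paragraph. Your main argument is instead direct and self-contained: for fixed $t_1$, the set $\{(t_2,t_3)\mid \tr_k(t_1,t_2,t_3)\}$ is contained in $A\times B$, where $A$ and $B$ are the two projection sets that $c_1$ and $c_2$ assert to be finite, and a subset of a product of finite sets is finite. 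This is a sound and arguably sharper observation than the one underlying the paper's chain of reasoning (the proof of Proposition~\ref{pr:derived-equivalent} instead takes a union of finitely many finite image sets, indexed by the pairs that the complementary property bounds), and it bypasses the elementary properties entirely. What your direct argument buys is independence from Proposition~\ref{pr:derived-equivalent} and a one-line set-theoretic core; what the paper's route buys is economy of presentation and emphasis on the order structure of Figure~\ref{fig:hasse}. You also match the roles of $t_2$, $t_3$ with the projections $\pi_1$, $\pi_2$ correctly (e.g., for $k=1$, initials finiteness $(vii)$ and consequents finiteness $(xii)$ as the two complements of finite branching $(i)$), so there is no gap.
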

\begin{proof}
  The ($\Longrightarrow$) direction is trivial by
  Proposition~\ref{pr:covers}. For the ($\Longleftarrow$) direction, by
  Proposition~\ref{pr:covers} the complementary property $c_i$ (with $i\in
  \{1,2\}$) implies the elementary property $e_i$ such that $e_i\leq c_i$, and
  by Proposition~\ref{pr:derived-equivalent}, $e_i\land c_j$ with $i,j\in
  \{1,2\}$ and $i\not=j$ implies $d$.
\end{proof}
A complementary property in a chain that has a derived property $d$ as maximum
is the difference between $d$ and the elementary property from the adjacent
chain that also has $d$ as maximum. Notice that in order to imply a derived
property, at least one complementary property from either of the adjacent
chains that have the derived one as maximum must take part in the
conjunction. The following proposition illustrates this.
\begin{prop}
  \label{pr:two-elementary}
  A bounded-nondeterminism derived property $d$ is not implied by the
  conjunction of the two elementary properties $e_1$ and $e_2$ that are below
  $d$ in the Hasse diagram of Figure~\ref{fig:hasse}.
\end{prop}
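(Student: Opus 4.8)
The plan is to refute all three implications at once by exhibiting counterexample LTSs, since a derived property $d$ is \emph{not} implied by the conjunction of the two elementary properties $e_1,e_2$ below it precisely when there is an LTS satisfying $e_1\land e_2$ but failing $d$. By Proposition~\ref{pr:covers} and the layout of Figure~\ref{fig:hasse}, every derived property $d$ corresponds to a transformation $\DT_k^\id$ with $1\leq k\leq 3$, and in each case the two elementary properties below $d$ are exactly the two one-dimensional ``slice'' properties associated with $\DT_k$. So a single uniform construction, parametrised by $k$, handles $d=(i),(ii),(iii)$ simultaneously.

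Fix such a $k$ and recall $\tr_k$ from Definition~\ref{def:tr}. Following the terminology of the proof of Proposition~\ref{pr:covers}, the derived property $d$ asserts that for each $t_1$ the set $\{(t_2,t_3)\mid \tr_k(t_1,t_2,t_3)\}$ is finite, while the two elementary properties $e_1,e_2$ below $d$ assert only that for each $t_1$ every slice $\{t_3\mid \tr_k(t_1,t_2,t_3)\}$ (fixing $t_2$) and every slice $\{t_2\mid \tr_k(t_1,t_2,t_3)\}$ (fixing $t_3$) is finite. The conceptual crux is the elementary observation that an \emph{infinite} relation between $t_2$ and $t_3$ can nevertheless have all of its slices finite: a diagonal achieves exactly this. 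Concretely, over the signature $\Sigma_0$ of Notation~\ref{not:sigma-zero} I would fix the value $l_0$ for the component $t_1$ and take the transition relation
\[
\to \;=\; \{\, \tr_k(l_0,f(l_i),g(l_i)) \mid i\in\NN \,\}.
\]

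Since the terms $f(l_i)$ are pairwise distinct and so are the terms $g(l_i)$, for each fixed second component there is exactly one matching third component and conversely; hence every slice is a singleton and both $e_1$ and $e_2$ hold, whereas $\{(t_2,t_3)\mid \tr_k(l_0,t_2,t_3)\}=\{(f(l_i),g(l_i))\mid i\in\NN\}$ is infinite, so $d$ fails. Instantiating $k=1,2,3$ then yields, respectively, an image-finite and label-finite LTS that is not finitely branching, a source-finite and label-finite LTS that is not finite-folding, and an image-finite and source-finite LTS that is not finite-bundling, which is what Proposition~\ref{pr:two-elementary} claims.

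I do not expect a genuine obstacle here, as the argument is a direct counterexample; the only point requiring care is to confirm that a single $\tr_k$-instance really makes all three triadic components vary as intended, so that the two one-dimensional slices collapse to singletons while the two-dimensional relation stays infinite. Because $\tr_k$ is merely a permutation of the three roles (Definition~\ref{def:tr}), this reduces to the choice of the diagonal $\{(f(l_i),g(l_i))\mid i\in\NN\}$ with distinct entries in each coordinate, after which the verification of $e_1$, $e_2$ and the failure of $d$ is immediate from Definition~\ref{def:properties}.
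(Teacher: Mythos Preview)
Your proposal is correct and follows essentially the same approach as the paper: both exploit a ``diagonal'' construction in which a single fixed value of the distinguished component $t_1$ is related to infinitely many pairs $(t_2,t_3)$ that determine each other uniquely, so every one-dimensional slice is a singleton while the two-dimensional set is infinite. The paper only spells out the $k=1$ instance (the LTS $\{p_0\rel{l_i}p_i\mid i\in\NN\}$, which is image finite and label finite but not finitely branching) and leaves the remaining cases implicit, whereas your parametrisation via $\tr_k$ handles all three derived properties uniformly; this is a mild presentational improvement but not a different idea.
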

\begin{proof}
  By Definition~\ref{def:properties} the derived property $d$ corresponds to a
  dyadic transformation $\DT_k^\id$ with $1\leq k\leq 3$. The derived property
  $d$ requires that one component $t_1$ branches finitely into the components
  $t_2$ and $t_3$, \ie, for every $t_1$, the set $\{(t_2,t_3)\mid
  \tr_k(t_1,t_2,t_3)\}$ is finite. The elementary properties $e_1$ and $e_2$
  that are below $d$ respectively state that, given a $t_1$, the $t_2$ and
  $t_3$ branch finitely into each other, \ie, for every $t_1$ and $t_2$ the
  set $\{t_3\mid \tr_k(t_1,t_2,t_3)\}$ is finite, and for every $t_1$ and
  $t_3$ the set $\{t_2\mid \tr_k(t_1,t_2,t_3)\}$ is finite. But for each
  $t_1$, the $t_2$ and the $t_3$ could be infinitely many and still branch
  finitely into each other.

  This idea can be exploited to construct counter-examples to the implication
  $e_1\land e_2\implies d$. By way of example, consider the triadic LTS $L$
  given by $\{p_0\rel{l_i}p_i\mid i\in\NN\}$. $L$ is image finite (property
  $(iv)$ in Figure~\ref{fig:hasse}) because for each $p_i$ and $l_j$ there is
  at most one $p_\ell$ such that $p_i\rel{l_j}p_\ell$ (\ie, $i=0$, $j=\ell$),
  and $L$ is labels finite (property $(vi)$ in Figure~\ref{fig:hasse}) because
  for each $p_i$ and $p_\ell$ there is at most one $l_j$ such that
  $p_i\rel{l_j}p_\ell$ (\ie, $i=0$, $j=\ell$). However, $L$ is not finitely
  branching because for $p_0$ there are infinitely many pairs $(l_i,p_i)$ such
  that $p_0\rel{l_i}p_i$.
\end{proof}

For further illustration, the equivalence class of property $(i)$ up to
logical equivalence is
% $\{(i), (xii) \land (vi), (vii) \land (iv), (xii) \land (vii)\}$, and property
% $(i)$ is implied by any of the properties in $\{(xii) \land (viii), (xii)
% \land (ii), (vii) \land (x), (vii) \land (iii)\}$.
\begin{displaymath}
  \{(i), (xii) \land (vi), (vii) \land (iv), (xii) \land (vii)\},
\end{displaymath}
and property $(i)$ is implied by any of the properties in
\begin{displaymath}
  \{(xii) \land (viii), (xii) \land (ii), (vii) \land (x), (vii) \land (iii)\}.
\end{displaymath}

In the light of the dyadic transformations, the rule format in \cite{FV03} for
finite branching is enough to guarantee $\DT_k^\prj$-finiteness of the LTS
associated to a triadic TSS with terms as labels. That is, by applying each of
the dyadic transformations as a pre-processing step, that rule format ensures
all the derived, elementary and complementary bounded-nondeterminism
properties in this section.

Next we show how to improve on the rule format of \cite{FV03} to cover more
TSSs than the ones covered there. Sections~\ref{sec:junk-rules-eta-types} and
\ref{sec:refining-eta-types} explain how the set of finitely branching TSSs
that meet the conditions of the rule format can be enlarged by considering
junk rules (\ie, rules that cannot give rise to transitions), and by refining
the $\eta$-types of \cite{FV03} such that some junk rules do not have
$\eta$-type. Section~\ref{sec:s-types} introduces the $S$-types, which adapt
these refinements on the $\eta$-types to dyadic TSSs. Finally,
Section~\ref{sec:rule-formats} presents rule formats that are based on the
$S$-types and guarantee $\DT_k^\prj$-finiteness of the LTS associated with a
triadic TSS with terms as labels.

%%%%%%%%%%%%%%%%%%%%%%%%%%%%%%%%%%%%%%%%%%%%%%%%%%%%%%%%%%%%%%%%%%%%%%%%%%%
%% Junk rules
\section{Junk rules and the $\eta$-types}
\label{sec:junk-rules-eta-types}
A rule format is a set of conditions on the rules of a TSS that ensure that
the associated LTS enjoys some property of interest (\eg, being finitely
branching). A rule format imposes \emph{sufficient conditions} on the TSS
for the associated LTS to enjoy the property. In devising a rule format there is
a trade-off between generality and ease of use: the syntactic conditions typically
do not cover all the TSSs whose associated LTSs have the property of interest,
but only a reasonably large class of those that do so. For instance, a TSS might 
contain rules that do not give rise to transitions (the so-called junk rules) and do 
not fit the rule format. In general, the conditions of 
a rule format could be weakened in order to enlarge the class of TSSs that are covered
by it and enjoy the property of interest. However, this usually leads to more complex conditions.

One of our goals in this paper is to improve the rule format for bounded nondeterminism in \cite{FV03} so that
it can filter out more junk rules. To do so, in this section we will explain the
notion of $\eta$-type from \cite{FV03} and how it can be used to filter out some junk rules. 
Next, in Section~\ref{sec:refining-eta-types}, we will refine the notion of $\eta$-type from \cite{FV03} 
in order to increase the number of TSSs that our rule format (presented in Section~\ref{sec:rule-formats}) covers.

\begin{note}
  The following considerations about junk rules are orthogonal to the dyadic
  transformations that we have presented before. Therefore, for the sake of
  clarity in the presentation, in the remainder of the section and in
  Section~\ref{sec:refining-eta-types} we consider triadic TSSs with ground
  labels.

  Familiarity with the rule format for finite branching from \cite{FV03} will
  help in appreciating this section. However, we have striven to make the
  presentation understandable even with no prior knowledge of that paper.
\end{note}
%%%%%%%%%%%%%%%%%%%%%%%
Intuitively, the rule format for bounded nondeterminism in~\cite{FV03} places
restrictions on the allowed rules ensuring that, for each closed term $p$,
\begin{enumerate}
\item the rules in the TSS do not allow one to simulate `unguarded recursion'
  for $p$,
\item only finitely many rules can be employed to derive transitions from $p$,
and
\item each rule can only be used to infer finitely many transitions from $p$.
\end{enumerate}
The third property is checkable for each rule in isolation and is embodied in
the requirement that the TSS be in bounded nondeterminism format (see
Definition~\ref{def:bounded-nondeterminism-format} to follow). This format
essentially ensures that the transitions of $p$ that can be derived from a
rule depend only on the transitions of the closed terms that are substituted
for the variables that occur in the source of the rule. For example, rules of
the form
  \begin{mathpar}
    \inferrule*[]
    {~}
    {f(x) \rel{a} y}
    \and
    % {\color{white}
    % \inferrule*
    % {{\color{black}\normalsize\text{is transformed into}}}{ }}
    % \and
    \inferrule*[right={\normalsize .}]
    {y\rel{a}y'}
    {f(x)\rel{a}y'}
  \end{mathpar}
are \emph{not} in bounded nondeterminism format because the instantiation of
the variable $y$ can be chosen arbitrarily and thus its behaviour need not
have any connection with that of the closed term that is substituted for the
variable $x$. It is easy to see how rules of that form can be used to derive
infinitely many transitions from a term of the form $f(p)$. 

On the other hand, the first and the second properties are `global' and need
to be checked for sets of rules. The existence of a strict stratification (see
Definition~\ref{def:partial-strict-stratification}) enforces the first
property, while the second is guaranteed by the requirement that the TSS be
bounded (see Definition~\ref{def:bounded-new} for our version of that
notion). In order to define the notion of bounded TSS, Fokkink and Vu classify
the transition rules in a TSS according to their so-called
$\eta$-types. 

The intuition behind the use of the $\eta$-types, presented in Definition~11
of \cite{FV03}, is that they enforce that the number of rules that give rise
to transitions from a given process is finite.  The $\eta$-types are
parametric on a map $\eta:\oTerm\to\pset_\omega(\oTerm)$ that specifies a
predefined, finite set of terms $\eta(t)$ for each \mbox{source $t$}. For the
purpose of this discussion, our readers can think of $\eta(t)$ as being the
set of the sources of premises in rules having $t$ as the source of the
conclusion; intuitively, those are the terms on which initial transitions from
instantiations of $t$ may depend. By way of example, let $A$ be a countably
infinite set of actions, and consider the (infix) binary operator $+$ from CCS
in \cite{Mil89} with rules \label{ex:example-1}
  \begin{mathpar}
    \inferrule*[left=L$_a$,right={\normalsize $,\qquad a\in A$}]
    {x\rel{a}x'}
    {x+y\rel{a}x'}
    \and
    \inferrule*[left=R$_a$,right={\normalsize $,\qquad a\in A.$}]
    {y\rel{a}y'}
    {x+y\rel{a}y'}
  \end{mathpar}
The initial behaviour of an instantiation of $x+y$ depends only on that of the
terms (substituted for) $x$ and $y$. Thus, based on what we said earlier, it
is natural to define $\eta(x+y)=\{x,y\}$. We say that $\eta$ is the
\emph{support map} and call $\eta(t)$ the \emph{support for source $t$}. 

Since $A$ is infinite, the set of rules having $x+y$ as source is also
infinite. However, when the closed terms $p$ and $q$ are finitely branching,
only finitely many rules can be used to derive transitions from $p+q$. The
technical notion of $\eta$-type is meant to capture this fact. Two rules with
the same source $t$ have different $\eta$-type if, and only if, for some term
$u$ in the support for $t$, the rules differ in the respective sets of labels
that appear in their premises with source $u$. An $\eta$-type is written
$\langle t,\psi\rangle$, where $t$ is the source of the rule and
$\psi:\eta(t)\to\pset_\omega(\Label)$ is the map that delivers for each term
$u$ in the support for $t$, a finite set of labels. Intuitively, $\psi(u)$ is
the set of labels in the premises with source $u$.

Continuing with our example, the $\eta$-type of rule \textsc{L$_a$} is
$\langle x+y,\{x\mapsto\{a\},y\mapsto\emptyset\}\rangle$, and the $\eta$-type
of rule \textsc{R$_a$} is $\langle
x+y,\{x\mapsto\emptyset,y\mapsto\{a\}\}\rangle$. Note that different rules
have different $\eta$-types.
%%%%%%%%%%%%%%%%%%%%%%%%%%%%%%%%
%%%%%%%%%%% The message of this example is now in the expanded
%%%%%%%%%%% running text. 
\iffalse
\begin{exmp}
\label{ex:example-1}
  Let $A$ be a countably infinite set of actions. Consider the (infix) binary operator
  $+$ from CCS in \cite{Mil89} with rules
  \begin{mathpar}
    \inferrule*[left=L$_a$,right={\normalsize $,\qquad a\in A$}]
    {x\rel{a}x'}
    {x+y\rel{a}x'}
    \and
    \inferrule*[left=R$_a$,right={\normalsize $,\qquad a\in A.$}]
    {y\rel{a}y'}
    {x+y\rel{a}y'}
  \end{mathpar}
  Let $\eta(x+y)=\{x,y\}$. The $\eta$-type of rule \textsc{L$_a$} is $\langle
  x+y,\{x\mapsto\{a\},y\mapsto\emptyset\}\rangle$, and the $\eta$-type of rule
  \textsc{R$_a$} is $\langle x+y,\{x\mapsto\emptyset,y\mapsto\{a\}\}\rangle$.
\end{exmp}
\fi
%%%%%%%%%%%%%%%%%%%%%%%%%%%%%%%
The $\eta$-types of the rules in a TSS that meet the conditions of the rule
format of \cite{FV03} are required to be finitely inhabited. That is, for each
$\eta$-type $\langle t,\psi\rangle$ there are only finitely many rules having
$\langle t,\psi\rangle$ as their $\eta$-type. This constraint is met by the
$\eta$-types of the rules for the $+$ operation.

The rule format from \cite{FV03} requires that the images in the codomain of
both maps $\eta$ and $\psi$ be finite. The requirement that, for each term
$t$, the support $\eta(t)$ is finite is a necessary condition for the
associated LTS to be finitely branching. The following example illustrates
why.
\begin{exmp}
  Consider the following TSS over the signature $\Sigma_0$ of
  Notation~\ref{not:sigma-zero}, where $L= \{l_i\mid i\in\NN\}$:
  \begin{mathpar}
    \inferrule*[right={\normalsize $,\qquad i\in\NN$}]
    { }
    {l_i\rel{l_i}l_i}
    \and
    \inferrule*[right={\normalsize $,\qquad i\in\NN$.}]
    {l_i\rel{l_i}l_i}
    {f(x)\rel{l_i}l_i}
  \end{mathpar}
  Notice that a source $f(p)$ (with $p\in\cTerm$) affords infinitely many
  transitions ${f(p)\rel{l_i}l_i}$ (with $i\in\NN$) and the associated LTS is
  infinitely branching. However, if we allowed the sets in the codomain of
  $\eta$ to be infinite, then we could have $\eta(f(x))=L$ and the $i$th
  instantiation of the rule template on the right would have $\eta$-type
  $\langle f(x),\psi_i\rangle$, where $\psi_i$ is such that
  $\psi_i(l_i)=\{l_i\}$, and $\psi_i(l_j)=\emptyset$ if $j\neq i$.  The
  $\eta$-types would be finitely inhabited and the rule format would not
  guarantee finite branching of the associated LTS.

  The requirement that $\eta(f(x))$ be finite enforces that the infinitely
  many instantiations of the rule template on the right, which all share the
  same source $f(x)$, must be distinguished by focusing only on finitely many
  sources of their premises. Since the set of possible sources of their
  premises is infinite, \ie, $\{l_i\mid i\in\NN\}$, it is actually impossible
  to distinguish all of them, and the instantiations whose premises do not
  have their source in the finite support set $\eta(f(x))$ have one and the same
  \mbox{$\eta$-type} $\langle f(x),\{t\mapsto\emptyset\mid
  t\in\eta(f(x))\}\rangle$, which is infinitely inhabited. Suitably, the TSS
  does not meet the conditions of the rule format from \cite{FV03}.
\end{exmp}

On the contrary, the requirement that, for each rule with source $t$ and for
each term $u\in\eta(t)$, the set of labels in the premises with source $u$ is
finite is not a necessary condition for the associated LTS to be finitely
branching.
%%%%%%%%%%%%%%%%%
%%\footnote{We noticed this fact in our previous work \cite{AcetoGI16} and
%%  removed the requirement from our rule format there. Alas, including the
%%  requirement \emph{is beneficial} for the rule format to cover more
%%  cases. Here we rectify and reinstate the requirement as originally in
%%  \cite{FV03}.} 
%%%%%%%%%%%%%%%%%
However, the rule format of \cite{FV03} would report more
false negatives if this requirement were omitted. That is, more TSSs would be
considered as not satisfying the conditions for finite branching although
their associated LTSs are finitely branching. The reason is the presence of a
certain class of junk rules. Again, let us illustrate this point with an example.
\begin{exmp}
  \label{ex:infinite-premises}
  Consider the following TSS over the signature $\Sigma_0$ of
  Notation~\ref{not:sigma-zero}, where $L= \{l_i\mid i\in\NN\}$:
  \begin{mathpar}
    \inferrule*[]
    { }
    {l_1\rel{l_1}l_1}
    \and
    \inferrule*[right={\normalsize $,\qquad i\in\NN$.}]
    {\{l_1\rel{l_j}l_j\mid j\in \NN\}}
    {f(l_1)\rel{l_i}l_i}
  \end{mathpar}
  The associated LTS is finitely branching because the only provable transition
  is ${l_1\rel{l_1}l_1}$. The instantiations of the rule template on the right
  have infinitely many premises with source $l_1$ and label and target $l_j$
  where $j\in\NN$. However, none of the instantiations can take part in a
  proof tree because the infinitely many distinct premises share source $l_1$
  and the transition $l_1\rel{l_1}l_1$ cannot prove all of them (\ie, all the
  instantiations are junk rules). Assume ${\eta(f(l_1))=\{l_1\}}$. Happily,
  none of the instantiations of the rule template on the right has a valid
  \mbox{$\eta$-type}, because no $\psi$ exists such that the finite set
  $\psi(l_1)$ contains every label in the premises with source $l_1$ (\ie,
  $\psi(l_1)\not=\{l_j\mid j\in\NN\}$). Therefore the $\eta$-types are
  finitely inhabited and the TSS meets the conditions of the rule format from
  \cite{FV03}. On the other hand, if $\psi(l_1)$ were allowed to be equal to
  $\{l_j\mid j\in\NN\}$, then all the instantiations of the rule template on
  the right would have the same $\eta$-type $\langle f(x),\{l_1\mapsto
  L\}\rangle$, which would be infinitely inhabited.
\end{exmp}

%%%%%%%%%%%%%%%%%%%%%%%%%%%%%%%%%%%%%%%%%%%%%%%%%%%%%%%%%%%%%%%%%%%%%%%%%%%
%% Refining eta-types
\section{Refining the $\eta$-types}
\label{sec:refining-eta-types}

Although the rule format of \cite{FV03} covers TSSs such as the one in
Example~\ref{ex:infinite-premises}, it falls short of covering TSSs with
other kinds of junk rules. In the following subsections we analyze some of
these cases and provide successive refinements to the $\eta$-types of
\cite{FV03} that we will incorporate in the rule format we present in
Section~\ref{sec:rule-formats}.
% such that the resulting rule format covers these cases.

\subsection{Restricted support}
\label{sec:restricted-support}
A rule of a TSS whose premises do not unify with conclusions of other rules
in the TSS is certainly a junk rule, since its premises will never be
true. But this fact is not always ascertained by not having a valid
$\eta$-type, since the $\eta$-types rely only on local information. The
following example illustrates this observation.
\begin{exmp}
  \label{ex:restricted-support}
  Consider the following TSS over the signature $\Sigma_0$ of
  Notation~\ref{not:sigma-zero}:
  \begin{mathpar}
    \inferrule*[]
    { }
    {g(l_1)\rel{l_1}l_1}
    \and
    \inferrule*[right={\normalsize $,\quad\quad i\in\mathbb{N}$.}]
    {g^i(x)\rel{l_i}x}
    {f(x)\rel{l_i}x}
  \end{mathpar}
  Here $g^i$ stands for applying the unary function symbol $g$ to its argument
  $i$ times. Notice that the only provable transitions are
  $g(l_1)\rel{l_1}l_1$ and $f(l_1)\rel{l_1}l_1$, and therefore the TSS induces
  an LTS that is finitely branching. Assume that the TSS has a support map such
  that
  \begin{displaymath}
    \begin{array}{rcll}
      \eta(g(l_1))&=&\emptyset , \\
      \eta(f(x))&=&\{g(x)\} &\text{and}\\
      \eta(t)&=&\emptyset&\text{\ow{.}}
    \end{array}
  \end{displaymath}
  Clearly, all the instantiations of the rule template on the right with
  $i\not=1$ are junk rules because their premises do not unify with the
  conclusion of any rule in the TSS. However, all these instantiations have
  $\langle f(x),\emptyset\rangle$ as their \mbox{$\eta$-type}, which is
  infinitely inhabited. (Notice that no $\eta$ exists such that the
  $\eta$-types would be finitely inhabited.) Thus the TSS does not meet the
  conditions of the rule format of \cite{FV03}, even though the associated LTS is finitely branching.
\end{exmp}

\begin{obs}
  \label{ob:restricted-support}
  {\it The $\eta$-types have to be refined so that the support map $\eta$ be
    restricted to deliver only sets of terms that unify with sources of rules
    in the TSS. According to this, a rule with source $t$ should only have a
    valid $\eta$-type if the set of sources of its premises is a subset of
    the restricted support $\eta(t)$.}
\end{obs}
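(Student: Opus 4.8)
The plan is to read Observation~\ref{ob:restricted-support} as a correctness claim about a design refinement and to justify it from two sides: that the refinement is needed, and that it is safe. For necessity I would revisit Example~\ref{ex:restricted-support}. There every instantiation of the right-hand template with $i\neq 1$ is junk, because a premise of the form $g^i(x)\rel{l_i}x$ with $i\geq 2$ admits no provable instance: any such instance would have to unify with the conclusion of some rule (Definition~\ref{def:proof-tree}), yet neither conclusion $g(l_1)\rel{l_1}l_1$ nor $f(x)\rel{l_i}x$ has a source matching $g^i(x)$ for $i\geq 2$. Nevertheless, since the premise source $g^i(x)$ lies outside the finite support $\eta(f(x))=\{g(x)\}$, all of these junk rules collapse onto the single $\eta$-type $\langle f(x),\emptyset\rangle$, which is infinitely inhabited, and no finite support map can separate the infinitely many distinct premise sources $g^i(x)$. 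Hence the unrefined format necessarily rejects this finitely branching TSS, which shows that a refinement is unavoidable.

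For the refinement itself I would argue that restricting $\eta(t)$ to contain only terms that unify with a rule source, and granting a valid $\eta$-type to a rule with source $t$ only when the sources of its premises lie within $\eta(t)$, denies an $\eta$-type to precisely the offending junk rules. The crucial step is the following: a premise $u\rel{m}u'$ can be satisfied in a proof tree only if some instance of it is provable, and a provable transition must unify with the conclusion of some rule; consequently the premise source $u$ unifies with the source of that rule's conclusion. Contrapositively, if a premise source of a rule unifies with no rule source, the rule can never appear in a proof tree and is junk. Under the refinement such a source is inadmissible in the restricted support, so the containment requirement fails and the rule is filtered out. Applied to Example~\ref{ex:restricted-support}, the premise sources $g^i(x)$ with $i\geq 2$ unify neither with $g(l_1)$ nor with $f(x)$, so only the $i=1$ instantiation keeps a valid $\eta$-type and the $\eta$-types become finitely inhabited.

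The delicate point, and the step I would treat with most care, is safety in the opposite direction: the refinement must not strip a valid $\eta$-type from a rule that genuinely fires, lest the finiteness accounting underpinning the format break down. Here I would use the same observation in its positive form: whenever every premise of a rule is satisfiable, each of its premise sources unifies with the source of some rule, and is therefore admissible in the restricted support. One may then place all such sources in $\eta(t)$ without violating the unify-with-source restriction, so the rule retains its $\eta$-type; the separate requirement that $\eta(t)$ be finite is orthogonal to, and unaffected by, this restriction. The upshot is that the restriction removes $\eta$-types only from rules with an unsatisfiable premise, that is, from junk rules, so the refined notion classifies a strictly larger family of finitely branching TSSs without compromising soundness, which is exactly what Observation~\ref{ob:restricted-support} prescribes.
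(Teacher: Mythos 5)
Your reconstruction is correct, and it tracks the paper's own (largely informal) justification closely: the paper ``proves'' this observation exactly as you do, by pointing back to Example~\ref{ex:restricted-support} (all instantiations with $i\neq 1$ are junk yet share the infinitely inhabited $\eta$-type $\langle f(x),\emptyset\rangle$, and no finite support can separate the sources $g^i(x)$) and then checking, in the paragraph following the observation, that under the restricted support $\eta(f(x))=\{g(x)\}$ only the $i=1$ instantiation retains a valid $\eta$-type. Where you go beyond the paper's local discussion is in making explicit the two general facts that license the refinement: (a) a rule with a premise source that unifies with no rule source can never occur in a proof tree, and (b) a rule all of whose premises are satisfiable keeps a valid $\eta$-type under the restriction. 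The paper does establish both, but not here: fact (a) is Lemma~\ref{lem:junk-rules}, stated in Section~\ref{sec:s-types} in terms of the partial strict stratification $S$ (a premise source with $S(\tau\sigma(v))=\bot$ cannot unify with any rule source, by condition (i) of Definition~\ref{def:partial-strict-stratification}), and fact (b) is embedded in the inductive proof of Theorem~\ref{thm:dk-finiteness}, where rules that give rise to transitions are shown to have valid $S$-types because their premise sources automatically land in the $S$-restricted support of Definition~\ref{def:restricted-support}. Your version argues both facts directly via unification, which is more elementary and self-contained; the paper's detour through $S$ is not redundancy, however, since the stratification (in particular the strict decrease in condition (ii)) is what powers the well-founded induction in Theorem~\ref{thm:dk-finiteness}, something plain unifiability cannot provide. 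One minor looseness worth flagging: your closing claim that the restriction removes $\eta$-types \emph{only} from junk rules holds for the unify-with-sources restriction in isolation (given a suitably chosen $\eta$), but once the finiteness requirement on $\eta(t)$ is reinstated, rules with infinitely many distinct (even unifiable) premise sources are also rejected; you correctly note this is an orthogonal, pre-existing requirement, so it does not affect the soundness of the observation.
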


Take the support map $\eta$ of Example~\ref{ex:restricted-support} above. That
function $\eta$ meets the condition of the restricted support because every
term in a set delivered by $\eta$ (\ie, $g(x)\in\eta(f(x))$) unifies with some
rule in the TSS (\ie, $\sigma(g(x))=g(l_1)$, with $\sigma=\{x\mapsto
l_1\}$). The instantiations of the rule template on the right with $i\not=1$
should not have a valid $\eta$-type because $g^i(x)\not\in\eta(f(x))$ when
$i\neq 1$.

\subsection{Targets of premises are important too}
\label{sec:targets}
Two rules with the same source $t$ have different $\eta$-types if there is some term in $\eta(t)$
for which the two rules have premises with different actions. For instance, rules \textsc{L$_a$} and \textsc{R$_a$}
on page~\pageref{ex:example-1} have different $\eta$-types because the only premise in \textsc{L$_a$} is $y\rel{a}y'$, whereas  
the only premise in \textsc{R$_a$} is $x\rel{a}x'$.

Sometimes it is convenient to distinguish rules with the same source
not only by focusing on the actions of their premises but also by looking at
the targets of their premises. This is shown in the following example.
\begin{exmp}
  \label{ex:targets}
  Consider the following TSS over the signature $\Sigma_0$ of
  Notation~\ref{not:sigma-zero}:
  \begin{mathpar}
    \inferrule*[left=1]
    { }
    {g(l_1)\rel{l_1}l_1}
    \and
    \inferrule*[left=2$i$,right={\normalsize $,\quad\quad i\in\mathbb{N}$.}]
    {g(x)\rel{l_1}l_i}
    {f(x)\rel{l_1}l_i}
  \end{mathpar}
  % Assume that the TSS has a support map given by
  % \begin{displaymath}
  %   \begin{array}{rcl}
  %     \eta(g(l_1))&=&\emptyset\\
  %     \eta(f(x))&=&\{g(x)\}.
  %   \end{array}
  % \end{displaymath}
  The TSS generates an LTS that is finitely branching since the provable
  transitions are just $g(l_1)\rel{l_1}l_1$ and $f(l_1)\rel{l_1}l_1$. Clearly,
  all the instantiations of the rule template on the right with $i\not=1$ are
  junk rules, because the axiom on the left only allows one to prove the
  transition $g(l_1)\rel{l_1}l_1$, which can only make true the premise of the
  instantiation with $i=1$. Pick some $\eta$. The $\eta$-type of rule
  \textsc{L} is
  \begin{displaymath}
    \langle g(l_1), \{t\mapsto\emptyset\mid t\in\eta(g(l_1))\}\rangle.
  \end{displaymath}
  The $\eta$-type of rule \textsc{2}$i$ is $\langle f(x), \psi_i\rangle$ where
  $\psi_i:\eta(f(x))\to\pset_\omega(L)$ is such that
  \begin{displaymath}
    \begin{array}{rcl}
      \psi_i(t)&=&
      \left\{
        \begin{array}{ll}
          \emptyset&\text{if}\ t\not=g(x)\\
          \{l_1\}&\text{if}\ t=g(x)
        \end{array}\right.
    \end{array}
  \end{displaymath}
  Since the premises of rules \textsc{2}$i$ with $i\in\NN$ only differ in
  their targets, all the $\psi_i$ are equal. Therefore, all the rules
  \textsc{2}$i$ with $i\in\NN$ have the same $\eta$-type $\langle f(x),
  \psi_i\rangle$, which is infinitely inhabited, and thus the TSS is not in
  the rule format of \cite{FV03}. In words, the rule format does not cover the
  TSS above, although the associated LTS is finitely branching. Notice that a
  rule format with the previously mentioned refinement of the restricted
  support in Observation~\ref{ob:restricted-support} would not cover the TSS
  either, because the support map $\eta$ defined as $\langle f(x),\{g(x)\mapsto\{l_1\}\}\rangle$
  meets the conditions of the
  restricted support (\ie, $\{g(x)\}\subseteq\eta(f(x))$), and every
  instantiation would have valid (refined) $\eta$-type.
\end{exmp}

\begin{obs}
  \label{ob:targets} {\it The $\eta$-types have to be refined in order to
    consider the targets of premises. Following this idea, for a rule with
    $\eta$-type $\langle t,\psi\rangle$, the map $\psi$ should deliver, for
    each source $u$ in the support $\eta(t)$, the set of pairs of labels and
    targets in the premises of the rule with source $u$.}
\end{obs}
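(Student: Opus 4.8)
The plan is to read Observation~\ref{ob:targets} as the claim that recording only the \emph{labels} of premises in the auxiliary map $\psi$ is too coarse to separate junk rules that differ solely in the \emph{targets} of their premises, and that augmenting $\psi$ so that it delivers, for each source $u\in\eta(t)$, the finite set of label--target pairs occurring in premises with source $u$, repairs this defect while remaining consistent with the finite-inhabitation requirement. I would justify it exactly as Observation~\ref{ob:restricted-support} is justified on Example~\ref{ex:restricted-support}: by instantiating both the old and the refined notions on the concrete TSS of Example~\ref{ex:targets}, and then checking that the refinement cannot wrongly admit an infinitely branching LTS.

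First I would establish \emph{necessity}, i.e.\ that the label-only $\eta$-types (and even their restricted-support strengthening from Observation~\ref{ob:restricted-support}) fail on Example~\ref{ex:targets}. All the rules \textsc{2}$i$ share the source $f(x)$ and carry a single premise $g(x)\rel{l_1}l_i$ whose \emph{label} is uniformly $l_1$; only the \emph{target} $l_i$ varies with $i$. Hence a label-only $\psi$ yields $\psi_i(g(x))=\{l_1\}$ for every $i$, all the $\psi_i$ coincide, and the infinitely many rules \textsc{2}$i$ collapse onto the single $\eta$-type $\langle f(x),\psi_i\rangle$, which is infinitely inhabited. The format of~\cite{FV03} therefore rejects the TSS even though its LTS is finitely branching (the only provable transitions being $g(l_1)\rel{l_1}l_1$ and $f(l_1)\rel{l_1}l_1$). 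I would stress that the restricted-support fix does not rescue this case, since $\{g(x)\}\subseteq\eta(f(x))$ already holds, so no further discrimination is gained from it.

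Second I would verify \emph{sufficiency on the example} under the proposed refinement. With $\psi(u)$ now the finite set of label--target pairs of premises with source $u$, we obtain $\psi_i(g(x))=\{(l_1,l_i)\}$, so distinct indices $i\neq j$ give $\psi_i\neq\psi_j$. Each rule \textsc{2}$i$ thus receives its own $\eta$-type $\langle f(x),\psi_i\rangle$, inhabited by exactly one rule; all refined $\eta$-types are finitely inhabited and the TSS is now covered, in agreement with the finite branching of its LTS.

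The step I expect to be the main obstacle is \emph{preservation of soundness in general}: by recording strictly more information about premises, the refinement makes finite inhabitation easier to achieve, so I must argue that it never admits a genuinely infinitely branching LTS. The threat is a rule template whose \emph{provable} premises branch infinitely in their targets. Here I would combine the refinement with the restricted support of Observation~\ref{ob:restricted-support}: once each $u\in\eta(t)$ is forced to unify with the conclusion of some rule and each $\psi(u)$ is required to be finite, the admissible label--target pairs are drawn from the transitions of the (finitely branching) closed instances of $u$, so only finitely many \emph{provable} pairs can arise per origin. The extra resolving power of the target-refinement can therefore only separate \emph{junk} instantiations whose premises are never jointly provable, which is precisely what the observation prescribes; the fully global version of this bookkeeping is then deferred to the $S$-types of Definition~\ref{def:S-types}.
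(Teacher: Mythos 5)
Your first two paragraphs coincide with the paper's own justification of Observation~\ref{ob:targets}: the paper establishes the need for the refinement exactly on Example~\ref{ex:targets} (noting, as you do, that the restricted support of Observation~\ref{ob:restricted-support} gives no extra discrimination there, since $\{g(x)\}\subseteq\eta(f(x))$ already holds), and then checks that with $\psi_i(g(x))=\{(l_1,l_i)\}$ every rule \textsc{2}$i$ receives its own finitely inhabited $\eta$-type. Up to that point your proposal is essentially the paper's argument.

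The gap is in your third paragraph. You claim that, in combination with restricted support, the target-refinement ``can only separate \emph{junk} instantiations whose premises are never jointly provable,'' so that soundness comes for free. This is false, and the paper's very next example is a counterexample: in Example~\ref{ex:uniform-targets}, the rules with premise $x\rel{l_1}y_i$ and conclusion $f(x)\rel{l_1}g^i(y_i)$ all have provable premises (instantiate both $x$ and $y_i$ by $l_1$), none of them is junk, and the restricted-support condition $\{x\}\subseteq\eta(f(x))$ holds; yet the refined $\eta$-types $\langle f(x),\{x\mapsto\{(l_1,y_i)\}\}\rangle$ are pairwise distinct, hence each finitely inhabited, while the associated LTS is infinitely branching since $f(l_1)\rel{l_1}g^i(l_1)$ for every $i\in\NN$. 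The source of the failure is that $\psi$ records \emph{open} terms: the label--target pairs in $\psi(u)$ are not ``drawn from the transitions of the (finitely branching) closed instances of $u$,'' as your argument assumes, but are syntactic objects whose variable names can be varied at will to manufacture distinct types. Restoring soundness therefore requires an additional syntactic condition --- uniformity in the targets of premises (Observation~\ref{ob:uniform-targets}, Definition~\ref{def:uniform-targets}) --- and not merely restricted support; it is this condition, via Proposition~\ref{pr:unify-finite-rules}, that the paper builds into the $S$-types of Definition~\ref{def:S-types} and the hypotheses of Theorem~\ref{thm:dk-finiteness}.
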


Take the rules \textsc{2}$i$ with $i\in\NN$ of Example~\ref{ex:targets} and
consider the map $\eta$ given there. Each premise has a distinct target
$l_i$. Each rule \textsc{2}$i$ has $\eta$-type $\langle f(x), \psi_i\rangle$
where
\begin{displaymath}
  \begin{array}{rcl}
    \psi_i(t)&=&
    \left\{
      \begin{array}{ll}
        \emptyset&\text{if}\ t\not=g(x)\\
        \{(l_1,l_i)\}&\text{if}\ t=g(x)
      \end{array}\right.
  \end{array}
\end{displaymath}
and thus the $\eta$-types are finitely inhabited.

\subsection{Uniformity in the targets of premises}
\label{sec:uniform-targets}
Recall that the rule format in \cite{FV03} requires the TSS to be uniform.  We
paraphrase the notion of uniformity here as the requirement that if $t$ and
$t'$ are sources of any two rules in a TSS that differ only in the names of
the variables that occur in them, then $t$ and $t'$ are exactly the same term.
Uniformity prevents that several rules use distinct names in their sources for
variables that morally should have the same name. A non-uniform use of the
variables in the sources of rules may result in an infinitely branching LTS
even for a TSS whose $\eta$-types are all finitely inhabited. Since the
$\eta$-types of \cite{FV03} are only sensitive to the source of the rule
(notice that the labels are ground there), uniformity needed be enforced only
in the sources of rules. A non-uniform use of variables in other positions
would not jeopardize the finitely-inhabited discipline of the $\eta$-types of
\cite{FV03}, and the rule format there needed not care about those uses.

However, the refinement of the $\eta$-types in Observation~\ref{ob:targets}
that considers the targets of premises makes the $\psi$ map sensitive to the
targets of premises in rules. The following example shows that, additionally
to the variables in the source of the rules, the variables in the targets of
premises have to be used uniformly.
\begin{exmp}
  \label{ex:uniform-targets}
  Assume the existence of infinitely many different variable names $y_i$ with
  $i\in\NN$ that are distinct from variable name $x$. Consider the following
  TSS over the signature $\Sigma_0$ of Notation~\ref{not:sigma-zero}:
  \begin{mathpar}
    \inferrule*[right={\normalsize $,\quad\quad i\in\mathbb{N}$}]
    { }
    {l_i\rel{l_1}l_i}
    \and
    \inferrule*[right={\normalsize $,\quad\quad i\in\mathbb{N}$.}]
    {x\rel{l_1}y_i}
    {f(x)\rel{l_1}g^i(y_i)}
  \end{mathpar}
  Assume that the TSS has a support map given by
  \begin{displaymath}
    \begin{array}{rcll}
      \eta(l_i)&=&\emptyset\\
      \eta(f(x))&=&\{x\}\\
      \eta(t)&=&\emptyset&\text{\ow}
    \end{array}
  \end{displaymath}
  The premises of the different instantiations of the rule template on the
  right only differ in the variable name $y_i$ they use as a target of their
  premises. Each instantiation has a distinct target $g^i(y_i)$. For each $i$
  the instantiation of the rule template would have a different
  \mbox{$\eta$-type} $\langle f(x),{\{x\mapsto\{(l_1,y_i)\}\}}\rangle$
  (considering the refinement in Section~\ref{sec:targets} that takes the
  targets of premises into account) and the refined $\eta$-types would be
  finitely inhabited. In words, the non-uniform use of variable names $y_i$
  yields finitely inhabited $\eta$-types and the TSS would
  meet the conditions of the refined rule format. However, the associated LTS
  is not finitely branching because, for instance, $f(l_1)\rel{l_1}g^i(l_1)$ for
  each $i\in\NN$. Notice that this fact is also true if, additionally, we
  consider the refinement of the restricted support in
  Observation~\ref{ob:restricted-support}, because the set of sources of
  premises for each instantiation of the rule template on the right is a
  subset of the support for $f(x)$ (\ie, $\{x\}\subseteq\eta(f(x))$).

  % In order for the rule format to be
  % sound, different variable names in the targets should not be used to
  % distinguish different premisses, this is, the target should be used
  % uniformly. The TSS above can be written in a uniform way as follows:
  % \begin{mathpar}
  %   \inferrule*[right={\normalsize $,\quad\quad i\in\mathbb{N}$}]
  %   { }
  %   {l_i\rel{l_1}l_i}
  %   \and
  %   \inferrule*[right={\normalsize $,\quad\quad i\in\mathbb{N}$.}]
  %   {x\rel{l_1}y}
  %   {f(x)\rel{l_1}g^i(y)}
  % \end{mathpar}
  % Now the variable name `$y$' is the same in all the instantiations of the
  % rule template on the right, which have the same refined $\eta$-type
  % $\langle f(x),{\{x\mapsto\{y\}\}}\rangle$. This refined $\eta$-type is
  % infinitely inhabited and the TSS will be out of the conditions of the rule
  % format that we introduce in Section\ref{}.
\end{exmp}

\begin{obs}
  \label{ob:uniform-targets}
  {\it If the $\eta$-types are refined so that the map $\psi$ considers the
    sets of pairs of labels and targets in the premises whose sources are in
    the support, then the variables in the targets of premises have to be
    used uniformly.}
\end{obs}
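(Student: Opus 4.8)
The observation is best read as a \emph{necessity} claim about the soundness of the refined finitely-inhabited criterion: once the $\eta$-types are refined as in Observation~\ref{ob:targets} so that $\psi$ records pairs of labels and targets of premises, the refined criterion can only continue to guarantee finite branching if, in addition, the target variables of premises are named uniformly. My plan is to establish this by exhibiting a TSS that would pass the refined criterion yet whose associated LTS is infinitely branching, and then to argue that it is exactly the uniformity convention that rules out such pathological TSSs. This mirrors the justification style already used for the preceding observations, where each refinement is motivated by a TSS that the criterion handles incorrectly.

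First I would take the counterexample of Example~\ref{ex:uniform-targets}: the axiom scheme $l_i\rel{l_1}l_i$ for $i\in\NN$ together with the rule scheme $\{x\rel{l_1}y_i\}/f(x)\rel{l_1}g^i(y_i)$, where the $y_i$ are pairwise distinct variable names, and the support map with $\eta(f(x))=\{x\}$ and $\eta(t)=\emptyset$ otherwise. I would then compute the refined $\eta$-type of the $i$-th instance: its unique premise has source $x$, label $l_1$ and target $y_i$, so its refined $\eta$-type is $\langle f(x),\{x\mapsto\{(l_1,y_i)\}\}\rangle$. Since the $y_i$ are distinct variables, these $\eta$-types are pairwise distinct, each inhabited by a single rule; the refined criterion would therefore accept the TSS.

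Next I would contradict that acceptance by showing the associated LTS is infinitely branching. The axioms make $l_1\rel{l_1}l_1$ provable, so instantiating the $i$-th rule with $x\mapsto l_1$ and $y_i\mapsto l_1$ satisfies its premise and yields $f(l_1)\rel{l_1}g^i(l_1)$ for every $i\in\NN$; as the targets $g^i(l_1)$ are pairwise distinct, $f(l_1)$ has infinitely many outgoing transitions. This shows that, absent the uniformity requirement, the refined criterion is unsound, which establishes that uniformity on the target variables of premises is indeed necessary.

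Finally I would close the loop by spelling out what uniformity buys, and this is where I expect the main conceptual subtlety to lie. If the target variables are forced to coincide (all $y_i$ replaced by a single $y$), then every instance shares the premise $x\rel{l_1}y$ and hence the same refined $\eta$-type $\langle f(x),\{x\mapsto\{(l_1,y)\}\}\rangle$; since conclusions do not enter the $\eta$-type, this single $\eta$-type is now inhabited by infinitely many rules, and the criterion correctly rejects the TSS. The crux of the argument is to make explicit that it is precisely the new sensitivity of $\psi$ to targets, introduced in Observation~\ref{ob:targets}, that opens the loophole: distinct target-variable names manufacture spuriously distinct $\eta$-types, and only a uniformity convention on targets of premises—analogous to the one already imposed on sources in Section~\ref{sec:uniform-targets}—collapses them back, so that genuine infinite branching resurfaces as an infinitely inhabited $\eta$-type.
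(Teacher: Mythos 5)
Your proposal is correct and follows essentially the same route as the paper: it uses the very TSS of Example~\ref{ex:uniform-targets} to show that, without uniformity, the distinct target variables $y_i$ manufacture pairwise distinct, finitely inhabited refined $\eta$-types $\langle f(x),\{x\mapsto\{(l_1,y_i)\}\}\rangle$ even though $f(l_1)\rel{l_1}g^i(l_1)$ holds for all $i\in\NN$, and then observes that renaming all $y_i$ to a single $y$ collapses these into one infinitely inhabited $\eta$-type, so the refined criterion correctly rejects the TSS. This matches the paper's own justification, which consists of Example~\ref{ex:uniform-targets} together with the discussion immediately following the observation.
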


If the TSS in Example~\ref{ex:uniform-targets} were uniform in the targets of
premises, then all the variables $y_i$ would be renamed to $y$. The TSS thus
obtained induces the same LTS as the one above. For every $\eta$, all the
instantiations of the rule template on the right have $\eta$-type $\langle
f(x),\psi\rangle$ where $\psi:\eta(f(x))\to\pset_\omega(L)$ is such that
\begin{displaymath}
  \begin{array}{rcl}
    \psi(t)&=&
    \left\{
      \begin{array}{ll}
        \emptyset&\text{if}\ t\not=x\\
        \{(l_1,y)\}&\text{if}\ t=x
      \end{array}\right.
  \end{array}
\end{displaymath}
and thus the $\eta$-types with the previously mentioned refinements would be
infinitely inhabited.

\section{Introducing the $S$-types}
\label{sec:s-types}
We introduce a variation of the $\eta$-types (we have called them $S$-types)
that incorporate the refinements in Observations~\ref{ob:restricted-support},
\ref{ob:targets} and \ref{ob:uniform-targets}, and that provide some of the
conditions of a rule format for finite branching that covers
Examples~\ref{ex:restricted-support}, \ref{ex:targets} and
\ref{ex:uniform-targets} among others. As far as we are aware, our
refinements are compatible with the TSSs that are covered already by the rule
format on \cite{FV03} (see Section~4 of \cite{FV03} for examples). We assume
triadic TSS with terms as labels and consider the dyadic transformations of
Section~\ref{sec:dyadic-transformation}. The conditions of the rule format are
given for the dyadic TSSs.

Recall the strict stratification of \cite{FV03}. A strict stratification is
defined for all closed terms. We notice that the strict stratification and the
information relative to the restricted support map of
Section~\ref{sec:restricted-support} can be neatly combined into a variation
of the strict stratification that, differently from the one in \cite{FV03},
allows certain terms that do not unify with sources of premises in rules to have
undefined order. We introduce the partial strict stratification of a dyadic
TSS.
\begin{ntn}
  Let $S$ be a partial map, we write $S(p)\not=\bot$ to indicate that $S(p)$
  is defined.
\end{ntn}

\begin{defn}[Partial strict stratification]
  \label{def:partial-strict-stratification}
  Let $R$ be a dyadic TSS and $S$ be a partial map from origins to ordinal
  numbers. $S$ is a partial strict stratification of $R$ iff the following
  conditions hold:
  \begin{enumerate}[(i)]
  \item $S(\sigma(s))\not=\bot$, for every source $s$ of some rule in $R$ and
    for every substitution $\sigma$ that closes $s$.
  \item For every rule $H/s\rel{}d$ in $R$ and for every $v\rel{}w\in H$, it
    holds that $S(\sigma(v))<S(\sigma(s))$ for each substitution $\sigma$ that
    closes $s$ and $v$ such that $S(\sigma(v))\not=\bot$.
  % \item For every rule in $R$ with source $s$ and set of premisses $H$, and
  %   for every $v\rel{}w\in H$, if $\sigma$ is a substitution that closes $s$
  %   and $v$ is such that $S(\sigma(v))\not=\bot$, then
  %   $S(\sigma(v))<S(\sigma(s))$.
  \end{enumerate}
  We say an origin $o$ has order $S(o)$.
\end{defn}

\begin{exmp}
  \label{ex:partial-strict-stratification}
  Consider the following dyadic TSS:
  \begin{mathpar}
    \inferrule*[left=L]
    { }
    {g(l_1)\rel{}(l_1,l_1)}
    \and
    \inferrule*[left=R$i$,right={\normalsize $,\qquad i\in\NN.$}]
    {g^i(x)\rel{}(l_i,x)}
    {f(x)\rel{}(l_1,x)}
  \end{mathpar}
  Define
  \begin{displaymath}
    \begin{array}{rcll}
      S(g(l_1))&=&0\\
      S(f(p))&=&1\\
      S(q)&=&\bot&\text{\ow}
    \end{array}
  \end{displaymath}
  % $S(g(l_1))=0$, $S(f(p))=1$ and $S(q)=\bot$ otherwise.
  Then $S$ is a partial strict stratification in the sense of
  Definition~\ref{def:partial-strict-stratification}.
\end{exmp}

Consider the rule \textsc{R}$2$ in the example above. The partial strict stratification
$S$ given in that example is not defined for terms of the form $g^2(p)$, which are the possible
instantiations of the source of the premise of that rule. This witnesses the fact that \textsc{R}$2$ 
is a junk rule. In general, a rule for which the sources of its premises do not unify with origins that
have a defined order is a junk rule. This is formalized by the following
lemma.
\begin{lem}[Junk rules]
  \label{lem:junk-rules}
  Let $R$ be a dyadic TSS and $S$ be a partial strict stratification of
  $R$. Let $o$ be an origin. Assume that $\rho=H/s\rel{}r\in R$ and
  $\sigma(s)=o$. If for every substitution $\tau$ such that $\tau\sigma$
  closes all the terms in $H$ there exists a premise $v\rel{}w \in H$ such
  that $S(\tau\sigma(v))=\bot$, then $\rho$ cannot unify with the root of a
  proof tree.
\end{lem}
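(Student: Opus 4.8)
The plan is to argue by contradiction. Suppose that $\rho = H/s \rel{} r$ were applied at the root of some proof tree with origin $o$; then every instantiated premise would have to be provable, and I would derive a contradiction by showing, via condition $(i)$ of the partial strict stratification, that the origin of each such provable premise necessarily has a \emph{defined} order, whereas the hypothesis forces some premise origin to have order $\bot$. Only condition $(i)$ of Definition~\ref{def:partial-strict-stratification} is needed; the strict-decrease condition $(ii)$ plays no role here.

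Concretely, I would begin by assuming that $\rho$ justifies the root of a proof tree $P$ whose root origin is $o$. By Definition~\ref{def:proof-tree} there is a closing substitution $\delta$ with $\delta(s)=o$ such that the root of $P$ is $\delta(s)\rel{}\delta(r)$ and the nodes immediately above the root are exactly the instantiated premises $\{\delta(v)\rel{}\delta(w)\mid v\rel{}w\in H\}$. Since each of these nodes heads a subtree of $P$, and any subtree of a proof tree is again a proof tree, every transition $\delta(v)\rel{}\delta(w)$ with $v\rel{}w\in H$ is provable in $R$.

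The first substantive step is to present $\delta$ in the form $\tau\sigma$ required by the hypothesis. Because $\delta(s)=o=\sigma(s)$, the substitutions $\delta$ and $\sigma$ assign the same closed value to every variable occurring in $s$; taking $\sigma$ to be the matcher of $o$ against $s$ (so that $\sigma$ is the identity on the variables of $H$ not occurring in $s$), I would define $\tau$ to send each such remaining variable $y$ to $\delta(y)$. Then $\tau\sigma$ agrees with $\delta$ on all of $\var(H)$, so $\tau\sigma$ closes every term in $H$ and $\tau\sigma(v)=\delta(v)$ for each premise $v\rel{}w\in H$. Applying the hypothesis to this particular $\tau$ yields a premise $v\rel{}w\in H$ with $S(\tau\sigma(v))=S(\delta(v))=\bot$.

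Finally, I would close the argument by looking at that very premise: the transition $\delta(v)\rel{}\delta(w)$ is provable, so it has a proof tree whose root is justified by some rule $H'/s'\rel{}r'\in R$, whence $\delta(v)=\delta'(s')$ for a source $s'$ of a rule in $R$ and a substitution $\delta'$ closing $s'$. Condition $(i)$ of Definition~\ref{def:partial-strict-stratification} then gives $S(\delta(v))=S(\delta'(s'))\neq\bot$, contradicting $S(\delta(v))=\bot$. Hence no such proof tree can exist, and $\rho$ cannot unify with the root of a proof tree. I expect the only delicate point to be the substitution bookkeeping of the third step, namely the choice of $\sigma$ as the canonical matcher so that the root substitution $\delta$ can legitimately be rewritten as $\tau\sigma$; once that is in place, the appeal to condition $(i)$ is immediate.
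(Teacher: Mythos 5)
Your proof is correct and is essentially the paper's own argument spelled out in full: the paper's entire proof consists of the single observation that, by condition (i) of Definition~\ref{def:partial-strict-stratification}, the premise instance $\tau\sigma(v)$ with $S(\tau\sigma(v))=\bot$ cannot unify with the source of any rule of $R$, hence cannot be proved, so $\rho$ cannot justify the root of a proof tree. The substitution bookkeeping you flag as delicate---reading $\sigma$ as the matcher of $o$ against $s$ so that the root substitution $\delta$ factors as $\tau\sigma$---is left implicit in the paper, and your reading is the one under which the lemma is actually invoked in the proof of Theorem~\ref{thm:dk-finiteness}.
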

\begin{proof}
  The term $\tau\sigma(v)$ does not unify with any rule $\rho'\in R$ because for
  all the origins $o'$ that unify with rules it must be the case that
  $S(o')\not=\bot$ by condition (i) in Definition  ~\ref{def:partial-strict-stratification}.
\end{proof}

The connection between a partial strict stratification and the restricted
support map is formalized as follows.
\begin{defn}[$S$-restricted support]
  \label{def:restricted-support}
  Let $R$ be a $\DT_k^\prj$-dyadic TSS and $S$ be a partial strict
  stratification of $R$. The $S$-restricted support map is the map
  $\eta:\oSource[\prj]{k}\to\pset(\oSource[\prj]{k})$ given by
  \begin{displaymath}
    \eta(s)=\{v\mid \exists\sigma.~S(\sigma(v))\not=\bot\
    \text{and $v\rel{}w$ is a premise in a rule with source
      $s$}\}.
  \end{displaymath}
  % \begin{align}
  %   \eta(s)=\{v\mid \exists\sigma.~ &S(\sigma(v))\not=\bot\
  %   \land\ S(\sigma(v))<S(\sigma(s))\nonumber\\
  %   &\land\ \text{$v\rel{}w$ is a premiss in some $\rho\in R$ with source
  %     $s$}\}.\nonumber
  % \end{align}
\end{defn}

For every $v$ in an $S$-restricted support $\eta(s)$, there is a premise of the form 
$v\rel{}w$ in some rule with source $s$ and $v$ unifies with an origin $o$ such that
$S(o)\not=\bot$. By Definition~\ref{def:partial-strict-stratification}, for
each $\sigma$ that closes both $v$ and $s$, $S(\sigma(v))<S(\sigma(s))$. 

For illustration, take the TSS in Example~\ref{ex:partial-strict-stratification}
and the partial strict stratification $S$ defined there. The $S$-restricted
support map is such that
\begin{displaymath}
  \begin{array}{rcll}
    \eta(g(l_1))&=&\emptyset,\\
    \eta(f(x))&=&\{g(x)\}&\text{and}\\
    \eta(t)&=&\emptyset&\text{\ow{.}}
  \end{array}
\end{displaymath}

Note that only rule \textsc{R}$1$ has a set of sources of premises that is included in $\eta(f(x))$.

The $S$-types that we introduce below rely on the notion of a partial strict
stratification. The $S$-types are no longer parametric on a support map $\eta$
(as was the case in \cite{FV03}), but on the partial strict stratification
$S$, which uniquely determines the $S$-restricted support map.
\begin{defn}[$S$-types]
  \label{def:S-types}
  Let $R$ be a $\DT_k^\prj$-dyadic TSS, $S$ be a partial strict stratification
  of $R$, $\eta:\oSource[\prj]{k}\to\pset(\oSource[\prj]{k})$ be the
  associated $S$-restricted support map, and $\rho=H/s\rel{}r$ be a rule of
  $R$ with source $s$ and premises ${H=\{v_i\rel{}w_i\mid i\in I\}}$.

  We say that $\rho$ has $S$-type $\langle s,\psi\rangle$ iff ${\{v_i\mid i\in
    I\}} \subseteq \eta(s)$, and the map $\psi(v)=\{w\mid v\rel{}w\ \text{is a
    premise of $\rho$}\}$ is such that
  $\psi:\eta(s)\to\pset_\omega(\oTarget[\prj]{k})$, that is, $\psi(v)$ is finite for
  each $v\in\eta(s)$.
\end{defn}
Notice that, for a rule $\rho=H/s\rel{}r$ to have a valid $S$-type $\langle
s,\psi\rangle$, the set of sources of premises in $H$ has to be a subset of
the support $\eta(s)$ and the elements in the codomain of $\psi$ have to be
finite sets.

Consider again the TSS in Example~\ref{ex:partial-strict-stratification},
the partial strict stratification $S$ defined there and the $S$-restricted
support $\eta$ given before. The $S$-type of the rule \textsc{L} is $\langle
g(l_1),\emptyset\rangle$, the $S$-type of the rule \textsc{R1} is $\langle
f(x),\{g(x)\mapsto\{(l_1,x)\}\}\rangle$, and the rules \textsc{R}$i$ where
$i\not=1$ do not have a valid $S$-type.

We consider uniformity both in the sources of rules and, as the refinement of
Section~\ref{sec:uniform-targets} requires, in the targets of premises. We
have adapted the notion of uniformity for TSSs from Definition~12 in
\cite{FV03} to dyadic TSSs. (In our previous work \cite{AcetoGI16}
  uniformity in the targets of premises is already considered when discussing
  a rule format for image finiteness in plain SOS.)
\begin{defn}[Uniform in the sources]
  \label{def:uniform-sources}
  Let $R$ be a dyadic TSS. $R$ is uniform in the sources iff for $s$ and $s'$
  sources of any two rules in $R$, either $s=s'$, or otherwise $s$ and $s'$
  cannot differ only in the names of their variables.
\end{defn}

\begin{defn}[Uniform in the targets of premises]
  \label{def:uniform-targets}
  Let $R$ be a dyadic TSS. $R$ is uniform in the targets of premises iff for
  $v\rel{}w$ and $v\rel{}w'$ premises of any two (not necessarily different)
  rules in $R$, either $w=w'$, or otherwise $w$ and $w'$ cannot differ only in
  the names of their variables.
\end{defn}

%%%%%%%%%%%%%%%%
In a TSS that is uniform in the sources (respectively, targets of premises),
each origin (respectively, destination) is a substitution instance of at most
finitely many sources (respectively, targets of positive premises) of
transition rules.
%%%%%%%%%%%%%%%%
\begin{prop}
  \label{pr:unify-finite-rules}
  Let $R$ be a $\DT_k^\prj$-dyadic TSS. The following statements hold:
  \begin{enumerate}[(i)]
  \item If $R$ is uniform in the sources, then for each closed term $p$ the
    set of pairs $(t,\sigma)$ with $\sigma:\var(t)\to\cTerm$ such that
    ${\sigma(t)=p}$ and $t$ is the source of some rule in $R$ is finite.
  \item If $R$ is uniform in the targets of positive premises, then for each
    transition $p\rel{}p'$, and for each term $t$ and substitution
    $\sigma:\var(t)\to\cTerm$ such that $\sigma(t)=p$, the set of pairs
    $(t',\tau)$ with $\tau:{(\var(t')\setminus\var(t))\to\cTerm}$ such that
    $\sigma(t)\rel{}\tau\sigma(t')=p\rel{}p'$ and $t\rel{}t'$ is a positive
    premise of some rule in $R$ is finite.
  \end{enumerate}
\end{prop}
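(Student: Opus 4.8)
The plan is to reduce both statements to a single combinatorial fact about one-sided matching: \emph{for every closed term $p$ there are, up to renaming of variables, only finitely many open terms $t$ of which $p$ is a substitution instance, and for each such $t$ the matching substitution $\sigma$ with $\sigma(t)=p$ is unique.} Uniqueness is immediate, since whenever $\sigma(t)=p$ each $x\in\var(t)$ occurs at some position of $t$ and $\sigma(x)$ is forced to be the subterm of $p$ at that position (all occurrences of $x$ yield the same value because $\sigma(t)=p$ holds). Finiteness up to renaming follows because any such $t$ is obtained from $p$ by choosing a frontier of positions of $p$ at which to place variables together with an identification of the corresponding subterms into common variable names; as $p$ is a finite term it has finitely many positions, so there are finitely many such choices. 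I would isolate this as a preliminary claim and prove it by the position argument just sketched (an induction on the structure of $p$ works equally well). This claim is exactly what upgrades ``uniform up to renaming'' into genuine finiteness.

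For statement $(i)$, fix a closed term $p$. By the claim there are finitely many terms $t$ up to renaming with $\sigma(t)=p$ for some $\sigma$. Since $R$ is uniform in the sources (Definition~\ref{def:uniform-sources}), two distinct sources of rules can never differ only in the names of their variables, so each renaming class contains at most one source of a rule in $R$. Hence only finitely many sources $t$ of rules satisfy $\sigma(t)=p$ for some $\sigma$, and by the uniqueness part of the claim each such $t$ determines its matching $\sigma$ uniquely. Therefore the set of pairs $(t,\sigma)$ is finite.

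For statement $(ii)$, fix a transition $p\rel{}p'$, a term $t$ and a substitution $\sigma\colon\var(t)\to\cTerm$ with $\sigma(t)=p$, and take any pair $(t',\tau)$ in the set. The crucial observation is that $p'$ is a substitution instance of $t'$: since $\sigma$ maps $\var(t)$ to closed terms, applying $\sigma$ to $t'$ leaves exactly the variables in $\var(t')\setminus\var(t)$, and then $\tau$ closes the result, so the composite $\tau\sigma$ restricts to a substitution $\var(t')\to\cTerm$ with $\tau\sigma(t')=p'$. By the claim there are finitely many targets $t'$ up to renaming having $p'$ as an instance; and as $R$ is uniform in the targets of premises (Definition~\ref{def:uniform-targets}), the targets of premises with source $t$ form a set no two distinct elements of which differ only in variable names, so at most one target per renaming class occurs. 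Hence only finitely many $t'$ with $t\rel{}t'$ a premise admit such a completion. Finally, for each such $t'$ the substitution $\tau$ is forced, since every $y\in\var(t')\setminus\var(t)$ occurs in $\sigma(t')$, so the uniqueness part of the claim applied to $\sigma(t')$ and $p'$ pins down $\tau(y)$. Thus the set of pairs $(t',\tau)$ is finite.

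The main obstacle I anticipate is not the counting itself but the bookkeeping in $(ii)$: one must verify that $\sigma$ and $\tau$ compose into a single well-defined matcher from $\var(t')$ to $\cTerm$ even when $t$ and $t'$ share variables, and that uniformity in the targets of premises is invoked precisely for premises sharing the source $t$, so that distinct admissible $t'$ cannot be mere renamings of one another. The preliminary combinatorial claim, though standard, should be stated with care, as it is the only place where the finiteness of the candidate set actually comes from.
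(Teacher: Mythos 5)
Your proof is correct. For comparison: the paper does not actually give an argument for Proposition~\ref{pr:unify-finite-rules} at all---it omits the proof and defers to Propositions~2 and~3 of \cite{AcetoGI16}, which establish the analogous facts for triadic TSSs. Your write-up supplies the missing argument directly, and its core is the right one: a closed term $p$ has, up to renaming of variables, only finitely many generalizations $t$ (each determined by an antichain of positions of $p$ together with an identification, into shared variables, of positions carrying equal subterms), and for each such $t$ the matcher $\sigma:\var(t)\to\cTerm$ with $\sigma(t)=p$ is unique. Uniformity (Definitions~\ref{def:uniform-sources} and~\ref{def:uniform-targets}) then caps each renaming class at one rule source, respectively at one premise target per fixed premise source, which is exactly how you use it; and your bookkeeping in part (ii)---that $\var(\sigma(t'))=\var(t')\setminus\var(t)$ because $\sigma$ maps $\var(t)$ to closed terms, so the uniqueness argument applied to $\sigma(t')$ against $p'$ pins down $\tau$ on precisely its stated domain---is the step that makes the composite $\tau\sigma$ a legitimate matcher and is correctly placed where the definition of uniformity in targets (which only constrains premises sharing the same source) can be invoked. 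The only caveat is cosmetic: in a $\DT_k^\prj$-dyadic TSS the sources or targets may be pairs of terms (e.g.\ $(t,l)$ under $\DT_4^\id$), so your positional argument should be read over terms extended with a pairing constructor, a convention the paper itself adopts implicitly when it observes that dyadic TSSs can be injected back into triadic ones.
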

We omit the proof of Proposition~\ref{pr:unify-finite-rules}, which can be
adapted straightforwardly from analogous results for triadic TSSs (see
Propositions~2 and 3 in our previous work presented in \cite{AcetoGI16}).

We follow \cite{FV03} and combine Definitions~\ref{def:S-types},
\ref{def:uniform-sources} and \ref{def:uniform-targets} into a single
condition that will be used later in the rule format of
Section~\ref{sec:rule-formats}.
\begin{defn}[$\DT_k^\prj$-bounded]\label{def:bounded-new}
  Let $R$ be a $\DT_k^\prj$-dyadic TSS. We say that $R$ is
  \mbox{$\DT_k^\prj$-bounded} iff $R$ is uniform both in the sources of rules
  and in the target of premises, and there exists a partial strict
  stratification $S$ of $R$ such that the elements in the codomain of the
  $S$-restricted support map are finite sets (\ie,
  $\eta:\oSource[\prj]{k}\to\pset_\omega(\oSource[\prj]{k})$) and for every
  rule $\rho\in R$ with $S$-type $\langle s,\psi\rangle$, the $S$-type
  $\langle s,\psi\rangle$ is finitely inhabited.

  We say that $R$ is $\DT_k^\prj$-bounded by $S$.
\end{defn}
The main result in Theorem~\ref{thm:dk-finiteness} to follow holds for every
partial strict stratification. However, different choices for the map $S$ may
filter out different sets of junk rules.

%%%%%%%%%%%%%%%%%%%%%%%%%%%%%%%%%%%%%%%%%%%%%%%%%%%%%%%%%%%%%%%%%%%%%%%%%%%
%% Rule format for Dk- and Dk-prj-finiteness
\section{Rule format for $\DT_k^\prj$-finiteness}
\label{sec:rule-formats}

Consider SOS rules of the form $c\rel{y}c$ or $c\rel{c}y$, where $c$ is a
constant and $y$ is a variable. Axioms of that form can be instantiated to
derive transitions $c\rel{p}c$ and $c\rel{c}p$ respectively, for each closed
term $p$, and constant $c$ would have infinitely many transitions. The bounded
nondeterminism format from \cite{FV03}, which enforces that all the variables
in the rules of a TSS are source dependent (see Definition~B.5.2 in
\cite{Fok00}), prevents the use of axioms of the form above. That is, a rule
in bounded nondeterminism format cannot introduce variables spuriously that
could break bounded nondeterminism. We adapt the bounded nondeterminism format
to dyadic TSSs and we rename it as $\DT_k^\prj$-bounded nondeterminism format.
\begin{defn}[$\DT_k^\prj$-bounded nondeterminism format]
  \label{def:bounded-nondeterminism-format}
  Let $R$ be a $\DT_k^\prj$-dyadic TSS. A rule in $R$ is in
  $\DT_k^\prj$-bounded nondeterminism format iff
  \begin{enumerate}[(i)]
  \item all the variables in the sources of premises occur also in the source
    of the rule, and
  \item all the variables in the target of the rule occur also in the source
    of the rule, or in the targets of its premises.
  \end{enumerate}
  A TSS is in $\DT_k^\prj$-bounded nondeterminism format if all its rules are
  in $\DT_k^\prj$-bounded nondeterminism format.
\end{defn}

Consider the above mentioned axioms $c\rel{y}c$ and $c\rel{c}y$. Their dyadic counterparts
generated using the transformation $\DT_1^{id}$ are $c\rel{}(y,c)$ and $c\rel{}(c,y)$, neither of
which is in bounded nondeterminism format. Applying the $\DT_4^{id}$ transformation to $c\rel{y}c$ and $c\rel{c}y$ yields the
dyadic rules $(c,y)\rel{}c$ and $(c,c)\rel{}y$, the former of which is in bounded nondeterminism format. Applying the
$\DT_1^{\pi_1}$ transformation to $c\rel{y}c$ and $c\rel{c}y$ yields the dyadic axioms $c\rel{}y$ and $c\rel{}c$, of which
the latter is in bounded nondeterminism format.

We are now ready to present our rule format for $\DT_k^\prj$-finiteness.
\begin{thm}
  \label{thm:dk-finiteness}
  Let $R$ be a $\DT_k^\prj$-bounded TSS in $\DT_k^\prj$-bounded nondeterminism
  format. The LTS associated with $R$ is finitely branching.
\end{thm}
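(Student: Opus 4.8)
The plan is to prove the statement by transfinite induction on the ordinal $S(o)$ assigned by a partial strict stratification $S$ witnessing that $R$ is $\DT_k^\prj$-bounded (Definition~\ref{def:bounded-new}). Recall that, by Definition~\ref{def:DT-finite}, finite branching of the dyadic LTS means that for every origin $o$ the set $\{d\mid o\rel{}d\}$ of provable destinations is finite. The induction hypothesis is that $\{d'\mid o'\rel{}d'\}$ is finite for every origin $o'$ with $S(o')<S(o)$; well-foundedness of the order is immediate since $S$ takes values in the ordinals. The degenerate case is that of an origin $o$ with $S(o)=\bot$: if $o\rel{}d$ were provable, the root of its proof tree would instantiate a rule whose source unifies with $o$, so condition~(i) of Definition~\ref{def:partial-strict-stratification} would force $S(o)\neq\bot$, a contradiction; hence such an $o$ has no outgoing transitions and is trivially finitely branching.

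For the inductive step, fix an origin $o$ with defined order and consider a provable transition $o\rel{}d$. Its proof tree has a root that is an instance, under some substitution $\theta$, of the conclusion $s\rel{}r$ of a rule $\rho=H/s\rel{}r\in R$, so that $\theta(s)=o$, $\theta(r)=d$, and every premise $\theta(v)\rel{}\theta(w)$ (for $v\rel{}w\in H$) is provable. In particular each $\theta(v)$ is the origin of a provable transition, so its own root rule unifies with it and condition~(i) of Definition~\ref{def:partial-strict-stratification} gives $S(\theta(v))\neq\bot$; condition~(ii) then yields $S(\theta(v))<S(o)$ (this is exactly the content that makes the junk rules of Lemma~\ref{lem:junk-rules} unusable). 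Since $R$ is uniform in the sources, Proposition~\ref{pr:unify-finite-rules}(i) guarantees that there are only finitely many pairs $(s,\sigma)$ with $\sigma(s)=o$ and $s$ a source of a rule; I fix one such pair, writing $\sigma=\theta|_{\var(s)}$. By clause~(i) of the $\DT_k^\prj$-bounded nondeterminism format (Definition~\ref{def:bounded-nondeterminism-format}), every premise source $v$ satisfies $\var(v)\subseteq\var(s)$, so the origin $\theta(v)=\sigma(v)$ of each premise is already determined by $\sigma$.

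The crux is to bound, for this fixed $(s,\sigma)$, the number of rules with source $s$ that can actually be used, and here all the refinements encoded in the $S$-types come into play. The $S$-restricted support $\eta(s)$ is finite (Definitions~\ref{def:restricted-support} and~\ref{def:bounded-new}), and for each $v\in\eta(s)$ the origin $\sigma(v)$ satisfies $S(\sigma(v))<S(o)$, so by the induction hypothesis the set $\{d'\mid\sigma(v)\rel{}d'\}$ is finite. By Proposition~\ref{pr:unify-finite-rules}(ii), uniformity in the targets of premises bounds, for each such destination $d'$, the number of premise target-terms $w$ (together with substitutions on their fresh variables) matching it; hence the collection $W_v$ of target-terms $w$ for which some extension of $\sigma$ makes $\sigma(v)\rel{}\theta(w)$ provable is finite. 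For a rule with valid $S$-type $\langle s,\psi\rangle$ (Definition~\ref{def:S-types}) to be usable under $\sigma$, we must have $\psi(v)\subseteq W_v$ for every $v\in\eta(s)$; since each $W_v$ and $\eta(s)$ are finite, only finitely many maps $\psi$ arise, i.e.\ only finitely many $S$-types are satisfiable under $\sigma$. Because $R$ is $\DT_k^\prj$-bounded, each such $S$-type is finitely inhabited, so only finitely many rules with source $s$ are usable under $\sigma$.

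Finally, for each of these finitely many usable rules I bound the destinations $d=\theta(r)$. The restriction of $\theta$ to $\var(s)$ is fixed to $\sigma$, and for every premise $v\rel{}w$ the value $\theta(w)$ is a provable destination of $\sigma(v)$, of which there are finitely many by the induction hypothesis; Proposition~\ref{pr:unify-finite-rules}(ii) again bounds the fresh-variable substitutions realizing these matches. By clause~(ii) of Definition~\ref{def:bounded-nondeterminism-format}, every variable in the target $r$ occurs in $s$ or in some target $w$ of a premise, so $\theta(r)$ is determined up to finitely many choices. Summing over the finitely many pairs $(s,\sigma)$, the finitely many usable rules for each, and the finitely many resulting targets, the set $\{d\mid o\rel{}d\}$ is finite, closing the induction. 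I expect the third paragraph---proving that only finitely many $S$-types are satisfiable under a fixed matching of the source, which is precisely where the restricted support, the use of targets of premises, and uniformity in the targets of premises must all be orchestrated together---to be the main obstacle; the remaining steps are careful but routine bookkeeping of finitely many choices.
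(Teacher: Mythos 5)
Your proposal is correct and follows essentially the same route as the paper's own proof: transfinite induction on $S(o)$, uniformity in the sources to bound the pairs $(s,\sigma)$ matching an origin, the defined-order condition (i) playing the role of Lemma~\ref{lem:junk-rules}, your ``$\psi(v)\subseteq W_v$'' argument being precisely the paper's step that only finitely many $S$-types can be satisfied under a fixed $\sigma$ (phrased there via dependent function types), and the same final per-rule counting of targets using clause (ii) of the format. The only cosmetic differences are that the paper explicitly proves that every rule giving rise to transitions has a valid $S$-type (in your write-up this is implicit in the $W_v$ argument), and that for $v\in\eta(s)$ the particular $\sigma$ at hand may give $S(\sigma(v))=\bot$ rather than $S(\sigma(v))<S(o)$---in which case $\sigma(v)$ has no transitions by your own degenerate-case observation, so the finiteness conclusions are unaffected.
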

\begin{proof}
  We prove that each origin $o$ in the LTS associated with $R$ is finitely
  branching. Since $R$ is $\DT_k^\prj$-bounded, it is uniform in the sources
  and there are only finitely many pairs $(s,\sigma)$ such that $\sigma(s)=o$
  and $s$ is the source of some rule of $R$.
  % (by \cite[Proposition~2]{AcetoGI16}).
  We focus on the set $\{(s_i,\sigma_i)\mid i\in I\}$, with $I$ a finite index
  set such that $o$ unifies with a rule that has some $s_i$ as source. Since
  $R$ is $\DT_k^\prj$-bounded, there exists a partial strict stratification
  $S$ such that $R$ is $\DT_k^\prj$-bounded by $S$. We proceed by induction on
  $S(o)$. (Recall from Definition~\ref{def:partial-strict-stratification} that
  $S(o)\not=\bot$.)

  The base case is when $S(o)=0$. The origin $o$ unifies with rules that have
  source $s_i$ and that may have valid $S$-type or not. By
  Lemma~\ref{lem:junk-rules}, the rules that do not have valid $S$-type cannot
  give rise to transitions and can be safely ignored. By
  Definition~\ref{def:partial-strict-stratification}, the rules that have
  valid $S$-type are of the form
  % then by Lemma~\ref{lem:junk-rules} the rules with source $s_i$ that give
  % rise to transitions are of the form
  \begin{mathpar}
    \inferrule*
    [right={\normalsize $,\quad\quad i\in I, ~j \in J_i$}]
    { }{s_i\rel{}r_j}
  \end{mathpar}
  where the $J_i$ are taken to be disjoint to avoid proliferation of
  indices. Since $R$ is $\DT_k^\prj$-bounded, for each $i\in I$ and each $j\in
  J_i$ the instantiation of the rule template above has $S$-type $\langle
  s_i,\psi_j\rangle$, and $\langle s_i,\psi_j\rangle$ is finitely inhabited.
  By Definitions~\ref{def:restricted-support} and \ref{def:S-types},
  $\psi_j=\{v\mapsto\emptyset\mid v\in\eta(s_i)\}$ for each $j\in J_i$. Since
  $R$ is in $\DT_k^\prj$-bounded nondeterminism format,
  $\var(r_j)\subseteq\var(s_i)$ and thus the $\sigma_i(r_j)$ are closed. Since
  for all $j\in J_i$ the $S$-types $\langle s_i,\psi_j\rangle$ are equal and
  they are finitely inhabited, the $J_i$ are finite. Therefore, for each $i\in
  I$ the set $\{\sigma_i(r_j)~|~{\sigma_i(s_i)\rel{}\sigma_i(r_j)}\ \text{with
    $j\in J_i$}\}$
  % \begin{displaymath}
  %   \{\sigma_i(r_j)~|~{\sigma_i(s_i)\rel{}\sigma_i(r_j)}\
  %   \text{with $j\in J_i$}\}
  % \end{displaymath}
  is finite. By the finiteness of $I$ it follows that the set
  $\{d~|~o\rel{}d\}$ is finite and the theorem holds.

  The general case is when $S(o)>0$. The rules with source $s_i$ are of the
  form
  \begin{equation}
    \label{ir:general}
    \makebox[0cm][c]{
      \begin{mathpar}
        \inferrule*
        [right={\normalsize $,\quad\quad i\in I,~j\in J_i$}]
        {\{v_\ell \rel{} w_\ell~|~\ell \in L_j\}}
        {s_i\rel{}r_j}
    \end{mathpar}}
  \end{equation}
  where the $J_i$ and the $L_j$ are taken to be disjoint to avoid
  proliferation of indices. (Note that $L_j$ may be empty for some rules.) It
  is safe to ignore all the rules that do not give rise to transitions and
  therefore, in the remainder of the proof, we assume that each rule $\rho$ of
  the form in (\ref{ir:general}) gives rise to transitions. We first show that
  $\rho$ has a valid $S$-type.

  Since $R$ is in $\DT_k^\prj$-bounded nondeterminism format,
  $\var(v_\ell)\subseteq\var(s_i)$ for each $\ell\in L_j$, and therefore the
  $\sigma_i(v_\ell)$ are closed terms. As $\rho$ gives rise to transitions, by
  Lemma~\ref{lem:junk-rules} we have that $S(\sigma_i(v_\ell))\not=\bot$. By
  Definitions~\ref{def:partial-strict-stratification} and
  \ref{def:restricted-support}, $\{v_\ell\mid \ell\in
  L_j\}\subseteq\eta(s_i)$. For every $v\in\eta(s_i)$, if $v$ is not a source
  $v_\ell$ of some premise, then by Definition~\ref{def:S-types}
  $\psi(v)=\emptyset$, which is a finite set. If $v$ is a source $v_\ell$ of
  some premise, then $S(\sigma_i(v))<S(\sigma_i(s_i))$ and by the induction
  hypothesis $\sigma_i(v)$ is finitely branching. Since $R$ is uniform in the targets of
  premises and by Proposition~\ref{pr:unify-finite-rules}, the set $\{w\mid
  v\rel{}w\ \text{is a premise of}\ \rho\}$ is a finite set. By
  Definition~\ref{def:S-types}, $\rho$ has a valid $S$-type. We let $\langle
  s_i,\psi_j\rangle$ be the $S$-type of $\rho$.

  Our goal is now to prove that there are only finitely many outgoing
  transitions that can be proved using rules of the form (1). To this end,
  first of all we show that there are only finitely many distinct $\psi_j$
  with $j\in J_i$ such that rules with $S$-type $\langle s_i,\psi_j\rangle$
  give rise to transitions from $\sigma_i(s_i)$. By
  Definition~\ref{def:S-types}, each rule of $S$-type $\langle
  s_i,\psi_j\rangle$ contains a premise of the form $v\rel{}w$ for each
  $v\in\eta(s_i)$ and each $w\in\psi_j(v)$. Since $R$ is in
  $\DT_k^\prj$-bounded nondeterminism format, $\var(v)\subseteq\var(s_i)$, and
  thus the $\sigma_i(v)$ are closed. By Definition~\ref{def:proof-tree}, for
  each transition in the node of a proof tree, if the transition unifies with
  a rule of $S$-type $\langle s_i,\psi_j\rangle$ then for each $v\in\eta(s_i)$
  the process $\sigma_i(v)$ can perform, at least, a transition for each
  $w\in\psi_j(v)$. The $\psi_j$ in the $S$-types of rules that give rise to
  transitions from $\sigma_i(s_i)$ are dependent functions of type
  $\Pi_{v\in\eta(s_i)}\{\sigma_i(v)\rel{}\tau\sigma_i(w)\}$ with substitutions
  $\tau:(\var(w)\setminus\var(v))\to\cTerm$.
  For each $i$ and each $j\in J_i$ the refined type of the $\psi_j$ is
  finitely inhabited, since the codomain of a dependent function depends on
  the inputs of the function. Each image of $\psi_j$ cannot be an arbitrary
  subset of $\oTarget[\prj]{k}$, but only the subset that is determined by the
  input $v$. That is, the only elements in the codomain of $\psi_j$ are the
  sets $\{\tau\sigma_i(w)\mid \sigma_i(v)\rel{}\tau\sigma_i(w)\}$ where
  $v\in\eta(s_i)$. Since the $\eta(s_i)$ are finite sets, both the domain and
  the codomain of $\psi_j$ are finite. Therefore, for each $i\in I$ there are
  only finitely many $\psi_j$ with $j\in J_i$ such that the rules with
  $S$-type $\langle s_i,\psi_j\rangle$ give rise to transitions from
  $\sigma_i(s_i)$.

  Finally, we consider a rule $\rho$ in the form of (\ref{ir:general}). Since
  $R$ is in $\DT_k^\prj$-bounded nondeterminism format, $\var(v_\ell)
  \subseteq \var(s_i)$ and therefore the $\sigma_i(v_\ell)$ are closed
  terms. Since $S(\sigma_i(v_\ell))<S(o)$, by the induction hypothesis, the
  $\sigma_i(v_\ell)$ are finitely branching, and therefore for each $i\in I$ the
  set $\{p'\mid \sigma_i(v_\ell)\rel{}p'\}$ is finite. Since $R$ is uniform in
  the targets of premises and by Proposition~\ref{pr:unify-finite-rules}, for
  each $i\in I$ the set $\{\tau_m\sigma_i(w_\ell)\mid
  \sigma_i(v_\ell)\rel{}\tau_m\sigma_i(w_\ell)\}$ is finite, with
  \begin{displaymath}
    \tau_m:((\bigcup_{\ell\in
      L_j}\var(w_\ell))\setminus\var(s_i))\to\cTerm
  \end{displaymath}
  closed substitutions where $m$ ranges over finite index sets $M_j$ with
  $j\in J_i$.
  Since $R$ is in $\DT_k^\prj$-bounded nondeterminism format,
  $\var(r_j)\subseteq(\var(s_i)\cup(\bigcup_{\ell\in L_j}\var(w_\ell)))$ and
  therefore the $\tau_m\sigma_i(r_j)$ are closed targets.
  Since for each $i$ there are finitely many distinct $\psi_j$ with $j\in J_i$
  such that the rules with $S$-type $\langle s_i, \psi_j\rangle$ give rise to
  transitions, and since the $\langle s_i, \psi_j\rangle$ are finitely
  inhabited and the $M_j$ are finite, then for each $i\in I$ the set
  $\{\tau_m\sigma_i(r_j)~|~{\sigma_i(s_i)\rel{}\tau_m\sigma_i(r_j)}\}$ is
  finite. By the finiteness of $I$ it follows that the set $\{d~|~o\rel{}d\}$
  is finite and we are done.
\end{proof}

\begin{rem}
  The use of dependent function types when showing that, for a given $i$,
  there are finitely many distinct $\psi_j$ with $j\in J_i$ such that the
  rules with $S$-type $\langle s_i,\psi_j\rangle$ give rise to transitions
  from $\sigma_i(s_i)$, follows from our previous work presented in
  \cite{AcetoGI16}. However, an alternative argument that does not involve
  dependent function types stems from \cite{FV03}. The alternative argument
  assumes that there exists some $m\in I$ for which there are infinitely many
  $\psi_n$ with $n\in J_m$ such that rules with $S$-type $\langle
  s_m,\psi_n\rangle$ give rise to transitions from $\sigma_m(t_m)$, and then
  shows that this assumption contradicts the induction hypothesis.
\end{rem}

\begin{cor}
  \label{cor:finite-branching}
  Let $R$ be a triadic TSS with terms as labels and let
  ${R'=\DT_k^\prj(R)}$. If $R'$ is $\DT_k^\prj$-bounded and in
  $\DT_k^\prj$-bounded nondeterminism format then $R$ is $\DT_k^\prj$-finite.
\end{cor}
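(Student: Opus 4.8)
The plan is to obtain the corollary as an immediate instance of Theorem~\ref{thm:dk-finiteness}, with only a change of vocabulary between the dyadic and the triadic levels. First I would observe that $R'=\DT_k^\prj(R)$ is, by construction, a $\DT_k^\prj$-dyadic TSS, so it is precisely the kind of object to which Theorem~\ref{thm:dk-finiteness} applies. The two hypotheses of the corollary state that $R'$ is $\DT_k^\prj$-bounded and that $R'$ is in $\DT_k^\prj$-bounded nondeterminism format, and these are exactly the two premises required by Theorem~\ref{thm:dk-finiteness} (with $R'$ playing the role of the dyadic TSS called $R$ in the statement of that theorem). Hence the theorem yields directly that the LTS associated with $R'$ is finitely branching.

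To conclude, I would invoke Definition~\ref{def:DT-finite}(ii), which defines $R$ to be $\DT_k^\prj$-finite exactly when $R'=\DT_k^\prj(R)$ is finitely branching, \ie, when for every origin $o$ in $R'$ the set $\{d\mid o\rel{}d\}$ is finite. Since the previous paragraph has established that $R'$ is finitely branching, this definitional equivalence gives at once that $R$ is $\DT_k^\prj$-finite, completing the argument.

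There is essentially no obstacle in this proof: all of the real work has already been carried out in Theorem~\ref{thm:dk-finiteness}, and the corollary merely re-expresses that result in terms of the bounded-nondeterminism property of the \emph{original} triadic TSS rather than the finite branching of its dyadic transform. The only point that deserves a moment's care is the bookkeeping of names, since the symbol $R$ denotes a dyadic TSS in the theorem but a triadic TSS in the corollary; one must be explicit that it is the dyadic transform $R'$, and not $R$ itself, that is fed to Theorem~\ref{thm:dk-finiteness}.
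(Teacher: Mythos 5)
Your proof is correct and matches the paper's own argument exactly: the paper proves this corollary simply ``By Definition~\ref{def:DT-finite}'', implicitly applying Theorem~\ref{thm:dk-finiteness} to $R'$ and then unfolding the definition of $\DT_k^\prj$-finiteness, which is precisely what you spell out. Your extra care about the naming of $R$ versus $R'$ is sound bookkeeping, not a divergence in method.
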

\begin{proof}
  By Definition~\ref{def:DT-finite}.
\end{proof}

The conditions of Theorem~\ref{thm:dk-finiteness} define our rule format for
$\DT_k^\prj$-finiteness for triadic TSS with terms as labels.

%%%%%%%%%%%%%%%%%%%%%%%%%%%%%%%%%%%%%%%%%%%%%%%%%%%%%%%%%%%%%%%%%%%%%%%%%%%
%% Applications
\section{Application}
\label{sec:application}

\begin{figure}[t]
  \centering
  \begin{mathpar}
  \inferrule*
  { }
  {a\rel{z}_{/a}z}
  \and
  \inferrule*
  {a\not=b}
  {b\rel{z}_{/a}b}
  \and
  \inferrule*
  {x_0\rel{z}_{/a}y_0\\x_1\rel{z}_{/a}y_1}
  {c!x_0.x_1\rel{z}_{/a}c!y_0.y_1}
  \and
  \inferrule*
  {x\rel{z}_{/a}y\\a\not=b}
  {c?b.x\rel{z}_{/a}c?b.y}
  \\
  \inferrule*
  {x_0\rel{z}_{/a}y_0\\x_1\rel{z}_{/a}y_1}
  {x_0+x_1\rel{z}_{/a}y_0+y_1}
  \and
  \inferrule*
  {x_0\rel{z}_{/a}y_0\\x_1\rel{z}_{/a}y_1}
  {x_0\mid x_1\rel{z}_{/a}y_0\mid y_1}
  \\
  \inferrule*
  {x_0\rel{z}_{c?}y_0\\x_1\rel{z}_{c!}y_1}
  {x_0\mid x_1\rel{}_\tau y_0\mid y_1}
  \and
  \inferrule*
  {x_0\rel{z}_{c!}y_0\\x_1\rel{z}_{c?}y_1}
  {x_0\mid x_1\rel{}_\tau y_0\mid y_1}
  \\
  \inferrule*
  { }
  {c!x_0.x_1\rel{x_0}_{c!}x_1}
  \and
  \inferrule*
  {x_1\rel{z}_{/a}y_1}
  {c?a.x_1\rel{z}_{c?}y_1}
  \and
  \inferrule*
  { }
  {\tau.x\rel{}_\tau x}
  \\
  \inferrule*
  {x_0\rel{z}_{c!}y_0}
  {x_0+x_1\rel{z}_{c!}y_0}
  \and
  \inferrule*
  {x_0\rel{z}_{c?}y_0}
  {x_0+x_1\rel{z}_{c?}y_0}
  \and
  \inferrule*
  {x_0\rel{}_\tau y_0}
  {x_0+x_1\rel{}_\tau y_0}
  \\
  \inferrule*
  {x_1\rel{z}_{c!}y_1}
  {x_0+x_1\rel{z}_{c!}y_1}
  \and
  \inferrule*
  {x_1\rel{z}_{c?}y_1}
  {x_0+x_1\rel{z}_{c?}y_1}
  \and
  \inferrule*
  {x_1\rel{}_\tau y_1}
  {x_0+x_1\rel{}_\tau y_1}
  \\
  \inferrule*
  {x_0\rel{z}_{c!}y_0}
  {x_0\mid x_1\rel{z}_{c!}y_0\mid x_1}
  \and
  \inferrule*
  {x_0\rel{z}_{c?}y_0}
  {x_0\mid x_1\rel{z}_{c?}y_0\mid x_1}
  \and
  \inferrule*
  {x_0\rel{}_\tau y_0}
  {x_0\mid x_1\rel{}_\tau y_0\mid x_1}
  \\
  \inferrule*
  {x_1\rel{z}_{c!}y_1}
  {x_0\mid x_1\rel{z}_{c!}x_0\mid y_1}
  \and
  \inferrule*
  {x_1\rel{z}_{c?}y_1}
  {x_0\mid x_1\rel{z}_{c?}x_0\mid y_1}
  \and
  \inferrule*[right={\normalsize ,}]
  {x_1\rel{}_\tau y_1}
  {x_0\mid x_1\rel{}_\tau x_0\mid y_1}
  \\
  \text{where $a$ and $b$ are atoms and $c$ is a channel name.}
\end{mathpar}
\caption{Inference rules for MicroCHOCS.}  \label{fig:microCHOCS}
\end{figure}

The rule format of Section~\ref{sec:rule-formats} can be applied to TSSs with
terms as labels from the literature. As an example, we adapt the inference rules for CHOCS from Figure~1 in \cite{MGR05}. Consider
the subset of the CHOCS language in Figure~\ref{fig:microCHOCS}, which we call
MicroCHOCS, which includes operators for send $c!x.p$, receive $c?a.p$, choice
$p_1+p_2$ and parallel composition $p_1\mid p_2$. The following classes of
transitions are defined: $\rel{}_{/a}$ where $a$ is an atom, $\rel{}_{c!}$ and
$\rel{}_{c?}$ where $c$ is a channel name, and $\rel{}_\tau$ where $\tau$ is
the internal action. These transitions are labelled by arbitrary terms. Note
that the $x$, $z$, $x_0$, $x_1$, $y_0$ and $y_1$ are term variables. In words,
transition $p\rel{q}_{/a}p'$ means that process $p'$ is obtained by
substituting the free occurrences of atom $a$ in process
$p$ by process $q$. Transition $c!p.q\rel{p}_{c!}q$ means that process $c!p.q$ sends process
$p$ through channel $c$ and continues with process $q$. Transition
$c?a.p\rel{q}_{c?}p'$ means that process $c?a.p$ receives process $q$ through
channel $c$ and continues with process $p'$, which is the result of
substituting the free occurrences of atom $a$ in $p$ by $q$. Transition
$p\rel{}_{\tau}p'$ means that process $p$ continues with process $p'$ after
performing the silent action.

We will now use the theory we developed in the main body of the paper to study
the bounded nondeterminism properties of the transition relations in
MicroCHOCS.  We partition
the system into the different sub-systems defining each of the classes of
transitions. For each atom $a$, we will show that sub-system $\rel{}_{/a}$ is
image finite. For each channel name $c$, we will show that sub-system
$\rel{}_{c!}$ is finitely branching. For each channel name $c$, we will show
that sub-system $\rel{}_{c?}$ is image finite by considering a TSS with the
same cardinality that gets rid of transitions $\rel{}_{/a}$ and using the fact
that, for each atom $a$, the sub-system $\rel{}_{/a}$ is image finite. Finally,
we will show that subsystem $\rel{}_\tau$ is finitely branching by considering a
TSS with the same cardinality that gets rid of transitions $\rel{}_{c?}$ and
$\rel{}_{c!}$ and by using the fact that, for each channel name, the sub-system
$\rel{}_{c?}$ is image finite and the subsystem $\rel{}_{c!}$ is finitely
branching.

\begin{exmp}[Sub-system $\rel{}_{/a}$]
  \label{ex:subsitution}
  The first six rule templates of MicroCHOCS implement capture avoiding
  substitution, \ie, $t\rel{z}_{/a}t'$ where term $t'$ results from
  substituting subject $z$ for the free occurrences of atom $a$ in term
  $t$. Since the terms are finite, once the subject and the atom are fixed,
  there is only one resulting term and the LTS associated with these six rules
  is image finite. (Note, however, that the system is not finitely branching. 
  Indeed, $a\rel{p}_{/a}p$ for each closed term $p$.) This can be checked by applying the $\DT_4^\id$
  transformation and the rule format defined by the conditions of
  Theorem~\ref{thm:dk-finiteness}. We notice that the atom-inequality
  conditions in the second and in the fourth rules can only decrease the
  number of transitions in the associated LTS. Therefore it is safe to strip away
  the atom inequality conditions, which we omit from now on. We fix an atom
  $a$ and we let $R$ be the TSS consisting of the instances of the first six
  rules above that define the class of transitions $\rel{}_{/a}$. Below we
  give the $\DT_{4}^\id$-dyadic TSS $R'=\DT_4^\id(R)$:
  \begin{mathpar}
    \inferrule*
    { }
    {(a,z)\rel{}z}
    \and
    \inferrule*
    { }
    {(b,z)\rel{}b}
    \and
    \inferrule*
    {(x_0,z)\rel{}y_0\\(x_1,z)\rel{}y_1}
    {(c!x_0.x_1,z)\rel{}c!y_0.y_1}
    \\
    \inferrule*
    {(x,z)\rel{}y}
    {(c?b.x,z)\rel{}c?b.y}
    \and
    \inferrule*
    {(x_0,z)\rel{}y_0\\(x_1,z)\rel{}y_1}
    {(x_0+x_1,z)\rel{}y_0+y_1}
    \\
    \inferrule*[right={\normalsize .}]
    {(x_0,z)\rel{}y_0\\(x_1,z)\rel{}y_1}
    {(x_0\mid x_1,z)\rel{}y_0\mid y_1}
  \end{mathpar}
  $R'$ has a partial strict stratification $S$ given by
  \begin{displaymath}
    \begin{array}{rcll}
      S(b,q)&=&1&b\ \text{an atom}\\
      S(c!p_0.p_1,q)&=&1+S(p_0,q)+S(p_1,q)\\
      S(c?b.p,q)&=&1+S(p,q)\\
      S(p_0+p_1,q)&=&1+S(p_0,q)+S(p_1,q)\\
      S(p_0\mid p_1,q)&=&1+S(p_0,q)+S(p_1,q)\\
      S(p,q)&=&\bot&\text{\ow}
    \end{array}
  \end{displaymath}
  where $p$, $p_0$, $p_1$ and $q$ are closed terms. The $S$-restricted support
  map of $R'$ is given by
  \begin{displaymath}
    \begin{array}{rcll}
      \eta(b,z)&=&\emptyset &b\ \text{an atom}\\
      \eta(c!x_0.x_1,z)&=&\{(x_0,z),(x_1,z)\}\\
      \eta(c?b.x,z)&=&\{(x,z)\}\\
      \eta(x_0+x_1,z)&=&\{(x_0,z),(x_1,z)\}\\
      \eta(x_0\mid x_1,z)&=&\{(x_0,z),(x_1,z)\}.
    \end{array}
  \end{displaymath}
  The elements in the codomain of $\eta$ are finite sets. The rules above have
  respectively (in the order in which they are read) the $S$-types
  \begin{displaymath}
    \begin{array}{rl}
      \langle(a,z),&\emptyset\rangle\\
      \langle(b,z),&\emptyset\rangle\\
      \langle(c!x_0.x_1,z),&\{(x_0,z)\mapsto \{y_0\},
      (x_1,z)\mapsto \{y_1\}\}\rangle\\
      \langle(c?a.x,z),&\{(x,z)\mapsto \{y\}\}\rangle\\
      \langle(x_0+x_1,z),&\{(x_0,z)\mapsto \{y_0\},
      (x_1,z)\mapsto \{y_1\}\}\rangle\\
      \langle(x_0\mid x_1,z),&\{(x_0,z)\mapsto \{y_0\},
      (x_1,z)\mapsto \{y_1\}\}\rangle
    \end{array}
  \end{displaymath}
  which are finitely inhabited. Since $R'$ is in $\DT_4^\id$-bounded
  nondeterminism format, $R'$ meets the conditions of
  Theorem~\ref{thm:dk-finiteness}. Therefore $R'$ is finitely branching and by
  Corollary~\ref{cor:finite-branching} $R$ is \mbox{$\DT_4^\id$-finite}, that
  is, $R$ is image finite. Since the atom-inequality conditions that we
  removed from the second and fourth rules can only decrease the number of
  transitions, for each atom $a$, the LTS associated with the rules of
  MicroCHOCS that define the class of transitions $\rel{}_{/a}$ is image
  finite.
\end{exmp}

\begin{exmp}[Sub-system $\rel{}_{c!}$]
  \label{ex:send}
  The last five rule templates of the first column of the rules for MicroCHOCS in Figure~\ref{fig:microCHOCS} 
  describe the effect of sending a process on a channel. 
  They consist of an axiom and four compatibility rule
  templates for the choice and parallel composition operator. We check that
  this sub-system is finitely branching. We fix a channel $c$ and we let $R$ be
  the TSS consisting of the instances of the rules that define the class of
  transitions $\rel{}_{c!}$. Below we give the $\DT_{1}^\id$-dyadic TSS
  $R'=\DT_1^\id(R)$:
  \begin{mathpar}
    \inferrule*
    { }
    {c!x_0.x_1\rel{}(x_0,x_1)}
    \and
    \inferrule*
    {x_0\rel{}(z,y_0)}
    {x_0+x_1\rel{}(z,y_0)}
    \and
    \inferrule*
    {x_1\rel{}(z,y_1)}
    {x_0+x_1\rel{}(z,y_1)}
    \\
    \inferrule*
    {x_0\rel{}(z,y_0)}
    {x_0\mid x_1\rel{}(z,y_0\mid x_1)}
    \and
    \inferrule*[right={\normalsize .}]
    {x_1\rel{}(z,y_1)}
    {x_0\mid x_1\rel{}(z,x_0\mid y_1)}
  \end{mathpar}
  $R'$ has a partial strict stratification $S$ given by
  \begin{displaymath}
    \begin{array}{rcl}
      S(c!p_0.p_1)&=&1\\
      S(p_0+p_1)&=&1+S(p_0)+S(p_1)\\
      S(p_0\mid p_1)&=&1+S(p_0)+S(p_1)\\
      S(p)&=&\bot\qquad\text{\ow}
    \end{array}
  \end{displaymath}
  where $p$, $p_0$, $p_1$ and $q$ are closed terms. The $S$-restricted support
  map of $R'$ is given by
  \begin{displaymath}
    \begin{array}{rcl}
      \eta(c!x_0.x_1)&=&\emptyset\\
      \eta(x_0+x_1,z)&=&\{x_0,x_1\}\\
      \eta(x_0\mid x_1,z)&=&\{x_0,x_1\}.
    \end{array}
  \end{displaymath}
  The elements in the codomain of $\eta$ are finite sets. The rules above have
  respectively (in the order in which they are read) the $S$-types
  \begin{displaymath}
    \begin{array}{rl}
      \langle c!x_0.x_1,&\emptyset\rangle\\
      \langle x_0+x_1,&\{x_0\mapsto \{(z,y_0)\},x_1\mapsto\emptyset\}\rangle\\
      \langle x_0+x_1,&\{x_0\mapsto\emptyset,x_1\mapsto\{(z,y_1)\}\}\rangle\\
      \langle x_0\mid x_1,&\{x_0\mapsto \{(z,y_0)\},x_1\mapsto\emptyset\}\rangle\\
      \langle x_0\mid x_1,&\{x_0\mapsto\emptyset,x_1\mapsto\{(z,y_1)\}\}\rangle
    \end{array}
  \end{displaymath}
  which are finitely inhabited. Since $R'$ is in $\DT_1^\id$-bounded
  nondeterminism format, $R'$ meets the conditions of
  Theorem~\ref{thm:dk-finiteness}. Therefore $R'$ is finitely branching and, by
  Corollary~\ref{cor:finite-branching}, $R$ is \mbox{$\DT_1^\id$-finite}, that
  is, $R$ is finitely branching. For each channel name $c$, the LTS associated
  with the rules of MicroCHOCS that define the class of transitions
  $\rel{}_{c!}$ is finitely branching.
\end{exmp}

\begin{exmp}[Sub-system $\rel{}_{c?}$]
  \label{ex:receive}
  The last five rule templates of the second column of MicroCHOCS (together
  with the sub-system $\rel{}_{/a}$) describes the effect of receiving a process on a channel. They
  consist of a rule template that uses sub-system $\rel{}_{/a}$ in its premise
  and four compatibility rule templates for the choice and parallel
  composition operator. This sub-system is essentially the same as the one
  in Example~\ref{ex:send}, except for the first rule template
  \begin{mathpar}
    \inferrule*[right={\normalsize .}]
    {x_1\rel{z}_{/a}y_1}
    {c?a.x_1\rel{z}_{c?}y_1}
  \end{mathpar}
  Since for each atom $a$ the class of transitions $\rel{}_{/a}$ is image
  finite (Example~\ref{ex:subsitution}), in order to test the sub-system $\rel{}_{c?}$ for image finiteness
  it is enough to omit the rules for $\rel{}_{/a}$ and to replace the rule
  template above with the axiom template
  \begin{mathpar}
    \inferrule*[right={\normalsize .}]
    { }
    {c?a.x_1\rel{z}_{c?}x_1}
  \end{mathpar}
  We check that the LTS associated with the resulting TSS is image finite. We
  fix a channel $c$ and we let $R$ be the TSS that consists of the instances
  of the axiom above and of the compatibility rules that define the class of
  transitions $\rel{}_{c?}$. Let $R'=\DT_4^\id(R)$. Below we give the
  $\DT_{4}^\id$-dyadic TSS $R'=\DT_4^\id(R)$:
  \begin{mathpar}
    \inferrule*
    { }
    {(c?a.x_1,z)\rel{}x_1}
    \and
    \inferrule*
    {(x_0,z)\rel{}y_0}
    {(x_0+x_1,z)\rel{}y_0}
    \and
    \inferrule*
    {(x_1,z)\rel{}y_1}
    {(x_0+x_1,z)\rel{}y_1}
    \\
    \inferrule*
    {(x_0,z)\rel{}y_0}
    {(x_0\mid x_1,z)\rel{}y_0\mid x_1}
    \and
    \inferrule*[right={\normalsize .}]
    {(x_1,z)\rel{}y_1}
    {(x_0\mid x_1,z)\rel{}x_0\mid y_1}
  \end{mathpar}
  $R'$ has a partial strict stratification $S$ given by
  \begin{displaymath}
    \begin{array}{rcll}
      S(c?a.p,q)&=&1\\
      S(p_0+p_1,q)&=&1+S(p_0,q)+S(p_1,q)\\
      S(p_0\mid p_1,q)&=&1+S(p_0,q)+S(p_1,q)\\
      S(p,q)&=&\bot&\text{\ow}
    \end{array}
  \end{displaymath}
  where $p$, $p_0$, $p_1$ and $q$ are closed terms. $R'$ has an $S$-restricted
  support map given by
  \begin{displaymath}
    \begin{array}{rcl}
      \eta(c?a.x_1,z)&=&\emptyset\\
      \eta(x_0+x_1,z)&=&\{(x_0,z),(x_1,z)\}\\
      \eta(x_0\mid x_1,z)&=&\{(x_0,z),(x_1,z)\}.
    \end{array}
  \end{displaymath}
  The elements in the codomain of $\eta$ are finite sets. The rules above have
  respectively (in the order in which the rules are read) the $S$-types
  \begin{displaymath}
    \begin{array}{rl}
      \langle (c?a.x_1,z),&\emptyset\rangle\\
      \langle (x_0+x_1,z),&\{(x_0,z)\mapsto\{y_0\},
                            (x_1,z)\mapsto\emptyset\}\rangle\\
      \langle (x_0+x_1,z),&\{(x_0,z)\mapsto\emptyset,
                            (x_1,z)\mapsto\{y_1\}\}\rangle\\
      \langle (x_0\mid x_1,z),&\{(x_0,z)\mapsto\{y_0\},
                                 (x_1,z)\mapsto\emptyset\}\rangle\\
      \langle (x_0\mid x_1,z),&\{(x_0,z)\mapsto\emptyset,
                                 (x_1,z)\mapsto\{y_1\}\}\rangle
    \end{array}
  \end{displaymath}
  which are finitely inhabited. Since $R'$ is in $\DT_4^\id$-bounded
  nondeterminism format, $R'$ meets the conditions of
  Theorem~\ref{thm:dk-finiteness}. Therefore $R'$ is finitely branching and, by
  Corollary~\ref{cor:finite-branching}, $R$ is \mbox{$\DT_4^\id$-finite}, that
  is, $R$ is image finite. Since the LTS for $\rel{}_{/a}$
  is image finite for each atom $a$ and $R$ is image finite for each channel name $c$, then
  the LTS associated with the rules of MicroCHOCS that define the class of
  transitions $\rel{}_{c?}$ is image finite.
\end{exmp}

\begin{exmp}[Sub-system $\rel{}_{\tau}$]
  \label{ex:tau}
  The third row and the last five rules of the third column of MicroCHOCS
  (together with sub-systems $\rel{}_{c!}$ and $\rel{}_{c?}$) describe the 
  behaviour of the silent action $\tau$. This sub-system consist of the rules
  \begin{mathpar}
    \inferrule*
    {x_0\rel{z}_{c?}y_0\\x_1\rel{z}_{c!}y_1}
    {x_0\mid x_1\rel{}_\tau y_0\mid y_1}
    \and
    \inferrule*
    {x_0\rel{z}_{c!}y_0\\x_1\rel{z}_{c?}y_1}
    {x_0\mid x_1\rel{}_\tau y_0\mid y_1}
    \and
    \inferrule*
    { }
    {\tau.x\rel{}_\tau x}
  \end{mathpar}
  and four compatibility rules for the choice and parallel composition
  operator that resemble the ones in Examples~\ref{ex:send} and
  \ref{ex:receive} except that they omit the label $z$ that appears there.
  For each channel name $c$ the class of transitions $\rel{}_{c!}$ is finitely
  branching. Therefore, for the transitions of the form $p\mid
  q\rel{\tau}p'\mid q'$ that are provable by the rules in the third row of
  MicroCHOCS, there are only finitely many process $r$ that unify with the $z$
  that labels the premises of such rules. Once $r$ is fixed, since the class
  of transitions $\rel{}_{c?}$ is image finite, there are only finitely many
  processes $p'$ (notice that they unify with the $y_0$) and the process
  $p\mid q$, which unify with $x_0\mid y_0$, is finitely branching. In order to
  test the sub-system $\rel{}_{\tau}$ for finite branching it is therefore enough to
  consider the TSS $R$ that consists of the axioms
  \begin{mathpar}
    \inferrule*
    { }
    {x_0\mid x_1\rel{}_\tau x_0\mid x_1}
    \and
    \inferrule*[right={\normalsize .}]
    { }
    {\tau.x\rel{}_\tau x}
  \end{mathpar}
  and of the four compatibility rules. We check that the LTS associated with
  $R$ is finitely branching. Let $R'=\DT_1^\id(R)$ with the rules
  \begin{mathpar}
    \inferrule*
    { }
    {x_0\mid x_1\rel{} (0, x_0\mid x_1)}
    \and
    \inferrule*
    { }
    {\tau.x\rel{} (0,x)}
    \and
    \inferrule*
    {x_0\rel{}(0,y_0)}
    {x_0+x_1\rel{}(0,y_0)}
    \\
    \inferrule*
    {x_1\rel{}(0,y_1)}
    {x_0+x_1\rel{}(0,y_1)}
    \and
    \inferrule*
    {x_0\rel{}(0,y_0)}
    {x_0\mid x_1\rel{}(0,y_0\mid x_1)}
    \and
    \inferrule*
    {x_1\rel{}(0,y_1)}
    {x_0\mid x_1\rel{}(0,x_0\mid y_1)}
  \end{mathpar}
  where $0$ is the only constant process of MicroCHOCS. $R'$ has a partial
  strict stratification $S$ given by
  \begin{displaymath}
    \begin{array}{rcl}
      S(\tau.p)&=&1\\
      S(p_0+p_1)&=&1+S(p_0)+S(p_1)\\
      S(p_0\mid p_1)&=&1+S(p_0)+S(p_1)\\
      S(p)&=&\bot\qquad\text{\ow}
    \end{array}
  \end{displaymath}
  where $p$, $p_0$, $p_1$ and $q$ are closed terms. The $S$-restricted support
  map of $R'$ is given by
  \begin{displaymath}
    \begin{array}{rcl}
      \eta(\tau.x)&=&\emptyset\\
      \eta(x_0+x_1,z)&=&\{x_0,x_1\}\\
      \eta(x_0\mid x_1,z)&=&\{x_0,x_1\}.
    \end{array}
  \end{displaymath}
  The elements in the codomain of $\eta$ are finite sets. The rules above have
  respectively (in the order in which they are read) the $S$-types
  \begin{displaymath}
    \begin{array}{rl}
      \langle x_0\mid x_1,&\{x_0\mapsto\emptyset,x_1\mapsto\emptyset\}\rangle\\
      \langle \tau.x,&\emptyset\rangle\\
      \langle x_0+x_1,&\{x_0\mapsto \{(z,y_0)\},x_1\mapsto\emptyset\}\rangle\\
      \langle x_0+x_1,&\{x_0\mapsto\emptyset,x_1\mapsto\{(z,y_1)\}\}\rangle\\
      \langle x_0\mid x_1,&\{x_0\mapsto \{(z,y_0)\},x_1\mapsto\emptyset\}\rangle\\
      \langle x_0\mid x_1,&\{x_0\mapsto\emptyset,x_1\mapsto\{(z,y_1)\}\}\rangle
    \end{array}
  \end{displaymath}
  which are finitely inhabited. Since $R'$ is in $\DT_1^\id$-bounded
  nondeterminism format, $R'$ meets the conditions of
  Theorem~\ref{thm:dk-finiteness}. Therefore $R'$ is finitely branching and, by
  Corollary~\ref{cor:finite-branching}, $R$ is \mbox{$\DT_1^\id$-finite}, that
  is, $R$ is finitely branching. Since for each channel name $c$ the LTSs for
  $\rel{}_{c!}$ is finitely branching and $\rel{}_{c?}$ is image finite, the LTS
  associated with the rules of MicroCHOCS that define the class of transitions
  $\rel{}_\tau$ is finitely branching.
\end{exmp}

%%%%%%%%%%%%%%%%%%%%%%%%%%%%%%%%%%%%%%%%%%%%%%%%%%%%%%%%%%%%%%%%%%%%%%%%%%%
%% Future work
\section{Future work}
\label{sec:future-work}

The first condition in the $\DT_k^\prj$-bounded nondeterminism format from
Definition~\ref{def:bounded-nondeterminism-format} requires that all the
variables in the sources of premises occur also in the source of the
rule. This restriction disallows any sort of look-ahead in the rules, and it
is easy to construct TSSs that do not meet it and yield infinitely branching
LTSs. It is an interesting, and probably challenging, avenue for future work
to find ways of relaxing that constraint while preserving bounded
nondeterminism.

Another avenue for future work that we find worth pursuing is to investigate
whether the kind of dyadic TSSs we have considered in this paper may be used
fruitfully to study congruence rule formats for established notions of
bisimilarity for CHOCS. Earlier work on notions of bisimulation equivalence
based on presentations using \emph{commitments}~\cite{Mil93}, \emph{residual
  data types}~\cite{Ben10} and \emph{residuals}~\cite{PBEGW15} indicates that
such a programme might be successfully carried out using dyadic TSSs. However,
much work remains to be done in order to develop this line of work and to
assess fully the potential usefulness of dyadic formalisms in the study of
congruence properties of notions of bisimilarity for CHOCS-like languages.

In the remainder of this section we present TSSs that we are
aware are not covered by our rule format.

A partial strict stratification is not enough to deal with the general case of
a TSS with junk rules. The following example illustrates why.
\begin{exmp}
  Consider the TSS over the signature $\Sigma_0$ in
  Notation~\ref{not:sigma-zero}:
  \begin{mathpar}
    \inferrule*[right={\normalsize $,\qquad i\in\NN$}]
    {g^{i+1}(x)\rel{y}z}
    {g^i(x)\rel{y}z}
    \and
    \inferrule*[right={\normalsize .}]
    {g(x)\rel{y}z}
    {f(x)\rel{y}z}
  \end{mathpar}
  The TSS above does not contain any axiom, and therefore all the rules are
  junk rules (\ie, an infinite collection of premises with sources $g^i(x)$ and
  increasing $i$ could be stacked over a process $f(p)$ without ever
  constituting a proof tree). However, the TSS does not have a partial strict
  stratification because for each $i\in\NN$ the premise of the instantiation
  of the rule template on the left coincides with the source of the
  instantiation with $i+1$, and then the order of $g^i(p)$ has to be greater
  than the order of $g^{i+1}(p)$. This is not possible because we would need a
  partial strict stratification $S$ such that $S(g^1(p))>S(g^2(p))>\ldots$ 
  contradicting the well foundedness of the ordinals. In words, the TSS is finitely branching but
  does not meet the conditions of the rule format. For the rule format to
  cover this case, the partial strict stratification would have to allow
  processes unifying with the source of a rule to have undefined order, \eg,
  $S(f(p))=\bot$ and $S(g^i(p))=\bot$ with $i\in\NN$ and $p\in\cTerm$. But
  then proving that a junk rule cannot give rise to transitions poses a
  non-trivial challenge. 
%  We have devised a coinductive argument to do so, but
%  our proof is still unfinished and we leave this refinement as future work.
\end{exmp}

The bounded nondeterminism format (the $\DT_k^\prj$-bounded nondeterminism in
the dyadic case) is sometimes too restrictive, as shown by the following
example.
\begin{exmp}
  Consider the TSS over the signature $\Sigma_0$ in
  Notation~\ref{not:sigma-zero}:
  \begin{mathpar}
    \inferrule*[right={\normalsize $,\qquad i\in\NN$}]
    { }
    {f(x)\rel{l_1}l_1}
    \and
    \inferrule*[right={\normalsize .}]
    {f(x)\rel{y}z}
    {l_1\rel{y}z}
  \end{mathpar}
  For each process $f(p)$, the only provable transition is
  $f(p)\rel{l_1}l_1$. Therefore, the only transition from $l_1$ is $l_1\rel{l_1}l_1$. 
  However, the rule on the right is not in bounded
  nondeterminism format because its premise introduces variable $x$
  spuriously. In words, the TSS is finitely branching but does not meet the
  conditions of the rule format. The rule format should accept the rule on the
  right since the sub-process that unifies with $x$ (\ie, the $p$ in a process
  $f(p)$) is discarded by the next step in the transition sequence (\ie,
  $f(p)\rel{l_1}l_1$). We have not tackled this refinement yet.
\end{exmp}

Finally, Nominal Structural Operational semantics \cite{CMRG12} enriches the
SOS formalism by adopting the nominal techniques from \cite{GP99,UPG04} to
deal with names and many-sorted variable-binding operations within the SOS
framework. In a nominal TSS arbitrary terms can label transitions and the
rules may include a set of freshness assertions that restrict provability of
the rule. In order to apply our results to this setting we will need to extend
our rule format to many-sorted signatures and study the impact of the
freshness assertions. This is work in progress.

\paragraph{Acknowledgements} 
We thank the three anonymous referees for their careful reading of our
original submission and their insightful comments.

%%%%%%%%%%%%%%%%%%%%%%%%%%%%%%%%%%%%%%%%%%%%%%%%%%%%%%%%%%%%%%%%%%%%%%%%%%%
%% References
\section*{References}
\bibliographystyle{abbrv}
%\bibliography{rfbndnosos}
\providecommand{\noopsort}[1]{}

\end{document}